\def\volform{\mbox{$\eta$}}
\newcounter{mnotecount}[section]
\renewcommand{\themnotecount}{\thesection.\arabic{mnotecount}}
\newcommand{\mnote}[1]
{\protect{\stepcounter{mnotecount}}$^{\mbox{\footnotesize
$
\bullet$\themnotecount}}$ \marginpar{
\raggedright\tiny\em
$\!\!\!\!\!\!\,\bullet$\themnotecount: #1} }
\newtheorem{theorem}{\sc  Theorem\rm}[section]
\newtheorem{corollary}[theorem]{\sc  Corollary\rm}
\newtheorem{definition}[theorem]{\sc  Definition\rm}
\newtheorem{lemma}[theorem]{\sc Lemma\rm}
\newtheorem{Lemma}[theorem]{\sc Lemma\rm}
\newtheorem{proposition}[theorem]{\sc Proposition\rm}
\newtheorem{remark}[theorem]{\sc Remark\rm}
\newcommand{\ol}[1]{\overline{#1}{}}
\newcommand{\jlcax}[1]{}
\newcommand{\eean}{\nonumber\end{eqnarray}}
\newcommand{\F}{{\bf F}}
\newcommand{\kk}[1]{}
\newcommand{\beq}{\begin{equation}}
\newcommand{\ep}{\epsilon}
\newcommand{\FS}       
                  {F}
\newcommand{\HS} 
       {H_{\mbox{\scriptsize volume}}}
\newcommand{\eeal}[1]{\label{#1}\end{eqnarray}}
\newcommand{\bed}{\begin{deqarr}}
\newcommand{\eed}{\end{deqarr}}
\newcommand{\bedl}[1]{\begin{deqarr}\label{#1}}
\newcommand{\eedl}[2]{\arrlabel{#1}\label{#2}\end{deqarr}}
\newcommand{\bel}[1]{\begin{equation}\label{#1}}
\newcommand{\bea}{\begin{eqnarray}}
\newcommand{\bean}{\begin{eqnarray}\nonumber}
\newcommand{\beal}[1]{\begin{eqnarray}\label{#1}}
\newcommand{\eea}{\end{eqnarray}}
\def\typeout{:<+ #.tex}\include{#}\typeout{:<-}1{\typeout{:<+ #1.tex}\include{#1}\typeout{:<-}}
\newcommand{\qed}{\hfill $\Box$ \medskip}
\newcommand{\be}{\begin{equation}}
\newcommand{\eeq}{\end{equation}}
\newcommand{\ee}{\end{equation}}
\newcommand{\beqa}{\begin{eqnarray}}
\newcommand{\eeqa}{\end{eqnarray}}
\newcommand{\beqan}{\begin{eqnarray*}}
\newcommand{\eeqan}{\end{eqnarray*}}
\newcommand{\ba}{\begin{array}}
\newcommand{\ea}{\end{array}}
\newcommand{\mcM}{{\mycal M}}
\newcommand{\scri}{{\mycal I}}%
\newcommand{\warn}[1]
{\protect{\stepcounter{mnotecount}}$^{\mbox{\footnotesize
$
\bullet$\themnotecount}}$ \marginpar{
\raggedright\tiny\em
$\!\!\!\!\!\!\,\bullet$\themnotecount: {\bf Warning:} #1} }
\newcommand{\eq}[1]{(\ref{#1})}
\newcommand{\ptc}[1]{\mnote{{\bf ptc:}#1}}
\newcommand{\mcL}{{\mycal L}}
\newcommand{\beqar}{\begin{deqarr}}
\newcommand{\eeqar}{\end{deqarr}}
\newcommand{\beaa}{\begin{eqnarray*}}
\newcommand{\eeaa}{\end{eqnarray*}}
\def\M{\mathcal M}
\def\J{{\cal Z}}
\def\la{\left \langle}
\def\ran{\right \rangle}
\DeclareFontFamily{OT1}{rsfs}{}
\DeclareFontShape{OT1}{rsfs}{m}{n}{ <-7> rsfs5 <7-10> rsfs7 <10-> rsfs10}{}
\DeclareMathAlphabet{\mycal}{OT1}{rsfs}{m}{n}
\global\let\AddToReset=\@addtoreset}
\global\let\AddToReset=\@addtoreset}
\global\let\AddToReset=\@addtoreset}
\newcommand{\Dconst}{C_{\mathrm{el}}}
\newcommand{\Cconst}{C_{\mathrm{mag}}}
\newcommand{\CconstL}{\sqrt{\frac{\Lambda}{3}}C_{\mathrm{mag}}}
\begin{document}

\title{Characterization of (asymptotically) Kerr-de Sitter-like spacetimes at null infinity%
\thanks{Preprint UWThPh-2016-5. }
}
\author[1]{Marc Mars}
\author[2]{Tim-Torben Paetz}
\author[3]{Jos\'e M. M. Senovilla}
\author[4]{Walter Simon}
\affil[1]{Instituto de F\'isica Fundamental y Matem\'aticas, Universidad de Salamanca, Plaza de la Merced s/n, 37008 Salamanca, Spain}
\affil[2,4]{Gravitational Physics, University of Vienna, Boltzmanngasse 5, 1090 Vienna, Austria}
\affil[3]{F\'isica Te\'orica, Universidad del Pa\'is Vasco, Apartado 644, 48080 Bilbao, Spain}
\affil[4]{Institute of Physics, Jagiellonian University, Lojasiewicza 11, 30-348
Krak\'ow, Poland}

\maketitle

\vspace{-0.2em}

\begin{abstract}
We investigate  solutions $(\mcM, g)$ to  Einstein's vacuum field equations with positive cosmological constant $\Lambda$ which admit
a smooth past null infinity $\scri^-$ \`a la Penrose and a Killing vector field whose associated  Mars-Simon tensor (MST) vanishes.
The main purpose of this work is to provide a characterization of these spacetimes in terms of their Cauchy data
on~$\scri^-$.

Along the way, we also study spacetimes for which the MST does not vanish. In that case there is an ambiguity in its definition 
which is captured by  a scalar function $Q$. We analyze properties of the MST for different choices of $Q$.
In doing so, we are led to a definition of ``asymptotically Kerr-de Sitter-like spacetimes'', which we also characterize in terms of their 
asymptotic data on $\scri^-$.
\end{abstract}

\noindent
\hspace{2.1em} PACS numbers: 02.30.Jr, 04.20.Ex, 04.20.Ha,

\tableofcontents

\section{Introduction}

This is the first in a series of at least two papers \cite{mpss} in which we (resp.\ some of us) 
analyze the asymptotic structure, and a certain initial value problem, for vacuum solutions of 
Einstein's equations
\begin{equation}
R_{\mu\nu}\,=\,\Lambda g_{\mu\nu}
\label{LambdaVac}
\end{equation}
on a 4-dimensional spacetime $(\mcM, g)$, 
where $g$ is smooth and  $\Lambda$ is a (``cosmological'') constant.
We focus on the case $\Lambda > 0$ but compare occasionally with $\Lambda = 0$. 
Spacetime indices are greek, 
while coordinates in $1+3$ splits are denoted by $\{x^{\alpha}\} = \{t, x^i \}$ 
(rather than  $\{x^0, x^i \}$), with corresponding tensorial indices.
 Our conventions for the signature, 
the curvature tensor $R_{\mu\nu\sigma}{}^{\kappa}$, the Weyl tensor  $C_{\mu\nu\sigma}{}^{\kappa}$, 
the Ricci tensor $R_{\mu\sigma}$ and the scalar curvature $R$ follow e.g.\ \cite{Wald}. 
The Levi-Civita connection of $g$ is denoted by $\nabla$.

The setting of our work is an asymptotic structure 
 \emph{\`a la Penrose} \cite{p2, F_lambda}. 
By that we mean that an appropriate conformal rescaling of  $(\mcM, g)$
\begin{equation}
 g \mapsto \widetilde g = \Theta^2 g\;, \quad  \mcM  \overset{\phi}{\hookrightarrow} \widetilde{\mcM\enspace}\hspace{-0.5em} \;, 
\quad \Theta|_{\phi(\mcM)}>0\;,
\end{equation}
leads to an \emph{unphysical spacetime} $(\widetilde{\mcM\enspace}\hspace{-0.5em} ,\widetilde g)$ 
which admits a representation of null infinity
$$
\scri=\{ \Theta=0\,, \; \mathrm{d}\Theta \ne 0\}\cap \partial\phi(\mcM)$$ 
through which the unphysical metric $\widetilde g$ 
and the conformal factor $\Theta$ can be smoothly extended.
$\scri$ is a smooth hypersurface which consists of two (not necessarily
connected) subsets: 
Future and past null infinity, distinguished by the absence of endpoints of past
or future causal curves contained in $(\mcM,g)$, respectively. 
In this paper we will normally denote by $\scri^-$ and $\scri^+$ chosen connected
components of past and future null infinity, respectively. 
Clearly, all initial value results in this paper starting from
$\scri^-$ have obvious  
 ``final value counterparts'' obtained via replacing $\scri^-$ by
$\scri^+$, ``future'' by ``past'', etc. 
We will implicitly identify $\mcM$ with its image $\phi(\mcM)\subset \widetilde{\mcM\enspace}$, 
so that we can write $\widetilde g = \Theta^2 g$. 
Indices of physical and unphysical fields will be raised and lowered with $g$
and $\widetilde g$, respectively.

 In this setting, Friedrich \cite{F_lambda, F2} has shown that, in terms of
 suitable variables, the field equations  (\ref{LambdaVac})  become a regular, symmetric hyperbolic system on 
$(\widetilde{\mcM\enspace}\hspace{-0.5em} ,\widetilde g)$. 
We recall these ``metric conformal field equations'' (MCFE) in Sect
\ref{Mars-Simon_conf}. An important member of the MCFE is the rescaled Weyl tensor 
\begin{equation}
\label{dten}
\widetilde d_{\alpha\beta\gamma}{}^{\delta} := \Theta^{-1}C_{\alpha\beta\gamma}{}^{\delta}.
\end{equation}

and key properties of $C_{\alpha\beta\gamma\delta}$ and
$\widetilde d_{\alpha\beta\gamma\delta}$ are the following: 

\begin{enumerate}
\item[I.] $C_{\alpha\beta\gamma}{}^{\delta}$ vanishes on $\scri$, whence
$\widetilde d_{\alpha\beta\gamma}{}^{\delta}$ extends regularly to $\scri$. 
\item[II.] $C_{\alpha\beta\gamma}{}^{\delta}$ 
satisfies a regular, linear, homogeneous symmetric hyperbolic system   on $(\mcM, g)$. 
\item[III.] $\widetilde d_{\alpha\beta\gamma}{}^{\delta}$ 
satisfies a regular, linear, homogeneous symmetric hyperbolic system   on  
 $(\widetilde{\mcM\enspace}\hspace{-0.5em} ,\widetilde g)$.
\end{enumerate}

These properties, together with stability of solutions of  symmetric hyperbolic systems, 
are the key ingredients in uniqueness and stability results of asymptotically
simple spacetimes $(\widetilde{\mcM}, \widetilde g)$ as defined in Def. 9.1. of \cite{F4}; the latter
definition includes the requirements that  $(\widetilde{\mcM}, \widetilde g)$
has a compact Cauchy hypersurface and every maximally extended null geodesic has a past endpoint on
$\scri^-$ and a future endpoint on $\scri^+$.  
We give here first a  uniqueness result for de Sitter
spacetime and then a sketchy version of the stability result 
(Thm 9.8 of \cite{F4}) which applies in particular to de Sitter.

\begin{theorem}
~
\begin{description}
\item[Uniqueness of de Sitter.]
Let smooth data for the MCFE be given on a $\scri^-$ which is topologically   $\mathbb{S}^3$
 and such that  $\widetilde d_{\alpha\beta\gamma}{}^{\delta}|_{\scri^-}$ 
vanishes identically. Then the evolving spacetime $(\widetilde{\mcM}, \widetilde g)$ 
is isometric to de Sitter.  
 \item[Stability of aymptotically simple solutions.]
Given an asymptotically simple spacetime  $(\widetilde{\mcM}, \widetilde g)$, 
then any data for the MCFE on $\scri^-$ which are close to the data for $(\widetilde{\mcM}, \widetilde g)$  
(in terms of suitable Sobolev norms) evolve to an asymptotically simple spacetime.  
\end{description}
\label{deSitterthm}
\end{theorem}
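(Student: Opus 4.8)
The plan is to treat the two parts separately, both resting on the three properties of the rescaled Weyl tensor recalled above and on the symmetric hyperbolic character of the MCFE established by Friedrich \cite{F_lambda, F2}.

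\textbf{Uniqueness of de Sitter.} The guiding observation is that de Sitter spacetime is conformally flat, so its rescaled Weyl tensor vanishes identically; I would prove the converse under the stated hypotheses. For $\Lambda > 0$ the hypersurface $\scri^-$ is spacelike in the unphysical spacetime, hence a genuine initial surface for the conformal evolution. By Property III, $\widetilde d_{\alpha\beta\gamma}{}^{\delta}$ obeys a regular, linear, homogeneous symmetric hyperbolic system on $(\widetilde{\mcM}, \widetilde g)$. The assumption supplies vanishing initial data $\widetilde d_{\alpha\beta\gamma}{}^{\delta}|_{\scri^-} = 0$, and uniqueness for the Cauchy problem of such a system then forces $\widetilde d_{\alpha\beta\gamma}{}^{\delta} \equiv 0$ throughout the development. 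Since $\widetilde d_{\alpha\beta\gamma}{}^{\delta} = \Theta^{-1} C_{\alpha\beta\gamma}{}^{\delta}$ and $\Theta > 0$ on $\phi(\mcM)$, the physical Weyl tensor $C_{\alpha\beta\gamma}{}^{\delta}$ vanishes on all of $\mcM$. A conformally flat solution of (\ref{LambdaVac}) with $\Lambda > 0$ has constant curvature and is therefore locally isometric to de Sitter. Finally I would upgrade ``locally'' to ``globally'': the $\mathbb{S}^3$ topology of $\scri^-$ together with global hyperbolicity of the development singles out, among constant-curvature spacetimes, the maximal globally hyperbolic one, namely de Sitter itself.

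\textbf{Stability.} Here the argument is a Cauchy-stability statement phrased in the conformal picture, essentially Thm~9.8 of \cite{F4}. The MCFE form a regular symmetric hyperbolic system, and for $\Lambda > 0$ both $\scri^-$ and $\scri^+$ sit at finite unphysical ``time,'' so an asymptotically simple physical spacetime corresponds to a conformally compact slab bounded by two spacelike hypersurfaces. I would invoke continuous dependence on initial data for the symmetric hyperbolic system on this slab: data on $\scri^-$ close, in suitable Sobolev norms, to those of the reference solution evolve to a solution that remains close to the reference development over the whole slab. One then checks that the defining features of asymptotic simplicity — the existence of a regular $\scri^+$ on which $\Theta = 0$ and $\mathrm{d}\Theta \ne 0$, together with the correct past/future endpoint structure of the null geodesics — are open conditions and hence survive small perturbations.

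\textbf{Main obstacle.} In the uniqueness part the delicate step is the passage from local to global identification with de Sitter, which requires controlling the global causal and topological structure of the constant-curvature development rather than merely its pointwise geometry. In the stability part the crux is that Cauchy stability is \emph{a priori} only local in time, so obtaining a global conclusion requires exploiting the conformal compactness of the slab between $\scri^-$ and $\scri^+$ — ultimately the compactness of the Cauchy hypersurface — which is exactly what makes Friedrich's existence-and-stability machinery applicable; verifying that $\scri^+$ and the geodesic endpoint structure persist under the perturbation, uniformly across the slab, is the technical heart of the matter.
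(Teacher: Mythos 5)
Your proposal is correct and takes essentially the same route as the paper, which does not prove this theorem itself but presents it as Friedrich's result (\cite{F_lambda,F2,F4}) and explicitly identifies properties I.--III.\ of $C_{\alpha\beta\gamma}{}^{\delta}$ and $\widetilde d_{\alpha\beta\gamma}{}^{\delta}$ together with Cauchy stability of symmetric hyperbolic systems as the key ingredients. That is precisely your argument: for uniqueness, propagation of the vanishing initial datum $\widetilde d_{\alpha\beta\gamma}{}^{\delta}|_{\scri^-}=0$ through the regular, linear, homogeneous symmetric hyperbolic system (property III) followed by the constant-curvature identification; for stability, Cauchy stability on the conformally compact slab between $\scri^-$ and $\scri^+$ plus openness of the asymptotic-simplicity conditions.
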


A motivation for the present work is to generalize these uniqueness and stability results to more general solutions 
of ($\ref{LambdaVac}$). 
Using again properties I.-III. above, it is straightforward  to generalize the above results on
asymptotically  simple solutions  to corresponding ``semiglobal'' results for any concrete family of solutions, where 
 ``semiglobal'' means the domain of dependence of $\scri^-$.
 On the other hand, and needless to say, 
any fully \emph{global} results for solutions which are not asymptotically simple but 
contain horizons and singularities involve ``cosmic censorship'' issues and 
will be very complicated. 
The main targets of the present work are Kottler (Schwarzschild-de Sitter) and Kerr-de Sitter
(KdS) spacetimes for which the topology of each connected component
of $\scri$ is  $\mathbb{R} \times \mathbb{S}^2$.
Our main achievement is a \emph{semiglobal} uniqueness 
result, namely Thm. \ref{first_main_thm2}, for a class of solutions which
includes Kerr- de Sitter. What makes our result highly non-trivial is
its particular formulation which we \emph{expect to be} useful for the fully 
global problem, for reasons given below.

In this uniqueness result, and from now onwards, we assume that $(\mcM, g)$ admits a non-trivial Killing vector field (KVF) $X$,
\begin{equation}
\label{Kill}
(\mcL_X g)_{\mu\nu} \,\equiv\, 2\nabla_{(\mu}X_{\nu)} \,=\, 0
\;.
\end{equation}
Since $X^{\mu}$ is a KVF, $F_{\mu\nu}:= \nabla_{\mu}X_{\nu}$ is a two-form:  $F_{(\mu\nu)}\,=\,0$.

The main purpose of this assumption is to achieve a simplification 
and to permit the use of a special technique.
However, as an aside we note that the existence of the isometry \emph{might}
 change the character of the stability problem substantially.
To see this on a heuristic basis, consider data for the MCFE on $\scri^-$  which are at the same time Killing initial data,  
and which are \emph{close to} Kerr-de Sitter in a suitable sense.
Now consider the time-evolution of such data, and assume that the spacetime
can be extended beyond its (``cosmological'') Cauchy horizon (as it is the
case for Schwarzschild-de Sitter and Kerr-de Sitter).
In this extension, the isometry should become timelike, and now another
conjecture, 
namely uniqueness of stationary black-holes, 
should lead to Kerr-de Sitter in the region between the event and the cosmological horizon. Extending backwards 
to the domain of dependence of $\scri^-$ suggests that the ``near Kerr-de Sitter'' data
will actually be Kerr-de Sitter in the above setting. Accordingly, the existence
of the isometry, together with reasonable global assumptions, can 
turn a  stability into a uniqueness problem. This ``effect'' is of course familiar
from uniqueness results for stationary, asymptotically flat solutions.
    
While obtaining global results sketched above is far beyond our present scope, 
it motivates our local analysis, in particular the use of the
so-called Mars-Simon tensor (MST)  \cite{mars,mars2,simon} 
in Thm  \ref{first_main_thm2}. This tensor is defined as follows.

\begin{equation}
\mathcal{S}_{\mu\nu\sigma\rho}\,:=\, \mathcal{C}_{\mu\nu\sigma\rho}  + Q\, \mathcal{U}_{\mu\nu\sigma\rho} 
\;,
\label{dfn_mars-simon}
\end{equation}
in terms of the quantities
%
\begin{eqnarray}
 \mathcal{C}_{\mu\nu\sigma\rho} &:=& C_{\mu\nu\sigma\rho} +i C^{\star}_{\mu\nu\sigma\rho}
\;,
\label{Weyldual} \\
\mathcal{U}_{\mu\nu\sigma\rho}  &:=& -  \mathcal{F}_{\mu\nu}\mathcal{F}_{\sigma\rho} + \frac{1}{3}\mathcal{F}^2\mathcal{I}_{\mu\nu\sigma\rho} 
\;,
\\
\mathcal{I}_{\mu\nu\sigma\rho} &:=& \frac{1}{4} (g_{\mu\sigma}g_{\nu\rho} -g_{\mu\rho}g_{\nu\sigma} + i\volform_{\mu\nu\sigma\rho} )
\;,
\\
\mathcal{F}_{\mu\nu} 
&:=& F_{\mu\nu} +i F^{\star}_{\mu\nu}
\;,
\\
\mathcal{F}^2 &:=& \mathcal{F}_{\mu\nu} \mathcal{F}^{\mu\nu} \;. \label{Fsq}
\end{eqnarray}
%
In these expressions $\volform_{\mu\nu\sigma\rho}$ is the volume form of
$g$, $\star$ the corresponding Hodge dual and $Q$ is a function. $\mathcal{F}_{\mu\nu}$
and ${\cal C}_{\alpha\beta\gamma\delta}$ are self-dual, i.e.\ they 
satisfy $\mathcal{F}^{\star}{}_{\mu\nu} = - i  \mathcal{F}_{\mu\nu}$ and 
 ${\cal C}_{~\alpha\beta\gamma\delta}^{\star} = -i  {\cal C}_{\alpha\beta\gamma\delta}$.
The symmetric double two-form $\mathcal{I}_{\mu\nu\sigma\rho}$ plays a natural role as a metric in the space
of self-dual two-forms, in the sense that ${\mathcal{I}}_{\mu\nu\sigma\rho} {\mathcal W}^{\sigma\rho} = {\mathcal W}_{\mu\nu}$ for any
self-dual two-form ${\mathcal W}_{\mu\nu}$.
In connection with this definition and its applications, there now arise naturally two 
\emph{a priori} independent problems:
\begin{enumerate}
\item Classify the solutions of (\ref{LambdaVac}) for which there \emph{exists} a $Q$ such
that the MST \eq{dfn_mars-simon} vanishes. 
\item
Prescribe the function $Q$ such that properties I. -III. above (or a subset thereof) hold for the MST.
  \end{enumerate} 

Problem 1 has been settled in \cite{mars,mars2} for case $\Lambda = 0$, while the extension to $\Lambda \neq 0$ was accomplished in
\cite{mars_senovilla}. The classes of solutions characterized in this way include Kerr and Kerr-de Sitter, respectively,
and these solutions can in fact be singled out by supplementing the condition $\mathcal{S}_{\mu\nu\sigma\rho} = 0$ 
with suitable  ``covariant'' conditions.   

As to problem 2 for $\Lambda = 0$, one sets  
\begin{equation}
\label{Q}
Q = 6 \sigma^{-1}
\end{equation}
in terms of the ``Ernst potential'' $\sigma$, defined up to an additive complex constant 
(called ``$\sigma$-constant'' henceforth) by
\begin{equation}
\partial_{\beta} \sigma = 2 X^{\alpha} {\cal F}_{\alpha\beta}
\;.
\end{equation}

The corresponding MST then in fact satisfies a linear, homogeneous, symmetric hyperbolic system,
irrespective of how the $\sigma$-constant has been chosen
\cite{ik}.
In the asymptotically flat setting 
the  MST vanishes at infinity (which holds again for any  choice of the $\sigma$-constant
provided that the ADM mass is non-zero);  
it vanishes for all Kerr solutions in particular.
The $\sigma$-constant  is fixed uniquely in a 
natural way by requiring 
the Ernst potential to vanish at infinity. 
We remark that this symmetric hyperbolic system, or rather the wave equation which can be derived from it, has been used in uniqueness proofs for
stationary, asymptotically flat black holes \cite{aik1,ik}.


In analogy with \eq{dten} we now define

\begin{equation}
\label{T}
\widetilde {\cal T}_{\alpha\beta\gamma}{}^{\delta} := \Theta^{-1}{\cal S}_{\alpha\beta\gamma}{}^{\delta}.
\end{equation}

For $\Lambda > 0$,  key properties of these tensors can be summarized as follows
(I.-III. is shown in the present work while IV. is a reformulation of a 
result of \cite{mars_senovilla}; II.\ and IV.\ in fact hold for any sign of the cosmological constant)
\begin{enumerate}
\item[I.] There exists a function $Q_0$ such that the corresponding MST
${\cal S}_{\alpha\beta\gamma\delta}^{(0)}$ vanishes on $\scri$, whence
$\widetilde {\cal T}_{\alpha\beta\gamma\delta}^{(0)}$ extends regularly to $\scri$. 
\item[II.] There exists a function $Q^{(ev)}$ such that the
corresponding MST ${\cal S}_{\alpha\beta\gamma\delta}^{(ev)}$ 
satisfies a linear, homogeneous symmetric hyperbolic system on $(\mcM, g)$.
\item[III.] ${\cal T}_{\alpha\beta\gamma\delta}^{(ev)}$ 
satisfies a linear, homogeneous  symmetric hyperbolic system
on $(\widetilde{\mcM\enspace}\hspace{-0.5em} ,\widetilde g)$ which is of ``Fuchsian type'' at $\scri$.
\item[IV.] When ${\cal S}_{\alpha\beta\gamma\delta}$ is required to vanish 
identically for some $Q$, then $Q = Q_0 = Q^{(ev)}$. 
\end{enumerate}


Conditions I. - III. stated above for the MST should be compared with the 
corresponding conditions stated earlier for the Weyl tensor.
Unfortunately, or maybe for a deeper reason, there appears to be
 no universal definition of $Q$ anymore which satisfies I. - III. simultaneously. 

We proceed with explaining these findings in some detail, and with describing
their arrangement in the following sections.  
The function  $Q_0$  is introduced in \eq{definition_Q0}, and property I. is shown in Proposition \ref{prop_reg_S}.
Next, Theorem \ref{thm_nec_cond} gives necessary and sufficient conditions
on the data in order for $\widetilde {\cal T}_{\alpha\beta\gamma\delta}^{(0)}$ 
to vanish on $\scri$. These conditions agree with conditions (i) and (ii) in
Theorem \ref{first_main_thm2} quoted below.

On the other hand, in \eq{ev_dfn_Q}-\eq{defJ} we define a \emph{class of functions}  
$Q^{(ev)}$ for which we show in Section~\ref{sect_deriv_ev_MST}
  that the corresponding MST ${\cal S}_{\alpha\beta\gamma\delta}^{(ev)}$ 
satisfies a linear, homogeneous symmetric hyperbolic system, which
 gives property II (and from which  one readily derives a system of wave equations).
For the  rescaled tensor $\widetilde {\cal  T}_{\alpha\beta\gamma\delta}^{(ev)}$ we then obtain equations
of the same form on $\widetilde{\mcM\enspace}$ (cf.\ Lemmas~ \ref{wave_eqn_MST}
and \ref{lemma_evolution}).  
The appropriate definition of $Q^{\mathrm{(ev)}}$  involves a ``$\sigma$-(integration)-constant''
(called ``$a$'' in \eqref{sigma_i_prelim}), 
and in analogy with the case $\Lambda = 0$ mentioned before there is again a natural way (namely
\eqref{afix}) 
of fixing the constant from the asymptotic conditions. However, in contrast to the case $\Lambda = 0$,
the resulting ${\cal S}_{\alpha\beta\gamma\delta}^{(ev)}$  does not vanish automatically on $\scri^-$,
whence ${\cal T}_{\alpha\beta\gamma\delta}^{(ev)}$  is not necessarily regular there. 
In Definition ~\ref{KdS_like} we call solutions for which ${\cal T}_{\alpha\beta\gamma\delta}^{(ev)}$  (with the optimal
$\sigma$-constant) can be regularized on $\scri^-$ (and agrees with 
$ \widetilde{\mathcal{T}}^{(0)}_{\mu\nu\sigma}{}^{\rho}$ on $\scri^-$) 
``asymptotically Kerr-de Sitter like''. 
 This class can be characterized in terms of the data as follows 
(this is a shortened version of Thm. ~\ref{prop_Qs}):      

\begin{theorem}
\label{short}
Consider a $\Lambda>0$-vacuum spacetime which admits a smooth $\scri^-$ and a KVF $X$.
Denote by $Y$ the CKVF induced, in the conformally rescaled spacetime, by $X$ on $\scri^-$.

The condition
\begin{equation*}
\widetilde{\mathcal{T}}^{(\mathrm{ev})}_{\mu\nu\sigma}{}^{\rho}|_{\scri^-} \,=\, \widetilde{\mathcal{T}}^{(0)}_{\mu\nu\sigma}{}^{\rho}|_{\scri^-}
\end{equation*}
holds if and only if
$Y^j$ is a common eigenvector of $\widehat C_{ij}$ and $D_{ij}$,
where $\widehat C_{ij}$ is the Cotton-York tensor (\ref{cotton-york}) and $D_{ij} =
d_{titj}|_{\scri^-}$.
\end{theorem}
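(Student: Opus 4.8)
The plan is to reduce the asserted equality to a statement about one tensor and then read that statement off from the boundary data. Since the two Mars--Simon tensors differ only through the choice of $Q$, subtracting \eqref{dfn_mars-simon} for $Q^{(\mathrm{ev})}$ and for $Q_0$ and using \eqref{T} gives the clean identity $\widetilde{\mathcal{T}}^{(\mathrm{ev})}_{\mu\nu\sigma}{}^{\rho} - \widetilde{\mathcal{T}}^{(0)}_{\mu\nu\sigma}{}^{\rho} = \Theta^{-1}\,(Q^{(\mathrm{ev})}-Q_0)\,\mathcal{U}_{\mu\nu\sigma}{}^{\rho}$. Thus the condition on $\scrim$ is equivalent to the single right-hand side admitting a regular limit on $\scrim$ whose value vanishes. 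I would attack this by performing the electric--magnetic (self-dual) decomposition of both sides with respect to the unit normal of $\scrim$, which for $\Lambda>0$ is timelike, turning each self-dual Weyl-type double two-form into a complex, symmetric, trace-free $3\times 3$ matrix on $\scrim$. Under this decomposition the rescaled Weyl part is $\mathcal{D}_{ij}:=D_{ij}+i\widehat C_{ij}$, the identification of its magnetic part with the Cotton--York tensor \eqref{cotton-york} being exactly the $\scri$-constraint of the conformal field equations; and the electric part of $\mathcal{U}$ reduces to the trace-free part of $Y_iY_j$, since the self-dual $\mathcal{F}_{\mu\nu}$ collapses on $\scrim$ to a spatial vector aligned with the induced CKVF $Y$.

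Next I would control the scalar prefactor $\Theta^{-1}(Q^{(\mathrm{ev})}-Q_0)$ asymptotically. Both $Q_0$ and $Q^{(\mathrm{ev})}$ are built from the Ernst-type potential fixed by $\partial_\beta \sigma = 2X^\alpha \mathcal{F}_{\alpha\beta}$, and they differ only through the $\sigma$-constant, which for $Q^{(\mathrm{ev})}$ is pinned down by the optimal prescription \eqref{afix}. Expanding $\sigma$ in powers of the defining function $\Theta$ by integrating the gradient equation tangentially along $\scrim$, and inserting the result into \eqref{definition_Q0} and into the definition of $Q^{(\mathrm{ev})}$, I would isolate the leading coefficients of $Q^{(\mathrm{ev})}-Q_0$ and of $\mathcal{U}$. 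Proposition~\ref{prop_reg_S} guarantees that the $Q_0$-branch is regular, so the entire obstruction sits in the $Q^{(\mathrm{ev})}$-branch.

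The heart of the argument is an integrability computation. Because $\sigma$ is reconstructed by integrating $\partial_i \sigma = 2X^\alpha \mathcal{F}_{\alpha i}$ along $\scrim$, its transverse derivatives are governed by $\nabla\mathcal{F}$, which for a Killing field is tied to the curvature, hence to $\mathcal{C}_{\mu\nu\sigma\rho}$ (equivalently to $\mathcal{D}_{ij}$) through the field equations \eqref{LambdaVac}. Tracking this dependence, I expect the transverse-to-$Y$ part of the regularity-plus-matching obstruction to collapse to the two wedge conditions $Y_{[i}D_{j]k}Y^k=0$ and $Y_{[i}\widehat C_{j]k}Y^k=0$, that is $\mathcal{D}_{ij}Y^j\parallel Y_i$. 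These say precisely that $Y^j$ is an eigenvector of $D_{ij}$ and of $\widehat C_{ij}$ simultaneously: the direction along which $\mathcal{U}$ is already aligned with $Y$ is automatically consistent and imposes nothing, the genuine content being the vanishing of the components of $\mathcal{D}_{ij}Y^j$ orthogonal to $Y$. Conversely, when $Y$ is a common eigenvector the obstruction vanishes identically and the two regularizations agree, supplying the ``if'' direction. This should dovetail with Theorem~\ref{thm_nec_cond}, whose conditions (i)--(ii) ought to reappear as the sufficient refinement in which $\widetilde{\mathcal{T}}^{(0)}$ itself vanishes.

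\textbf{Main obstacle.} The delicate step is the conformal bookkeeping behind the second and third paragraphs: tracking the powers of $\Theta$ and the conformal weights carried by $\mathcal{F}$, $\mathcal{U}$ and $\sigma$ so that exactly the right order in the expansion of $Q^{(\mathrm{ev})}-Q_0$ pairs with $\mathcal{U}|_{\scrim}$, and showing that no further piece of the data intrudes at that order. In particular one must verify, with the correct normalizations, both the identification of the magnetic part of the rescaled Weyl tensor with the Cotton--York tensor and the reduction of the electric part of $\mathcal{U}$ to the trace-free part of $Y_iY_j$. Most importantly, one must check that the apparent scalar nature of the prefactor does not collapse the condition to a single equation: the extra transverse directions have to emerge from the integrability of $\sigma$ rather than from the algebra of $\mathcal{U}$ alone. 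Establishing this last point rigorously is where I expect the real work to lie.
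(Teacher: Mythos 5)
Your reduction and overall route coincide with the paper's own proof (Theorem~\ref{prop_Qs}): the difference of the two rescaled MSTs is $\Theta^{-1}(Q_{\mathrm{ev}}-Q_0)\,\mathcal{U}$, the limit of $\Theta^{4}\mathcal{U}$ on $\scri^-$ is the nondegenerate tensor built from $(Y_iY_j)_{\mathrm{tf}}$, so everything hinges on showing $\Theta^{-5}(Q_{\mathrm{ev}}-Q_0)\to 0$, and this is controlled by expanding the Ernst potential, the only freedom in $Q_{\mathrm{ev}}$ being the $\sigma$-constant $a$. Up to this point your setup is correct and is exactly what the paper does.

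The genuine gap sits at what you yourself call the heart. In the paper the order-one obstruction is the scalar $\Xi=\sigma+\frac{3}{\Lambda}R^2= c(x)+ip(x)-a+O(\Theta)$, and regularity plus matching of $Q_{\mathrm{ev}}$ forces $c+ip$ to be \emph{constant}, so that the constant $a$ can absorb it, cf.\ \eq{afix} and \eq{final_Q_ev}. Your proposed mechanism --- obstruction arising from the reconstruction of $\sigma$ by tangential integration --- accounts for the imaginary part only: $p$ is the local potential of $P_i\propto \widehat\volform_{ij}{}^{k}D_{kl}Y^jY^l$ (cf.\ \eq{sigma_i_prelim}), so its constancy is indeed equivalent to $Y$ being an eigenvector of $D_{ij}$. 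But the real part $c$ does not come from integrating $\sigma$ at all: it is the $O(1)$ coefficient in the expansion of $\mathcal{F}^2$, given explicitly in \eq{expression_c} in terms of $Y$, its divergence $f$, its curl $N$ and the Schouten tensor $\widehat L_{ij}$, and its constancy is equivalent to the Cotton--York eigenvector condition only through the separate, nontrivial identity $\widehat\nabla_l\,\widehat c(Y)=\widehat C_{jli}Y^jY^i$ (Lemma~\ref{lem_constancy_c}), which the paper establishes by a lengthy CKVF computation. Your sketch neither identifies this step nor supplies a substitute, so one half of the equivalence is missing. If you want to repair this within your own framework, the missing content can be packaged as the spacetime identity $\nabla_\rho\big(\sigma+\tfrac{3}{\Lambda}R^2\big)=-\tfrac{3}{2\Lambda}X^\sigma\mathcal{F}^{\mu\nu}\mathcal{C}_{\mu\nu\sigma\rho}$, which follows from the Killing identity $\nabla_\rho\mathcal{F}_{\mu\nu}=X^\sigma\big(\mathcal{C}_{\mu\nu\sigma\rho}+\tfrac{4}{3}\Lambda\mathcal{I}_{\mu\nu\sigma\rho}\big)$ together with $\mathcal{F}^{\mu\nu}\mathcal{I}_{\mu\nu\sigma\rho}=\mathcal{F}_{\sigma\rho}$ and $X^\sigma\mathcal{F}_{\sigma\rho}=\tfrac12\sigma_\rho$; its tangential components evaluated on $\scri^-$ produce both wedge conditions at once (and would in fact be cleaner than the paper's Lemma~\ref{lem_constancy_c}). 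Carrying out that evaluation is precisely the work your proposal leaves open.
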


This now suggests considering a Cauchy problem for the MCFE on $\scri^-$, 
starting from asymptotically Kerr-de Sitter-like data.
However, in contrast to the evolution equation for the rescaled Weyl tensor
$\widetilde d_{\alpha\beta\gamma\delta}$,  the coefficients in the evolution equation for 
$\widetilde{\cal T}_{\alpha\beta\gamma\delta}^{(ev)}$ are not  
regular at $\scri$ and even not necessarily so off some neighborhood of $\scri$.  (Non-regularities 
may occur already in the evolution equations for  ${\cal S}_{\alpha\beta\gamma\delta}^{(ev)}$). 
Regarding $\scri$, we are now dealing with a linear homogeneous \emph{Fuchsian} symmetric hyperbolic system
(Lemma  \ref{lemma_evolution}). Adapting  results available in the
literature, we prove in Lemma \ref{UniquenessLemma} a local uniqueness theorem for regular solutions of a class of Fuchsian
systems which includes the present one. 
We then apply this result  to ``trivial'' data satisfying
 $\widetilde{\cal T}_{\alpha\beta\gamma\delta}^{(ev)}|_{\scri^-} = 0$, 
which we call ``Kerr-de Sitter-like''.  
Our preliminary uniqueness result, Lemma \ref{UniquenessT},
 now yields local-in-time uniqueness of these solutions,
and implies that  ${\cal S}_{\alpha\beta\gamma\delta}^{(\mathrm{ev})}$ 
vanishes  near $\scri^-$. However, this conclusion does not immediately
extend to the whole domain of dependence of $\scri^-$ since the evolution equation is
manifestly regular only in some neighborhood of (and excluding) $\scri^-$.  
Nevertheless, the required result does follow from  the classification results  of 
\cite{mars_senovilla}, so ${\cal S}_{\alpha\beta\gamma\delta}^{(\mathrm{ev})}
\equiv 0$ indeed holds on the domain of dependence of $\scri^-$.

Altogether this yields the following classification result for
Kerr-de Sitter like spacetimes in terms of data on $\scri^-$, which may be
considered as counterpart of the first part of \ref{deSitterthm}  above:

\begin{theorem}
\label{first_main_thm2}
Let $(\Sigma,h)$ be a Riemannian 3-manifold which admits a CKVF~ $Y$ with $|Y|^2>0$,
 complemented by
a TT tensor $D_{ij}$ to asymptotic Cauchy data.
Then there exists a maximal globally hyperbolic
 $\Lambda>0$-vacuum spacetime $(\mcM,g)$
 which admits a KVF
$X^i$ with  $X^i|_{\scri^-} = Y^i$ and such that the associated MST vanishes, and 
$\Sigma$ represents past null infinity $\scri^-$ with $g_{ij}|_{\scri^-}=h_{ij}$ and $d_{titj}|_{\scri^-}= D_{ij}$  if and only if
\begin{enumerate}
\item[(i)] $ \widehat C_{ij} = \CconstL|Y|^{-5}(Y_iY_j -\frac{1}{3}|Y|^2 h_{ij})$ for some constant $\Cconst$, and
\item[(ii)]    $D_{ij} =\Dconst  |Y|^{-5} (Y_iY_j -\frac{1}{3}|Y|^2 h_{ij})$ for some constant $\Dconst$.
\end{enumerate}
\end{theorem}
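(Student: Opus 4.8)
The statement is an equivalence, and the plan is to assemble ingredients that are already available rather than to compute anything new: the two distinguished choices $Q_0$ and $Q^{(\mathrm{ev})}$ (with the optimal $\sigma$-constant), the translation of the intrinsic condition $\widetilde{\mathcal{T}}^{(0)}_{\mu\nu\sigma}{}^{\rho}|_{\scri^-}=0$ into data (Theorem~\ref{thm_nec_cond}), the comparison of the two MSTs on $\scri^-$ (Theorem~\ref{short}), the Fuchsian uniqueness near $\scri^-$ (Lemma~\ref{UniquenessT}), and the rigidity classification of \cite{mars_senovilla}. I would split the argument into the necessity ($\Rightarrow$) and the sufficiency ($\Leftarrow$) of conditions (i)--(ii), and in both directions reduce everything to the behaviour of the MST on $\scri^-$.

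For necessity, I would start from a spacetime with the stated properties, so that the MST vanishes for some $Q$. Property IV then forces $Q=Q_0=Q^{(\mathrm{ev})}$, whence $\mathcal{S}^{(0)}_{\mu\nu\sigma\rho}\equiv 0$, and consequently $\widetilde{\mathcal{T}}^{(0)}_{\mu\nu\sigma}{}^{\rho}=\Theta^{-1}\mathcal{S}^{(0)}_{\mu\nu\sigma}{}^{\rho}$ vanishes identically and, by its regular extension (property I), in particular on $\scri^-$. Theorem~\ref{thm_nec_cond} translates $\widetilde{\mathcal{T}}^{(0)}_{\mu\nu\sigma}{}^{\rho}|_{\scri^-}=0$ into precisely conditions (i) and (ii) for the induced data $(h_{ij},D_{ij},Y)$, where $h_{ij}=g_{ij}|_{\scri^-}$, $D_{ij}=d_{titj}|_{\scri^-}$ and $Y^i=X^i|_{\scri^-}$. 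This direction should need no genuinely new computation.

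For sufficiency I would first produce the spacetime: by Friedrich's well-posedness of the asymptotic Cauchy problem for the MCFE with $\Lambda>0$, the free data $(\Sigma,h_{ij},D_{ij})$ with $D_{ij}$ TT determine a maximal globally hyperbolic development $(\mcM,g)$ with $\scri^-=\Sigma$, $g_{ij}|_{\scri^-}=h_{ij}$ and $d_{titj}|_{\scri^-}=D_{ij}$, while the conformal Killing field $Y$ of $(\Sigma,h)$ propagates to a bulk KVF $X$ with $X^i|_{\scri^-}=Y^i$ by the standard Killing-initial-data argument on $\scri^-$. It then remains to show the MST vanishes. Conditions (i) and (ii) assert that $\widehat C_{ij}$ and $D_{ij}$ are both pointwise proportional to $Y_iY_j-\tfrac13|Y|^2h_{ij}$; contracting with $Y^i$ shows immediately that $Y^j$ is a common eigenvector of $\widehat C_{ij}$ and $D_{ij}$, so Theorem~\ref{short} gives $\widetilde{\mathcal{T}}^{(\mathrm{ev})}_{\mu\nu\sigma}{}^{\rho}|_{\scri^-}=\widetilde{\mathcal{T}}^{(0)}_{\mu\nu\sigma}{}^{\rho}|_{\scri^-}$. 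At the same time (i) and (ii) are, via Theorem~\ref{thm_nec_cond}, equivalent to $\widetilde{\mathcal{T}}^{(0)}_{\mu\nu\sigma}{}^{\rho}|_{\scri^-}=0$. Combining these two facts yields $\widetilde{\mathcal{T}}^{(\mathrm{ev})}_{\mu\nu\sigma}{}^{\rho}|_{\scri^-}=0$, i.e.\ the data are Kerr--de Sitter-like, so Lemma~\ref{UniquenessT} applied to these trivial data gives $\mathcal{S}^{(\mathrm{ev})}_{\mu\nu\sigma\rho}\equiv 0$ on a neighbourhood of $\scri^-$.

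The hard part, and I expect the only genuinely delicate step, is upgrading this local vanishing to vanishing on the full domain of dependence of $\scri^-$. The Fuchsian symmetric hyperbolic system of Lemma~\ref{lemma_evolution} for $\widetilde{\mathcal{T}}^{(\mathrm{ev})}$ is manifestly regular only in some neighbourhood of (and excluding) $\scri^-$, so the PDE uniqueness of Lemma~\ref{UniquenessT} cannot by itself propagate the vanishing globally. Here I would invoke the classification of \cite{mars_senovilla}: once $\mathcal{S}^{(\mathrm{ev})}_{\mu\nu\sigma\rho}$ vanishes on an open set, the rigidity of that classification forces $\mathcal{S}^{(\mathrm{ev})}_{\mu\nu\sigma\rho}\equiv 0$ throughout the domain of dependence. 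This produces a $\Lambda>0$-vacuum development of the prescribed data admitting the KVF $X$ with $X^i|_{\scri^-}=Y^i$ and with vanishing associated MST, which completes the sufficiency direction and hence the proof.
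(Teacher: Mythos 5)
Your proposal is correct and follows essentially the same route as the paper: necessity via Theorem~\ref{thm_nec_cond} (after identifying $Q=Q_0$ through the rigidity result of \cite{mars_senovilla}), and sufficiency by constructing the development via Friedrich's theorem and the KID theorem, using Theorem~\ref{prop_Qs}/Theorem~\ref{short} together with Theorem~\ref{thm_nec_cond} to get $\widetilde{\mathcal{T}}^{(\mathrm{ev})}|_{\scri^-}=0$, then Lemma~\ref{UniquenessT} for vanishing near $\scri^-$, and finally the classification of \cite{mars_senovilla} to propagate the vanishing to the full domain of dependence. The only point stated more explicitly in the paper is that data of the form (ii) automatically satisfy the KID equations \eq{reduced_KID}, which is what licenses your appeal to the Killing-initial-data argument.
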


Spacetimes with vanishing MST have very different properties
depending on the values taken by the free constants in the
family. In particular, the maximal domain
of dependence of $\scri$ may or may not be extendible across a Killing horizon,
and these different behaviours occur even within the class
of spacetimes with vanishing $\Cconst$. In this latter case, $\scri$ is locally conformally
flat and the data consist simply
in a choice of a conformal Killing vector $Y$ in (a domain of) $\mathbb{S}^3$
and a choice of a constant $\Dconst$. An interesting (and probably difficult)
question is whether it is possible to identify directly at $\scri$ the
behaviour of its domain of dependence in the large. In particular, it
would be interesting to see if the properties of $Y$ at its zeroes can
be related to the existence of a Killing horizon across which the
domain of dependence of $\scri$ can be extended.

We remark that we do not obtain a counterpart to the stability result (part 2 of
Thm. \ref{deSitterthm}, since (to our knowledge) there is no general result guaranteeing existence and stability of solutions to Fuchsian systems.
Recall also the remark after  \eqref{Kill} in connection with the significance of the stability problem
in the presence of isometries.






In the final Section~\ref{sec_CKVFs} we analyze the relations between the vanishing of the rescaled MST
(\ref{T}) (or the corresponding condition on the data) and the existence of
other conformal Killing vector fields on $\scri$, and we discuss the 
extension of the latter to Killing vector fields on ${\mcM}$. This result,
given in Proposition \ref{prop_2CKVF},  will be relevant
 for the classification of spacetimes with vanishing ${\cal S}_{\alpha\beta\gamma\delta}^{(\mathrm{ev})}$   and
conformally flat $\scri$ presented in the
subsequent paper mentioned above already \cite{mpss}.

\section{The Mars-Simon tensor (MST) at null infinity}
\label{section_mars_simon}

\subsection{The conformally rescaled spacetime}
\label{Mars-Simon_conf}

In this section we collect key equations which  are gauge-independent, 
and which hold irrespectively of the sign (or vanishing) of $\Lambda$.

In the asymptotic setting described in the introduction 
the pair $(\widetilde g, \Theta)$ satisfies the \emph{metric conformal field equations (MCFE)}
on $\widetilde{\mcM\enspace}$ \cite{F3} (we use
tildes for all geometric objects associated to $\widetilde g$),
\begin{eqnarray}
 && \widetilde \nabla_{\rho} \widetilde d_{\mu\nu\sigma}{}^{\rho} =0\;,
 \label{conf1}
\\
 && \widetilde\nabla_{\mu} \widetilde L_{\nu\sigma} - \widetilde\nabla_{\nu}\widetilde L_{\mu\sigma} =   \widetilde\nabla_{\rho}\Theta \, \widetilde d_{\nu\mu\sigma}{}^{\rho}\;,
 \label{conf2}
\\
 && \widetilde\nabla_{\mu}\widetilde\nabla_{\nu}\Theta = -\Theta \widetilde L_{\mu\nu} +\widetilde s \widetilde g_{\mu\nu}\;,
 \label{conf3}
\\
 && \widetilde\nabla_{\mu}\widetilde  s = -\widetilde L_{\mu\nu}\widetilde \nabla^{\nu}\Theta\;,
 \label{conf4}
\\
 && 2\Theta \widetilde s - \widetilde \nabla_{\mu}\Theta\widetilde \nabla^{\mu}\Theta = \Lambda /3
 \label{conf5}
\;,
\\
 && \widetilde R_{\mu\nu\sigma}{}^{\kappa}[\widetilde g]
  = \Theta  \widetilde d_{\mu\nu\sigma}{}^{\kappa} + 2\big(\widetilde g_{\sigma[\mu}\widetilde  L_{\nu]}{}^{\kappa}  - \delta_{[\mu}{}^{\kappa}\widetilde L_{\nu]\sigma} \big)
 \label{conf6}
\;.
\end{eqnarray}
%
where the Riemann tensor
$\widetilde R_{\mu\nu\sigma}{}^{\kappa}[\widetilde g]$ is to be regarded
as a differential operator on $\widetilde g$,
while 
$\widetilde L_{\mu\nu}:= \frac{1}{2}\widetilde R_{\mu\nu} - \frac{1}{12}\widetilde R \widetilde g_{\mu\nu}$, $\widetilde d_{\mu\nu\sigma}{}^{\rho} := \Theta^{-1}  
\widetilde C_{\mu\nu\sigma}{}^{\rho}$ are, respectively, the Schouten and rescaled Weyl tensor
of $\widetilde g$, and 
\begin{equation}
\widetilde s \,:=\,\frac{1}{4}\Box_{\widetilde g} \Theta + \frac{1}{24}  \widetilde R\Theta
\;.
\end{equation}
%

Let us now express the MST in terms of unphysical fields on $(\widetilde{\mcM\enspace}\hspace{-0.5em} ,\widetilde g)$.
We first of all note that the push-forward $\widetilde X^{\mu}$ of the KVF $X^{\mu}$, which we identify with $X^{\mu}$,  satisfies the \emph{unphysical Killing equations}
\cite{ttpKIDs}
\begin{equation}
\widetilde F_{(\mu\nu)}\,=\, 0 \quad \text{and} \quad \widetilde F\,=\, 4\widetilde X^{\mu}\widetilde\nabla_{\mu}\log\Theta
\label{unphys_Killing}
\;,
\end{equation}
where
\begin{equation}
 \widetilde F_{\mu\nu} \,:=\, (\widetilde\nabla_{\mu}\widetilde X_{\nu})_{\mathrm{tf}}\;, \quad  \widetilde F:=\widetilde\nabla_{\mu}\widetilde X^{\mu}
\;.
\end{equation}
and the symbol $(.)_{\mathrm{tf}}$ denotes the trace-free part of the corresponding $(0,2)$-tensor. $\widetilde F_{\mu\nu}$ is hence a two-form
and we can define $\widetilde{\mathcal{C}}_{\mu\nu\sigma\rho}$, $\widetilde{\mathcal{U}}_{\mu\nu\sigma\rho}$,
$\widetilde{\mathcal{I}}_{\mu\nu\sigma\rho}$, $\widetilde{\mathcal{F}}_{\mu\nu}$ and
$\widetilde{ \mathcal{F}}^2$ using definitions
 analogous to  (\ref{Weyldual})-(\ref{Fsq}) with all geometric objects referred to $(\widetilde{\mcM\enspace}\hspace{-0.5em} ,\widetilde g)$.
The following relations are found via a simple computation.
%
\begin{eqnarray}
\mathcal{C}_{\mu\nu\sigma}{}^{\rho} &=&\widetilde{ \mathcal{C}}_{\mu\nu\sigma}{}^{\rho}\;,
\label{formula_C}
\\
\mathcal{I}_{\mu\nu\sigma}{}^{\rho} &=& \Theta^{-2}\widetilde{\mathcal{I}}_{\mu\nu\sigma}{}^{\rho}
\;,
\\
 F_{\mu\nu} &=&  \nabla_{\mu}(\Theta^{-2}\widetilde  X_{\nu})  \nonumber \\
& = & \Theta^{-2}(\widetilde F_{\mu\nu} + \frac{1}{4}\widetilde g_{\mu\nu}\widetilde F)
+  \Theta^{-3}(2\widetilde  X_{[\mu}\widetilde \nabla_{\nu]}\Theta  -  \widetilde g_{\mu\nu}\widetilde  X^{\sigma} \widetilde\nabla_{\sigma}\Theta ) 
\nonumber
\\
&=&  \Theta^{-2}(\widetilde F_{\mu\nu}
+\Theta^{-1}\widetilde H_{\mu\nu}  ) 
\;,
\\
\mathcal{F}_{\mu\nu} &=&  \Theta^{-2}(\widetilde{ \mathcal{F}}_{\mu\nu}
+\Theta^{-1}\widetilde{\mathcal{H}}_{\mu\nu} )
\;,
\\
\mathcal{F}^2 &=& 
\widetilde{ \mathcal{F}}^2
+ 2\Theta^{-1} {\widetilde{ \mathcal{F}}}_{\alpha\beta}\widetilde{\mathcal{H}}^{\alpha\beta} +\Theta^{-2} \widetilde{\mathcal{H}}^2\;,
\\
\mathcal{U}_{\mu\nu\sigma}{}^{\rho}  &=& \Theta^{-2}\widetilde{\mathcal{U}}_{\mu\nu\sigma}{}^{\rho} -\Theta^{-3}\big( 
\widetilde{ \mathcal{F}}_{\mu\nu}\widetilde{\mathcal{H}}_{\sigma}{}^{\rho} 
+ 
\widetilde{\mathcal{H}}_{\mu\nu} \widetilde{ \mathcal{F}}_{\sigma}{}^{\rho}
-\frac{2}{3} {\widetilde{ \mathcal{F}}}_{\alpha\beta}\widetilde{\mathcal{H}}^{\alpha\beta}\widetilde{\mathcal{I}}_{\mu\nu\sigma}{}^{\rho}  \big)
\nonumber
\\
&&
  -\Theta^{-4}\big( \widetilde{\mathcal{H}}_{\mu\nu}\widetilde{\mathcal{H}}_{\sigma}{}^{\rho} 
-\frac{1}{3}\widetilde{\mathcal{H}}^2 \widetilde{\mathcal{I}}_{\mu\nu\sigma}{}^{\rho} \big)
\;,
\label{formula_Q}
\end{eqnarray}
where we have set
\begin{eqnarray}
\widetilde H_{\mu\nu} &:=&   2\widetilde  X_{[\mu}\widetilde\nabla_{\nu]}\Theta\;,
\\
 \widetilde {\mathcal{H}}_{\mu\nu} &:=&  \widetilde H_{\mu\nu}  +i \widetilde H^{\star}_{\mu\nu} 
\;.
\end{eqnarray}
%
We want to investigate how the MST behaves when approaching the conformal boundary $\scri$.
Note that the conformal Killing equation implies that $\widetilde X^{\mu}$ admits a smooth extension across $\scri$
\cite{RG}, in particular the tensor 
$\widetilde{\mathcal{U}}_{\mu\nu\sigma}{}^{\rho}$ is a regular object there.

We observe that the following relations are fulfilled.
 The first two are general identities for
self-dual two-forms and the third
one is a consequence of $\widetilde H$ being a simple two-form,
\begin{eqnarray}
 {\widetilde{ \mathcal{F}}}_{\alpha\beta}\widetilde{\mathcal{H}}^{\alpha\beta}
&=& 2{\widetilde F}_{\alpha\beta} \widetilde H^{\alpha\beta}  + 2 i {\widetilde F}^{\alpha\beta} \widetilde H^{\star}_{\alpha\beta}
\,=\, 2 {\widetilde{ \mathcal{F}}}_{\alpha\beta}\widetilde{H}^{\alpha\beta}\;,
\label{relation1}
\\
\widetilde{\mathcal{F}}^2
&=&
2\widetilde F^2  + 2i {\widetilde F}^{\star}_{\alpha\beta}{\widetilde F}^{\alpha\beta}
\;,
\\
 \widetilde{\mathcal{H}}^2 &=&
2\widetilde H^2 
\;,
\\
  \widetilde X^{\mu} \widetilde{\mathcal{H}}_{\mu\nu} &=&
\widetilde X^2\widetilde \nabla_{\nu}\Theta-   \frac{1}{4} \Theta \widetilde  F \widetilde X_{\nu}
\label{formula_XH}
\;.
\end{eqnarray}
Here and henceforth we write $\widetilde T^2:= \widetilde T_{\alpha_1 \cdots \alpha_p} \widetilde
T^{\alpha_1 \cdots \alpha_p}$ for any  $(0,p)$-\emph{spacetime}-tensor, while we write
$| T|^2:=  T_{i_1 \cdots i_p} 
T^{i_1 \cdots i_p}$ for any  $(0,p)$-\emph{space}-tensor on $\scri$.
Since we  never write down explicitly the second component of a vector, the reader will not get confused by this notation.

Moreover, the MCFE  and the unphysical Killing equations  imply that
\begin{eqnarray}
 \widetilde{\mathcal{H}}^2 &=&
 -\frac{4}{3}\Lambda \widetilde X^2  +  8\Theta \widetilde  s \widetilde X^2  - \frac{1}{4}\Theta^2 \widetilde F^2
\;,
\label{relation_H2}
\\
 {\widetilde{ \mathcal{F}}}_{\alpha\beta}\widetilde{\mathcal{H}}^{\alpha\beta}
&=& \Theta  \widetilde X^{\alpha}\widetilde \nabla_{\alpha}\widetilde F 
+ 4\Theta \widetilde X^{\alpha} \widetilde X^{\beta} \widetilde L_{\alpha\beta}
- 4s\widetilde X^2
  +2i  \widetilde F^{\mu\nu} \widetilde H^{\star}_{\mu\nu}
\;.
\label{relation_FH}
\end{eqnarray}

\subsection{Cauchy data at $\scri^-$}
\label{constraints}

Let us henceforth assume a positive cosmological constant 
\begin{equation}
\Lambda \,>\, 0\;.
\end{equation}
We consider a connected component $\scri^{-}$ of past null infinity.
As in \cite{ttp2}, to which we refer the reader for further details, we use adapted coordinates $(x^0=t, x^i)$ with $\scri^-=\{t=0\}$
 and impose a \emph{wave map gauge condition}  
with
\begin{equation}
 \widetilde R=0\;, \quad \widetilde s|_{\scri^-}=0\;, \quad \widetilde g_{tt}|_{\scri^-}=-1\;, \quad \widetilde  g_{ti}|_{\scri^-}=0\;, 
\quad  \widetilde W^{\sigma}=0\;, \quad \check g_{\mu\nu} = \widetilde g_{\mu\nu}|_{\scri^-}
\;.
\label{gauge_conditions_compact}
\end{equation}
The gauge freedom to prescribe $\widetilde R$ and $\widetilde s|_{\scri^-}$ reflects the freedom to choose the conformal factor $\Theta$,
which is treated as an unknown in the MCFE.
It is well-known that the freedom to choose coordinates near a spacelike hypersurface with induced metric $h_{ij}$ can be employed to prescribe $\widetilde g_{tt}|_{\scri^-}$ 
and  $\widetilde  g_{ti}|_{\scri^-}$, as long as $(\widetilde g_{tt}-h^{ij}\widetilde g_{ti}
\widetilde g_{tj})|_{\scri^-}<0$ is satisfied.

The remaining freedom to choose coordinates is captured by the wave map gauge condition, a generalization of the classical harmonic gauge condition,
and requires the vanishing of the so-called wave gauge vector
\begin{equation}
H^{\sigma}\,:=\, g^{\alpha\beta}(\Gamma^{\sigma}_{\alpha\beta}-\check\Gamma^{\sigma}_{\alpha\beta}) - W^{\sigma } =0
\;,
\end{equation}
where $\check g_{\mu\nu}$ denotes some target metric, the $\check \Gamma^{\sigma}_{\alpha\beta}$'s are the associated connection coefficients, 
and the $W^{\sigma}$'s are the gauge source functions, which can be arbitrarily prescribed \cite{F}.
The target metric is introduced for the wave gauge vector to become a tensor. 
Here, as in \cite{ttp2}, we have chosen $\check g_{\mu\nu}$ to be independent of $t$ and
to agree with $\widetilde g_{\mu\nu}$ on $\scri^-$. 
The gauge has been chosen in such a way that $\partial_t \widetilde g_{\mu\nu}$ vanishes on $\scri^-$, in order to make
the computations as simple as possible.
Given arbitrary coordinates the wave map gauge can be realized by solving wave equations.

Viewing the MCFE as an evolution problem with initial data on $\scri^{-}$, 
the free data are a (connected) Riemannian 3-manifold 
$(\Sigma, h_{ij})$, which represents $\scri^-$ in the emerging spacetime,%
\footnote{It is actually merely the conformal class of the Riemannian 3-manifold which matters geometrically.
This will be relevant in paper II \cite{mpss}.
}
and a TT tensor $D_{ij}$ (i.e.: trace-free and divergence-free) which satisfies the relation 
\begin{equation}
D_{ij}=\widetilde d_{titj}|_{\scri^-}
\end{equation}
 once the asymptotic Cauchy problem has been solved

\begin{theorem}[\cite{F_lambda}]
Let $(\Sigma, h_{ij})$ be a Riemanian 3-manifold, $D_{ij}$ a symmetric $(0,2)$-tensor and $\Lambda>0$.
Then, if and only if $D_{ij}$ is a TT tensor,
the tuple $(\Sigma, h_{ij}, D_{ij})$ defines an (up to isometries)
 unique maximal globally hyperbolic   development (in the unphysical spacetime)
of the $\Lambda$-vacuum field equations where $\Sigma$ can be embedded,
with embedding $\iota$, such that
 $\iota(\Sigma)$
represents $\scri^-$ 
with $\iota^* \widetilde g_{ij}|_{\Sigma}=h_{ij}$ and $\iota^* \widetilde d_{titj}|_{\Sigma}=D_{ij}$.
\end{theorem}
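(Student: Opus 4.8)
The plan is to adopt Friedrich's conformal method, whose starting point is the observation that, for $\Lambda>0$, the hypersurface $\scri^-$ is spacelike: evaluating \eqref{conf5} where $\Theta=0$ gives $\widetilde\nabla_\mu\Theta\,\widetilde\nabla^\mu\Theta=-\Lambda/3<0$, so $\mathrm{d}\Theta$ is timelike and normal to $\scri^-$, and the asymptotic Cauchy problem is a genuine spacelike initial value problem for the system \eqref{conf1}--\eqref{conf6}. Moreover, inserting $\Theta=0$ and $\widetilde s|_{\scri^-}=0$ into \eqref{conf3} yields $\widetilde\nabla_\mu\widetilde\nabla_\nu\Theta|_{\scri^-}=0$, so $\scri^-$ is totally geodesic; this will be used repeatedly to simplify the constraints.

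\emph{Necessity.} Suppose a development with the stated properties exists. Since $\widetilde d_{\mu\nu\sigma}{}^\rho$ shares the algebraic symmetries of the Weyl tensor, its electric part $D_{ij}=\widetilde d_{titj}|_{\scri^-}$ is automatically symmetric and trace-free. To obtain divergence-freeness I would contract the Bianchi-type equation \eqref{conf1}, $\widetilde\nabla_\rho\widetilde d_{\mu\nu\sigma}{}^\rho=0$, with the unit normal and project the free indices tangentially to $\scri^-$; because $\scri^-$ is totally geodesic the would-be extrinsic-curvature corrections drop out and one is left with the divergence constraint $D^{ij}{}_{;j}=0$, the covariant derivative being that of $h$. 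Hence $D_{ij}$ is necessarily TT.

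\emph{Sufficiency.} The crux is Friedrich's reduction of the MCFE to a symmetric hyperbolic system. Taking as unknowns $\widetilde g_{\mu\nu}$, the associated connection coefficients, $\Theta$, $\widetilde s$, $\widetilde L_{\mu\nu}$ and $\widetilde d_{\mu\nu\sigma}{}^\rho$, and imposing the wave map gauge $H^\sigma=0$ with the choices \eqref{gauge_conditions_compact}, the evolution part of \eqref{conf1}--\eqref{conf6} becomes first-order symmetric hyperbolic. The gauge \eqref{gauge_conditions_compact} fixes $\widetilde g_{\mu\nu}|_{\scri^-}$ from $h_{ij}$ and sets $\widetilde R=0$, $\widetilde s|_{\scri^-}=0$; equation \eqref{conf5} normalizes $\widetilde\nabla_\mu\Theta$; $\widetilde L_{\mu\nu}|_{\scri^-}$ is then determined by $h$ through its Ricci tensor; and the rescaled Weyl data split into the magnetic part, fixed by $h$ via its Cotton--York tensor, and the electric part, which is exactly the free datum $D_{ij}$. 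Standard symmetric hyperbolic theory then yields a unique local solution, and a Choquet-Bruhat--Geroch patching argument upgrades this to the unique maximal globally hyperbolic development.

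It remains to verify that the solution of the reduced system actually solves the full MCFE (hence, after restoring the physical metric, \eqref{LambdaVac}), i.e.\ that $H^\sigma=0$ and the constraints propagate off $\scri^-$. This is where the main obstacle lies: one must show that the constraint quantities --- the wave-gauge vector $H^\sigma$ together with the tensors measuring the failure of \eqref{conf1}--\eqref{conf6} --- obey a homogeneous symmetric hyperbolic \emph{subsidiary} system. The TT condition on $D_{ij}$ and the gauge choices guarantee these quantities vanish on $\scri^-$, whereupon the subsidiary system forces them to vanish throughout the development. Establishing the precise form of this subsidiary system, and checking that the TT condition is exactly what makes its initial data vanish, is the technical heart of the argument --- it is here that properties I.--III. of the (rescaled) Weyl tensor recalled in the introduction are genuinely used.
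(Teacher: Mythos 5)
The paper does not prove this statement: it is quoted verbatim from Friedrich \cite{F_lambda}, so there is no internal proof to compare with. Your sketch is a correct reconstruction of the cited argument in exactly the framework the paper adopts (spacelikeness of $\scri^-$ from \eq{conf5}, total geodesy of $\scri^-$ in the gauge $\widetilde s|_{\scri^-}=0$, necessity of the TT condition from the constraint part of \eq{conf1}, sufficiency via a symmetric-hyperbolic reduction of the MCFE in the wave-map gauge \eq{gauge_conditions_compact} with the induced data \eq{constr2}--\eq{constr_last}, followed by constraint propagation and Choquet-Bruhat--Geroch maximality), and the subsidiary-system step you explicitly defer is precisely the technical content carried out in \cite{F_lambda}.
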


For simplicity, we will often identify $\Sigma$ with its image under
$\iota$ and drop all reference to the embedding.

It is a property of the spacelike Cauchy problem that all transverse derivatives can be computed algebraically from the initial data
(here $h_{ij}$ and $D_{ij}$).
In the gauge \eq{gauge_conditions_compact} the MCFE  \eq{conf1}-\eq{conf6}  enforce  the following relations on $\scri^-$, cf.\ \cite{F_lambda, ttp2},
\begin{eqnarray}
 &\ol {\widetilde g}_{tt} =-1\;, \quad \ol {\widetilde g}_{ti}=0\;, \quad \ol {\widetilde g}_{ij} = h_{ij}\;, \quad \ol{\partial_t {\widetilde g}_{\mu\nu}}=0\;,
\label{constr2}
&
\\
& \ol \Theta =0\;, \quad \ol{\partial_t \Theta} = \sqrt{\frac{\Lambda}{3}}\;, \quad \ol{\partial_t\partial_t\Theta}=0\;,
\label{constr3}
&
\\
&
\ol{\partial_t\partial_t\partial_t\Theta} = - \frac{1}{2}\sqrt{\frac{\Lambda}{3}} \widehat R
\;,\quad 
\ol{ \partial_t\partial_t\partial_t\partial_t\Theta} 
=  0
\;,
&
\\
&\ol {\widetilde s}=0\;, \quad \ol{\partial_t \widetilde s} =  \frac{1}{4}\sqrt{\frac{\Lambda}{3}} \,\widehat   R\;,
\label{constr4}
&
\\
&\ol {\widetilde L}_{ij} = \widehat  L_{ij} \;, \quad \ol {\widetilde L}_{ti}=0\;, \quad \ol {\widetilde L}_{tt} = \frac{1}{4}\widehat  R \;,
\label{constr5}
&
\\
& \ol{\partial_t \widetilde L_{ij}} =  -\sqrt{\frac{\Lambda}{3}} \, D_{ij}\;, \quad
  \ol{\partial_t \widetilde L_{ti}} =\frac{1}{4}\partial_i\widehat  R  \;, \quad  \ol{\partial_t \widetilde L_{tt}} =0\;,
\label{constr6}
&
\\
&  \ol {\widetilde d}_{titj} = D_{ij}\;,\quad \ol {\widetilde d}_{tijk}  =  \sqrt{\frac{3}{\Lambda}} \widehat C_{ijk}\;,
\label{constr7}
&
\\
& \ol{\partial_{t} {\widetilde d}_{titj}}= \sqrt{\frac{3}{\Lambda}}\widehat  B_{ij}\;, \quad
\ol{\partial_{t} {\widetilde d}_{tijk}} = 2\widehat  \nabla_{[j}D_{k]i}\;.
\label{constr8}
 &
\\
& \ol {\widetilde \Gamma}^k_{ij} = \widehat \Gamma^k_{ij}\;, \quad  \ol {\widetilde \Gamma}^t_{ij} =    \ol {\widetilde \Gamma}^t_{ti} =\ol {\widetilde \Gamma}^t_{tt} =  \ol {\widetilde \Gamma}^k_{tt} =  \ol{\widetilde \Gamma}^k_{ti} = 0\;,
\label{christoffel}
&
\\
&\ol {\widetilde R}_{tijk} =0\;, \quad  \ol {\widetilde R}_{titj} =-\widehat L_{ij} + \frac{1}{4} h_{ij}\widehat R
\;,
&
\\
&
\ol{\partial_t \widetilde R_{tijk}} = \widehat C_{ijk}-\frac{1}{2} h_{i[j}\widehat \nabla_{k]}\widehat R
\;, \quad 
\ol{\partial_t\widetilde R_{titj}} = 2\sqrt{\frac{\Lambda}{3}} D_{ij}
\;.
\label{constr_last}
&
\end{eqnarray}
An overbar will be used to denote the restriction of spacetime objects to $\scri^-$,
if not explicitly stated otherwise  (in the latter   cases it will denote ``complex conjugation'').
We use the symbol $\enspace\widehat{}\enspace$ to denote objects associated to the induced Riemannian metric $h_{ij}$, in particular $\widehat C_{ijk}$, $\widehat L_{ij}$ and $\widehat B_{ij}$
denote the Cotton, Schouten and Bach tensor, respectively, of $h_{ij}$. Recall
that they are defined by 
\begin{align}
\widehat C_{ijk} &:= \widehat \nabla_k \widehat L_{ij} - \widehat \nabla_j
\widehat L_{ik}\;, \quad \quad \widehat L_{ij} := \widehat R_{ij} - 
\frac{1}{4} \widehat R h_{ij} \;, \label{Cotton+Schouten}\\
\widehat B_{ij} & := -\widehat\nabla^k\widehat C_{ijk} = \widehat\nabla^k\widehat \nabla_i\widehat L_{jk} -
\widehat \nabla_k \nabla^k \widehat L_{ij}\label{Bach} 
\;.
\end{align}
Note that due to \eq{christoffel} the actions of $\widetilde \nabla_t$ and $\partial_t$, as well as $\widetilde \nabla_i$ and $\widehat \nabla_i$, respectively, coincide on $\scri^-$, so we can use them interchangeably.

Whenever $X^{\mu}$ is  a KVF of the physical spacetime, the vector field 
\begin{equation}
Y^i := \widetilde X^i|_{\scri^-}
\end{equation}
 is a conformal Killing vector field (CKVF) 
 of $(\scri^-, h_{ij})$, i.e.\
\begin{equation}
(\mcL_Y h)_{ij} \, \equiv \, 2\widehat\nabla_{(i}Y_{j)} \,=\, \frac{2}{3} \widehat\nabla_kY^k h_{ij} 
\;,
\label{CKVF_eqn}
\end{equation}
 which fulfills the \emph{KID equations} \cite{ttp2}
\begin{equation}
   \mcL_Y D_{ij} + \frac{1}{3}D_{ij}\widehat \nabla_k Y^k = 0
\;,
\label{reduced_KID}
\end{equation}
and vice versa: 

\begin{theorem}[\cite{ttp2}]
Let   $(\Sigma,h_{ij})$ be a Riemannian 3-manifold, $D_{ij}$ a symmetric $(0,2)$-tensor on $\Sigma$ and $\Lambda>0$.
Then, the tuple
 $(\Sigma,h_{ij}, D_{ij},  Y^i)$  defines
an  (up to isometries) unique, in the unphysical spacetime maximal globally hyperbolic
$\Lambda$-vacuum spacetime with a smooth $\scri^-$, represented 
by  $\iota(\Sigma)$, with $\iota^* \widetilde g_{ij}|_{\Sigma}=h_{ij}$ and $\iota^* \widetilde d_{titj}|_{\Sigma}=D_{ij}$,
which contains a Killing vector field  $X$
 with  $\ol {\widetilde X}^i=Y^i$, if and only if  $D_{ij}$ is a TT tensor and  $Y$
is a conformal Killing vector field on    $(\Sigma,h_{ij})$ which satisfies the  KID equations
 (\ref{reduced_KID}). 
%

Moreover, $\widetilde X^{\mu}$ satisfies
\begin{equation}
 \ol {\widetilde X}^t=0\;,  \quad  \ol {\widetilde \nabla_t \widetilde X^t} = \frac{1}{3}\widehat \nabla_i Y^i\;, \quad \ol{\widetilde \nabla_t \widetilde X^i}=0\;.
\label{rel_KVF}
\end{equation}
\end{theorem}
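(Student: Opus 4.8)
\proof (Proposal.) The statement is a \emph{Killing Initial Data} (KID) result: on top of the existence-and-uniqueness theorem for the development of $(\Sigma,h_{ij},D_{ij})$ quoted just above, one must characterise, purely in terms of the asymptotic data, when the evolving spacetime carries a Killing field $X$ whose rescaled boundary value equals the prescribed $Y$. My plan is to treat the two implications separately: the forward (``only if'') direction by restricting the unphysical Killing equations to $\scri^-$, and the backward (``if'') direction by constructing candidate Cauchy data for $\widetilde X^\mu$ on $\scri^-$ and then propagating the Killing property off $\scri^-$ by a symmetric hyperbolic argument. Throughout I use the basic fact that a physical KVF pushes forward to a \emph{conformal} Killing field of $\widetilde g$ obeying the unphysical Killing equations \eqref{unphys_Killing}.

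For necessity, suppose $X$ is a KVF of $(\mcM,g)$. Restricting the trace-free equation $\widetilde F_{(\mu\nu)}=0$ to $\scri^-$ and using the gauge conditions \eqref{gauge_conditions_compact} together with the Christoffel data \eqref{christoffel} yields the conformal Killing equation \eqref{CKVF_eqn} for $Y^i=\ol{\widetilde X}^i$. The scalar equation $\widetilde F=4\widetilde X^\mu\widetilde\nabla_\mu\log\Theta$ evaluated on $\scri^-$, where $\ol\Theta=0$ and $\ol{\partial_t\Theta}=\sqrt{\Lambda/3}$ by \eqref{constr3}, together with its first transverse derivative, fixes the components in \eqref{rel_KVF} and in particular forces $\ol{\widetilde X}^t=0$ by finiteness of $\widetilde F$. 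To obtain the KID equation \eqref{reduced_KID} I would differentiate the unphysical Killing equation once transversally, evaluate on $\scri^-$, and feed in the constraint values for $\widetilde L_{\mu\nu}$, $\partial_t\widetilde L_{\mu\nu}$ and $\widetilde d_{titj}=D_{ij}$ from \eqref{constr5}--\eqref{constr8}; the combination $\mcL_Y D_{ij}+\tfrac13 D_{ij}\widehat\nabla_kY^k$ is precisely what survives.

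For sufficiency --- the substantive direction --- I would first invoke the quoted existence theorem to produce the development $(\widetilde{\mcM},\widetilde g,\Theta,\widetilde d,\dots)$ of $(\Sigma,h_{ij},D_{ij})$, and then build $X$ on it. The candidate field $\widetilde X^\mu$ is evolved by the ``reduced'' second-order (wave-type) equation obtained from the prolongation of the conformal Killing system closed by the MCFE, with Cauchy data on $\scri^-$ read off from $Y^i$ and the relations \eqref{rel_KVF}. The heart of the matter is the \textbf{propagation of the Killing constraints}: one introduces the deficiency quantities $\widetilde F_{(\mu\nu)}$ and $\widetilde F-4\widetilde X^\mu\widetilde\nabla_\mu\log\Theta$ and shows, using the second Bianchi identity and the MCFE \eqref{conf1}--\eqref{conf6} to eliminate curvature algebraically, that they obey a linear homogeneous symmetric hyperbolic system. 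The CKVF condition \eqref{CKVF_eqn} and the KID equation \eqref{reduced_KID} are exactly the statements that these deficiencies and their first transverse derivatives vanish on $\scri^-$, so uniqueness for the homogeneous system forces them to vanish on the whole domain of dependence; the resulting $\widetilde X$ then descends to a genuine KVF of $(\mcM,g)$, and \eqref{rel_KVF} is recovered by restriction.

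I expect the main obstacle to be exactly this constraint-propagation step, and specifically the behaviour at $\scri^-$ itself: the scalar constraint contains the factor $\widetilde\nabla_\mu\log\Theta=\Theta^{-1}\widetilde\nabla_\mu\Theta$, which is singular where $\Theta=0$. Closing the propagation system therefore requires rewriting it in manifestly regular variables --- for instance working with $\Theta\widetilde F-4\widetilde X^\mu\widetilde\nabla_\mu\Theta$ rather than $\widetilde F-4\widetilde X^\mu\widetilde\nabla_\mu\log\Theta$ --- so that both the symmetric hyperbolic estimate and the vanishing of the initial deficiency survive the limit to the conformal boundary. Uniqueness of $X$ then follows from uniqueness for the evolution system, while uniqueness of the spacetime is furnished by the development theorem already quoted. \qed
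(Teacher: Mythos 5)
This theorem is not proved in the paper at all: it is imported verbatim from \cite{ttp2}, so there is no in-paper argument to compare against. Your proposal correctly reconstructs the strategy of that reference --- necessity by restricting the unphysical Killing equations \eqref{unphys_Killing} and their first transverse derivative to $\scri^-$ using the constraint values \eqref{constr2}--\eqref{constr_last}, sufficiency by evolving a candidate $\widetilde X^\mu$ via a wave-type system and propagating the Killing deficiencies through a linear homogeneous hyperbolic system --- and you correctly identify the essential technical point, namely that the scalar constraint must be rewritten in the regular form $\Theta\widetilde F - 4\widetilde X^\mu\widetilde\nabla_\mu\Theta$ before the propagation argument can be run up to the conformal boundary.
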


From what has been shown  in \cite{ttp2} one easily derives the following expressions on $\scri$,
\begin{eqnarray}
 \ol{\widetilde F} &=&\frac{4}{3}\widehat\nabla_i Y^i\;,
\label{Killing_rel_first}
\\
\Delta_h Y_i
 &=& -\widehat L_{ij}Y^j   - \frac{1}{4} \widehat R Y_i 
- \frac{1}{3}\widehat \nabla_i \widehat\nabla_j Y^j
\;,
\\
\Delta_h \ol{\widetilde   F} &=& -Y^i\widehat \nabla_i\widehat  R - \frac{1}{2}\widehat R\ol{\widetilde F}
\label{Killing_div}
\\
\ol{\widetilde\nabla_t\widetilde \nabla_t\widetilde X_t} 
&=& 0
\;,
\\
\ol{\widetilde \nabla_t\widetilde \nabla_t\widetilde X_i} 
&=&  \widehat L_{ij}Y^j - \frac{1}{4}\widehat R Y_i
+ \frac{1}{3}\widehat \nabla_i \widehat \nabla_j Y^j\;,
\\
\ol{\widetilde\nabla_t\widetilde\nabla_t\widetilde\nabla_t\widetilde X_t} &=& -\frac{1}{4}\Delta_h \ol{\widetilde  F}
\;,
\\
\ol{\widetilde\nabla_t\widetilde\nabla_t\widetilde\nabla_t\widetilde X_k} &=& 
-   2 \sqrt{\frac{\Lambda}{3}} D_{kl} Y^l
\;,
\\
\ol{\widetilde \nabla_t \widetilde F} &=& 
0
\;,
\\
\ol{\widetilde \nabla_t\widetilde \nabla_t \widetilde F} &=&\Delta_h \ol{\widetilde  F}
\;.
\label{Killing_rel_last}
\end{eqnarray}

\subsection{The function $Q$}
\label{functionQ}

\subsubsection{A necessary condition for vanishing MST}
\label{subsec_nec_cond}
Our aim is to characterize initial data on a spacelike $\scri^-$ which lead to a vanishing MST.
We have not specified the function $Q$ yet. Nonetheless, let us assume for the time being that $\Theta^{-4} Q$ does not tend to zero 
at
$\scri$. Then, it follows from \eq{formula_C} and \eq{formula_Q} that a necessary condition for the MST to vanish on $\scri$ is
\begin{equation}
\Big[\widetilde{\mathcal{H}}_{\mu\nu}\widetilde{\mathcal{H}}_{\sigma}{}^{\rho} 
-\frac{1}{3}\widetilde{\mathcal{H}}^2 \widetilde{\mathcal{I}}_{\mu\nu\sigma}{}^{\rho}\Big]\Big|_{\scri} \,=\, 0
\;.
\end{equation}
A straightforward computation  on a spacelike $\scri$
in the wave map gauge \eq{gauge_conditions_compact}
shows that this is the case if and only if
\begin{equation}
0
\,=\, \Big[\widetilde{\mathcal{H}}_{ti}\widetilde{\mathcal{H}}_{tj} - \frac{1}{3}\widetilde{\mathcal{H}}^2\widetilde{\mathcal{I}}_{titj}\Big]\Big|_{\scri}
\,=\,  \frac{\Lambda}{3} (Y_iY_j)_{\mathrm{tf}}
\quad 
\Longleftrightarrow \quad  Y^i\,=\,0
\;.
\end{equation}
This already implies \cite{ttp2} that the KVF $X^{\mu}$ is trivial.
Hence, $\Theta^{-4} Q$ must necessarily go to zero 
whenever the  MST vanishes on a spacelike $\scri$. In the next section
we in fact show that
\begin{equation}
 Q\,=\,O(\Theta^5)
\end{equation}
holds automatically for an appropriate definition of $Q$.


\subsubsection{Definition and asymptotic behavior of the MST}
In order to analyze
the situation where $\mathcal{S}_{\mu\nu\sigma\rho}$ vanishes, it is natural
to define $Q$ in such a way that a certain scalar constructed from
$\mathcal{S}_{\mu\nu\sigma\rho}$ vanishes automatically. This tensor has the
same algebraic properties as the Weyl tensor, so all its traces
are identically zero and cannot be used to define $Q$.
A convenient choice is to require
\begin{equation}
\mathcal{S}_{\mu\nu\sigma\rho}\mathcal{F}^{\mu\nu}\mathcal{F}^{\sigma\rho} \,=\,0
\;,
\end{equation}
%
%
or, equivalently,
\begin{equation}
 Q\mathcal{F}^{4}\,=\, \frac{3}{2} \mathcal{F}^{\mu\nu}\mathcal{F}^{\sigma\rho} \mathcal{C}_{\mu\nu\sigma\rho}
\,=\,  6F^{\mu\nu} F^{\sigma\rho} \mathcal{C}_{\mu\nu\sigma\rho} 
\label{definition_Q}
\;.
\end{equation}
The function $Q$ necessarily needs to satisfy \eq{definition_Q}
whenever  the MST  vanishes.
Let us restrict attention to the case where $\mathcal{F}^2$ has no zeros.
In fact,  $\mathcal{F}^2=-\frac{4}{3}\Lambda \Theta^{-2} |Y|^2 + O(\Theta^{-1})$, so, at least sufficiently close to $\scri$,  it suffices to assume that $Y$ has no zeros on $\scri$.
Then \eq{definition_Q} determines $Q$.  
From now on this choice of $Q$ will be denoted by $Q_0$,
\begin{equation}
 Q_0\,:=\, \frac{3}{2} \mathcal{F}^{-4}\mathcal{F}^{\mu\nu}\mathcal{F}^{\sigma\rho} \mathcal{C}_{\mu\nu\sigma\rho}
\;,
\label{definition_Q0}
\end{equation}
and the corresponding MST by $\mathcal{S}^{(0)}_{\mu\nu\sigma\rho}$. 
When we want to emphasize the metric $g$ with respect to
which $\mathcal{S}^{(0)}_{\mu\nu\sigma\rho}$ is defined, we will write
$\mathcal{S}^{(0)}_{\mu\nu\sigma\rho}[g]$.

As has already been done for the other fields appearing in the definition of the MST, we express $Q_0$ in terms of the unphysical fields.
First of all we set
\begin{equation}
\widetilde{ \mathcal{D}}_{\mu\nu\sigma\rho} \,=\, \Theta^{-1}  \widetilde{\mathcal{C}}_{\mu\nu\sigma\rho}
\;.
\end{equation}
Making use of the various relations \eq{formula_C}-\eq{formula_XH} we find that\footnote{Not all orders given here and later in several instances are needed for our calculations. Nevertheless, we have chosen to write them down for the sake of completeness.}
%
\begin{eqnarray}
 Q_0 &=&  - 6\mathcal{F}^{-4} F^{\mu\nu} F_{\rho}{}^{\sigma} \mathcal{C}_{\mu\nu\sigma}{}^{\rho} 
\\
&=&   6\Theta^4 \widetilde{\mathcal{C}}_{\mu\nu\sigma\rho}\frac{(\widetilde H^{\mu\nu}  + \widetilde F^{\mu\nu}\Theta  )(\widetilde H^{\sigma\rho}  + \widetilde F^{\sigma\rho}\Theta  )  }{[ \widetilde{\mathcal{H}}^2 + 2 {\widetilde{ \mathcal{F}}}_{\alpha\beta}\widetilde{\mathcal{H}}^{\alpha\beta}\Theta 
+\widetilde{ \mathcal{F}}^2\Theta^2  ]^2}
\\
&=&   6\Theta^5 \widetilde{\mathcal{D}}_{\mu\nu\sigma\rho}\frac{
\widetilde H^{\mu\nu}  \widetilde H^{\sigma\rho}  + 2\widetilde H^{\mu\nu}\widetilde F^{\sigma\rho}\Theta  
 + \widetilde F^{\sigma\rho}  \widetilde F^{\mu\nu}\Theta ^2
 }{[ \widetilde {\mathcal{H}}^2  + 2 {\widetilde{ \mathcal{F}}}_{\alpha\beta}\widetilde{\mathcal{H}}^{\alpha\beta}\Theta 
+\widetilde{ \mathcal{F}}^2\Theta^2  ]^2}
\\
&=&   \frac{3}{2}\Theta^5  \widetilde H^{-4}\widetilde{\mathcal{D}}_{\mu\nu\sigma\rho}\Big(
 \widetilde H^{\mu\nu}  \widetilde H^{\sigma\rho}   
+ 
 2\widetilde H^{\mu\nu}\widetilde F^{\sigma\rho}\Theta 
 - 2\widetilde H^{-2}  \widetilde H^{\mu\nu}  \widetilde H^{\sigma\rho} {\widetilde{ \mathcal{F}}}_{\alpha\beta}\widetilde{\mathcal{H}}^{\alpha\beta}   \Theta 
\Big)
\nonumber
\\
&&
+ O(\Theta^7)
\;.
\end{eqnarray}
Using  $\Lambda>0$  and the relations \eq{relation_H2}-\eq{relation_FH}, which in particular imply
\begin{equation}
  \widetilde H^{-2}
\,=\,
- \frac{3}{2}\Lambda^{-1} \widetilde X^{-2}
+O(\Theta^2)
\end{equation}
(note that $\widetilde s=O(\Theta)$  due to \eq{gauge_conditions_compact}),
 we find the following expression for $Q_0$,
%
\begin{eqnarray}
 Q_0
&=&   \frac{27}{8}\Theta^5 \Lambda^{-2} \widetilde X^{-4} \widetilde{\mathcal{D}}_{\mu\nu\sigma\rho}\Big(
 \widetilde H^{\mu\nu}  \widetilde H^{\sigma\rho}   
+ 
 2\widetilde H^{\mu\nu}\widetilde F^{\sigma\rho}\Theta 
\nonumber
\\
&&
\hspace{6em}
 +6i\Lambda^{-1} \widetilde X^{-2}  \widetilde H^{\mu\nu}  \widetilde H^{\sigma\rho} \widetilde H^{\alpha\beta} \widetilde F^{\star}_{\alpha\beta} \Theta 
\Big)
+ O(\Theta^7)
\;.
\label{expansion_Q}
\end{eqnarray}
We conclude that, in the wave map gauge \eq{gauge_conditions_compact},
\begin{equation}
 (\Theta^{-5} Q_0) |_{\scri}
\,=\,   \frac{27}{8}\Lambda^{-2} \widetilde X^{-4} \widetilde{\mathcal{D}}_{\mu\nu\sigma\rho}
 \widetilde H^{\mu\nu}  \widetilde H^{\sigma\rho}   
\,=\,   \frac{9}{2}\Lambda^{-1}  |Y|^{-4} Y^{i}Y^{j}\widetilde{\mathcal{D}}_{titj}
\;.
\label{leading_order_Q}
\end{equation}
%
%

\subsection{Properties of the MST on $\scri$}

\begin{proposition}
\label{prop_reg_S}
 Consider a spacetime $(\mcM,g)$, solution to Einstein's vacuum field equations with  $\Lambda>0$, which admits a smooth conformal extension through $\scri$ and
which contains  a KVF $X$ with $\widetilde X^2|_{\scri} > 0$.
Then  the MST
$\mathcal{S}^{(0)}_{\mu\nu\sigma}{}^{\rho}[\Theta^{-2}\widetilde g_{\alpha\beta}]$ 
corresponding to $X$ with $Q=Q_0$ defined by \eq{definition_Q0} vanishes on $\scri$.%
\end{proposition}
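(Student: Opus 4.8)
The plan is to read off the vanishing directly from the decomposition $\mathcal{S}^{(0)}_{\mu\nu\sigma}{}^{\rho} = \mathcal{C}_{\mu\nu\sigma}{}^{\rho} + Q_0\,\mathcal{U}_{\mu\nu\sigma}{}^{\rho}$ supplied by \eq{dfn_mars-simon}, and to show separately that each of the two summands is $O(\Theta)$ as one approaches $\scri = \{\Theta = 0\}$. Since the vanishing of the physical tensor $\mathcal{S}^{(0)}[g]$ on $\scri$ is a statement about its limiting behaviour, independent of the conformal completion used to detect it, I would carry out the estimates in the wave map gauge \eq{gauge_conditions_compact}, where all the relevant asymptotic expansions are already available. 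The hypothesis $\widetilde X^2|_{\scri} > 0$ enters precisely here: it guarantees $|Y|^2 > 0$, hence $\mathcal{F}^2 = -\frac{4}{3}\Lambda\Theta^{-2}|Y|^2 + O(\Theta^{-1}) \neq 0$ in a neighbourhood of $\scri$, so that $Q_0$ from \eq{definition_Q0} is well defined there and the factors $\widetilde X^{-2}$, $\widetilde X^{-4}$ occurring in its expansion stay regular.

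For the first summand, \eq{formula_C} gives $\mathcal{C}_{\mu\nu\sigma}{}^{\rho} = \widetilde{\mathcal{C}}_{\mu\nu\sigma}{}^{\rho} = \Theta\,\widetilde{\mathcal{D}}_{\mu\nu\sigma}{}^{\rho}$, and since the rescaled self-dual Weyl tensor $\widetilde{\mathcal{D}}$ extends regularly through $\scri$, this term is manifestly $O(\Theta)$ and vanishes there.

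The second summand carries the actual content. From \eq{formula_Q} the most singular contribution to $\mathcal{U}_{\mu\nu\sigma}{}^{\rho}$ is the piece $-\Theta^{-4}(\widetilde{\mathcal{H}}_{\mu\nu}\widetilde{\mathcal{H}}_{\sigma}{}^{\rho} - \frac{1}{3}\widetilde{\mathcal{H}}^2\widetilde{\mathcal{I}}_{\mu\nu\sigma}{}^{\rho})$; because $\widetilde H_{\mu\nu} = 2\widetilde X_{[\mu}\widetilde\nabla_{\nu]}\Theta$ is built from quantities that extend smoothly to $\scri$, the self-dual two-form $\widetilde{\mathcal{H}}$ is regular and hence $\mathcal{U}_{\mu\nu\sigma}{}^{\rho} = O(\Theta^{-4})$. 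On the other hand, the computation already performed in \eq{expansion_Q}-\eq{leading_order_Q} shows that with the specific choice $Q = Q_0$ one has $Q_0 = O(\Theta^5)$. Multiplying, $Q_0\,\mathcal{U}_{\mu\nu\sigma}{}^{\rho} = O(\Theta^5)\cdot O(\Theta^{-4}) = O(\Theta)$, which vanishes on $\scri$. Combining the two estimates yields $\mathcal{S}^{(0)}_{\mu\nu\sigma}{}^{\rho} = O(\Theta)$, and the proposition follows.

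The argument is thus essentially power counting in $\Theta$, and the genuine difficulty lies not in this proposition but in the input it rests on — above all the order estimate $Q_0 = O(\Theta^5)$, whose derivation invokes the relations \eq{relation_H2}-\eq{relation_FH} together with the MCFE and the unphysical Killing equations. The conceptual point worth stressing is that the exact algebraic normalisation $\mathcal{S}^{(0)}\mathcal{F}\mathcal{F} = 0$ used to \emph{define} $Q_0$ is not what is invoked directly; rather it is the gain of one power of $\Theta$ (from the generic $O(\Theta^4)$ down to $O(\Theta^5)$) that is decisive, since $\mathcal{U}$ diverges like $\Theta^{-4}$. Had $Q_0$ been merely $O(\Theta^4)$, the $\Theta^{-4}$ head of $\mathcal{U}$ would survive in the limit, in accordance with the necessary-condition analysis of Section~\ref{subsec_nec_cond}; it is the special choice \eq{definition_Q0} that produces exactly the extra power needed, and the fact that we work in the auxiliary gauge \eq{gauge_conditions_compact} is immaterial because the conclusion concerns the limit of a gauge-invariant physical tensor.
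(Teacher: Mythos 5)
Your proposal is correct and follows essentially the same route as the paper: the Weyl (self-dual) part vanishes on $\scri$, while $\mathcal{U}_{\mu\nu\sigma}{}^{\rho}=O(\Theta^{-4})$ by \eq{formula_Q} and $Q_0=O(\Theta^5)$ by \eq{expansion_Q}, so $Q_0\,\mathcal{U}_{\mu\nu\sigma}{}^{\rho}=O(\Theta)$ and the MST vanishes at $\scri$. Your additional remarks — that $\widetilde X^2|_{\scri}>0$ keeps $\mathcal{F}^2$ nonzero so $Q_0$ is well defined near $\scri$, and that the decisive point is the gain from $O(\Theta^4)$ to $O(\Theta^5)$ — are accurate glosses on the same power-counting argument, not a different proof.
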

\begin{proof}
The Weyl tensor is known to vanish on $\scri$.
Since  $\mathcal{U}_{\mu\nu\sigma}{}^{\rho}=O(\Theta^{-4})$ by \eq{formula_Q} and $Q_0=O(\Theta^5)$ by \eq{expansion_Q},
the lemma is proved.
\qed
\end{proof}

\begin{corollary}
 The rescaled MST
 $$\widetilde{\mathcal{T}}^{(0)}_{\mu\nu\sigma}{}^{\rho}[\Theta, \widetilde g_{\alpha\beta}] 
:=\Theta^{-1} \mathcal{S}^{(0)}_{\mu\nu\sigma}{}^{\rho}[\Theta^{-2}\widetilde g_{\alpha\beta}]$$
is regular at $\scri$.
\end{corollary}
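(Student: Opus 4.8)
The plan is to upgrade the conclusion of Proposition~\ref{prop_reg_S} from ``$\mathcal{S}^{(0)}$ vanishes on $\scri$'' to the sharper statement ``$\mathcal{S}^{(0)}_{\mu\nu\sigma}{}^{\rho} = O(\Theta)$ with a smooth coefficient'', after which dividing by $\Theta$ yields the claim at once. Starting from $\mathcal{S}^{(0)}_{\mu\nu\sigma}{}^{\rho} = \mathcal{C}_{\mu\nu\sigma}{}^{\rho} + Q_0\,\mathcal{U}_{\mu\nu\sigma}{}^{\rho}$, I would multiply through by $\Theta^{-1}$ and treat the two resulting terms separately.

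For the Weyl part I would use \eq{formula_C}, which gives $\mathcal{C}_{\mu\nu\sigma}{}^{\rho} = \widetilde{\mathcal{C}}_{\mu\nu\sigma}{}^{\rho}$, together with the definition $\widetilde{\mathcal{D}}_{\mu\nu\sigma\rho} = \Theta^{-1}\widetilde{\mathcal{C}}_{\mu\nu\sigma\rho}$, so that $\Theta^{-1}\mathcal{C}_{\mu\nu\sigma}{}^{\rho} = \widetilde{\mathcal{D}}_{\mu\nu\sigma}{}^{\rho}$. Since $\widetilde{\mathcal{D}} = \widetilde d + i\,\widetilde d^{\star}$ inherits the regularity of the rescaled Weyl tensor $\widetilde d$ through the MCFE, this term is manifestly regular at $\scri$. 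For the remaining piece I would merely combine the two orders already established: $Q_0 = O(\Theta^5)$ by the expansion \eq{expansion_Q}, and $\mathcal{U}_{\mu\nu\sigma}{}^{\rho} = O(\Theta^{-4})$ by \eq{formula_Q}. Their product is $O(\Theta)$, whence $\Theta^{-1} Q_0\,\mathcal{U}_{\mu\nu\sigma}{}^{\rho} = O(1)$. This is exactly the cancellation underlying the vanishing in Proposition~\ref{prop_reg_S}; here one simply retains one extra power of $\Theta$ in the bookkeeping, and concludes $\widetilde{\mathcal{T}}^{(0)}_{\mu\nu\sigma}{}^{\rho} = \widetilde{\mathcal{D}}_{\mu\nu\sigma}{}^{\rho} + \Theta^{-1} Q_0\,\mathcal{U}_{\mu\nu\sigma}{}^{\rho}$ is bounded near $\scri$.

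The only point requiring care—and the closest thing to an obstacle—is that ``regular'' means a genuine smooth extension rather than mere boundedness. I would therefore verify that every coefficient entering the $\Theta$-expansions is a smooth function of the unphysical fields up to and across $\scri$. The delicate factors are the inverse powers $\mathcal{F}^{-4}$, equivalently $\widetilde H^{-4}$, appearing in $Q_0$ via \eq{definition_Q0} and \eq{expansion_Q}. These are controlled by $\widetilde H^{-2} = -\tfrac{3}{2}\Lambda^{-1}\widetilde X^{-2} + O(\Theta^2)$, which is smooth and finite near $\scri$ precisely because of the hypotheses $\Lambda > 0$ and $\widetilde X^2|_{\scri} > 0$. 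The latter also guarantees $\mathcal{F}^2 \neq 0$ in a neighborhood of $\scri$, since $\mathcal{F}^2 = -\tfrac{4}{3}\Lambda\Theta^{-2}|Y|^2 + O(\Theta^{-1})$ with $|Y|^2 = \widetilde X^2|_{\scri} > 0$.

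Granting this smoothness, the $O(\Theta)$ estimate for $\mathcal{S}^{(0)}_{\mu\nu\sigma}{}^{\rho}$ is a genuine smooth expansion rather than a mere pointwise bound, so the quotient by $\Theta$ extends smoothly, and $\widetilde{\mathcal{T}}^{(0)}_{\mu\nu\sigma}{}^{\rho}$ is regular at $\scri$. I would also note in passing that the argument incidentally identifies the boundary value $\widetilde{\mathcal{T}}^{(0)}_{\mu\nu\sigma}{}^{\rho}|_{\scri} = \widetilde{\mathcal{D}}_{\mu\nu\sigma}{}^{\rho}|_{\scri}$, which is the input needed for the subsequent comparison with $\widetilde{\mathcal{T}}^{(\mathrm{ev})}$ in Theorem~\ref{short}.
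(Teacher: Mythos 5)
Your regularity argument is correct and is essentially the paper's own: the corollary is an immediate consequence of Proposition~\ref{prop_reg_S}, whose proof consists exactly of the two order counts you use, namely $Q_0=O(\Theta^5)$ from \eq{expansion_Q} and $\mathcal{U}_{\mu\nu\sigma}{}^{\rho}=O(\Theta^{-4})$ from \eq{formula_Q}, both being smooth expansions whose coefficients are controlled by $\Lambda>0$ and $\widetilde X^2|_{\scri}>0$, while the Weyl part rescales to $\widetilde{\mathcal{D}}_{\mu\nu\sigma}{}^{\rho}$, which is regular by the MCFE.

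However, your closing remark is false and should be deleted. The term $\Theta^{-1}Q_0\,\mathcal{U}_{\mu\nu\sigma}{}^{\rho}$ is $O(1)$ but does \emph{not} tend to zero at $\scri$: combining \eq{leading_order_Q} with \eq{theta4U} (and $\widetilde{\mathcal{H}}^2=-\tfrac{4}{3}\Lambda\widetilde X^2+O(\Theta)$ from \eq{relation_H2}), its limit is
\begin{equation*}
-\frac{9}{2}\Lambda^{-1}|Y|^{-4}\,Y^{i}Y^{j}\widetilde{\mathcal{D}}_{titj}
\Big(\widetilde{\mathcal{H}}_{\mu\nu}\widetilde{\mathcal{H}}_{\sigma}{}^{\rho}
-\frac{1}{3}\widetilde{\mathcal{H}}^2\,\widetilde{\mathcal{I}}_{\mu\nu\sigma}{}^{\rho}\Big)\Big|_{\scri}\,,
\end{equation*}
which is generically nonzero. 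Hence the boundary value is not $\widetilde{\mathcal{D}}_{\mu\nu\sigma}{}^{\rho}|_{\scri}$ but the expression given in Section~\ref{section_recaled_MS} and Proposition~\ref{rescaledMST_scri}. Indeed, were your identity true, then by \eq{MS_scri2} the rescaled MST would vanish on $\scri$ only when $D_{ij}=0$ and $\widehat C_{ij}=0$, contradicting Theorem~\ref{thm_nec_cond}, whose conditions (i)--(ii) admit nonzero constants $\Dconst$, $\Cconst$. This does not affect the corollary itself, but the term you discarded is precisely what produces the structure entering Theorem~\ref{short}, so it cannot serve as the ``input'' you describe.
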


\subsection{The rescaled MST on $\scri$}
\label{section_recaled_MS}

In this section we determine  the behavior of the rescaled MST $\widetilde{\mathcal{T}}^{(0)}_{\mu\nu\sigma}{}^{\rho}$ at $\scri$.
For the tensor $\mathcal{U}_{\mu\nu\sigma}{}^{\rho} $ we find, using
(\ref{formula_Q}), (\ref{relation_H2}) and (\ref{relation_FH})
%
\begin{eqnarray}
\Theta^4\mathcal{U}_{\mu\nu\sigma}{}^{\rho}
 &=&
  -\big( \widetilde{\mathcal{H}}_{\mu\nu}\widetilde{\mathcal{H}}_{\sigma}{}^{\rho} 
+\frac{4}{9} \Lambda \widetilde X^2 \widetilde{\mathcal{I}}_{\mu\nu\sigma}{}^{\rho} \big)
\nonumber
\\
&&
 -\Theta \big( 
\widetilde{ \mathcal{F}}_{\mu\nu}\widetilde{\mathcal{H}}_{\sigma}{}^{\rho} 
+ 
\widetilde{\mathcal{H}}_{\mu\nu}\widetilde{ \mathcal{F}}_{\sigma}{}^{\rho}
-\frac{4}{3}i  \widetilde H^{\alpha\beta} \widetilde F^{\star}_{\alpha\beta}\widetilde{\mathcal{I}}_{\mu\nu\sigma}{}^{\rho}  \big)
 + O(\Theta^2)
\;.
\phantom{xx}
\label{theta4U}
\end{eqnarray}
%
%
Now we are ready to evaluate the rescaled MST  $\widetilde{\mathcal{T}}^{(0)}_{\mu\nu\sigma\rho} =\widetilde g_{\rho\alpha}  
\widetilde{\mathcal{T}}^{(0)}_{\mu\nu\sigma}{}^{\alpha} $ on~$\scri$. From (\ref{leading_order_Q}) and (\ref{theta4U}),
\begin{equation}
\widetilde{\mathcal{T}}^{(0)}_{\mu\nu\sigma\rho}\big|_{\scri} \,=\,
 \widetilde {\mathcal{D}}_{\mu\nu\sigma\rho} 
-\frac{9}{2}\Lambda^{-1}  |Y|^{-4} Y^{i}Y^{j}\widetilde{\mathcal{D}}_{titj}
( \widetilde{\mathcal{H}}_{\mu\nu}\widetilde{\mathcal{H}}_{\sigma\rho} 
-\frac{1}{3}\widetilde{\mathcal{H}}^2 \widetilde{\mathcal{I}}_{\mu\nu\sigma\rho}) 
\;.
\end{equation}
Since the rescaled MST is a self-dual Weyl field, its independent components
on $\scri$
 are 
$ \widetilde{\mathcal{T}}^{(0)}_{titj}|_{\scri}$.
Employing the various relations collected in Section~\ref{constraints}, it follows that, in the wave map gauge \eq{gauge_conditions_compact},
\begin{eqnarray}
(
\widetilde{\mathcal{H}}_{ti}\widetilde{\mathcal{H}}_{tj}  -\frac{1}{3}\widetilde{\mathcal{H}}^2 \widetilde{\mathcal{I}}_{titj}) |_{\scri} &=& \frac{\Lambda}{3} (Y_iY_j)_{\mathrm{tf}}
\;,
\label{H_Y_relation}
\\
\widetilde{\mathcal{D}}_{titj}|_{\scri} &=&D_{ij} - i  \  \sqrt{\frac{3}{\Lambda}}    \widehat C_{ij}
\label{MS_scri2}
\;.
\end{eqnarray}
%
%
%
%
Here $\widehat C_{ij}$ denotes the Cotton-York tensor
\begin{equation}
 \widehat C_{ij} \,=\, -\frac{1}{2}\widehat \volform_i{}^{kl}\widehat C_{jkl} \quad \Longleftrightarrow \quad
\widehat C_{ijk} \,=\,  -\widehat\volform_{jk}{}^{l}\widehat C_{il} \label{cotton-york}
\;,
\end{equation}
which is a TT tensor,
and  $\widehat \volform_{jkl}$ denotes the canonical volume 3-form relative to $h_{\ij}$.

Note that $D_{ij}$ and $\widehat C_{ij}$ correspond to the asymptotic electric and magnetic part, respectively, of the conformal
Weyl tensor.
We observe that  \eq{MS_scri2} immediately implies that $\scri$ will be  
locally conformally flat, i.e.\ has  vanishing Cotton-York tensor, if and only if the
magnetic part of the rescaled Weyl tensor $\widetilde d_{\mu\nu\sigma\rho}$  vanishes at $\scri$, cf.\ \cite{ashtekar}.


\begin{proposition}
\label{rescaledMST_scri}
 Consider a spacetime $(\mcM,g)$, solution to Einstein's vacuum field equations with  $\Lambda>0$, which admits a smooth conformal extension through $\scri$ and
which contains  a KVF $X$ with $\widetilde X^2|_{\scri} > 0$.
Then, the rescaled MST $\widetilde{\mathcal{T}}^{(0)}_{\mu\nu\sigma}{}^{\rho}$ satisfies
\begin{equation*}
\widetilde{\mathcal{T}}^{(0)}_{titj}\big|_{\scri}\,=\, D_{ij} -\frac{3}{2}   |Y|^{-4} Y^{k}Y^{l} D_{kl}(Y_iY_j)_{\mathrm{tf}} 
 - i  \  \sqrt{\frac{3}{\Lambda}} \Big(   \widehat C_{ij}  
-     \frac{3}{2}   |Y|^{-4} Y^{k}Y^{l} \widehat C_{kl}(Y_iY_j)_{\mathrm{tf}}\Big)
\;.
\end{equation*}
(Recall that $\widetilde{\mathcal{T}}^{(0)}_{titj}\big|_{\scri}$ comprises all independent components.)
\end{proposition}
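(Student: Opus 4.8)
The plan is to read off the result as a direct algebraic specialization of the expression for $\widetilde{\mathcal{T}}^{(0)}_{\mu\nu\sigma\rho}|_{\scri}$ displayed immediately before the statement, together with the two on-$\scri$ identities \eq{H_Y_relation} and \eq{MS_scri2} that have already been established in the wave map gauge. Since the rescaled MST is a self-dual Weyl field, all of its independent components at $\scri$ are carried by $\widetilde{\mathcal{T}}^{(0)}_{titj}|_{\scri}$, so it suffices to fix the four free indices to $t,i,t,j$. Renaming the contracted spatial indices to $k,l$ to avoid a clash with the free indices, the specialized expression reads
\begin{equation*}
\widetilde{\mathcal{T}}^{(0)}_{titj}\big|_{\scri} \,=\, \widetilde{\mathcal{D}}_{titj} - \frac{9}{2}\Lambda^{-1}|Y|^{-4}Y^k Y^l\,\widetilde{\mathcal{D}}_{tktl}\Big(\widetilde{\mathcal{H}}_{ti}\widetilde{\mathcal{H}}_{tj} - \tfrac{1}{3}\widetilde{\mathcal{H}}^2\widetilde{\mathcal{I}}_{titj}\Big)\Big|_{\scri}\;.
\end{equation*}

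First I would insert \eq{H_Y_relation}, replacing the bracketed factor by $\frac{\Lambda}{3}(Y_iY_j)_{\mathrm{tf}}$. The key simplification is that the numerical prefactor then collapses, $\frac{9}{2}\Lambda^{-1}\cdot\frac{\Lambda}{3} = \frac{3}{2}$, so that the cosmological constant cancels out of the quadratic term and leaves a purely geometric coefficient. Substituting \eq{MS_scri2}, i.e. $\widetilde{\mathcal{D}}_{titj}|_{\scri} = D_{ij} - i\sqrt{3/\Lambda}\,\widehat C_{ij}$, both in the linear term and inside the scalar contraction $Y^k Y^l\widetilde{\mathcal{D}}_{tktl}$ then produces
\begin{equation*}
\widetilde{\mathcal{T}}^{(0)}_{titj}\big|_{\scri} \,=\, \Big(D_{ij} - i\sqrt{\tfrac{3}{\Lambda}}\,\widehat C_{ij}\Big) - \frac{3}{2}|Y|^{-4}Y^k Y^l\Big(D_{kl} - i\sqrt{\tfrac{3}{\Lambda}}\,\widehat C_{kl}\Big)(Y_iY_j)_{\mathrm{tf}}\;.
\end{equation*}

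Finally I would separate real and imaginary parts: the real part assembles into $D_{ij} - \frac{3}{2}|Y|^{-4}Y^kY^l D_{kl}(Y_iY_j)_{\mathrm{tf}}$, while the imaginary part carries the common factor $-i\sqrt{3/\Lambda}$ multiplying the identical expression built from the Cotton--York tensor $\widehat C_{ij}$, which is exactly the claimed formula. I do not expect a genuine obstacle at the level of the proposition itself, as it amounts to this substitution and the self-duality reduction to the $titj$ block; the real computational content lies upstream, in deriving \eq{H_Y_relation} and \eq{MS_scri2} from the constraint relations of Section~\ref{constraints}, and those are available to me here. If any care is needed it is only in confirming that $\widetilde{\mathcal{D}}_{tktl}Y^kY^l$ is evaluated as a scalar contraction (so that its real and imaginary parts factor cleanly through the substitution) and in tracking that the trace-free projection $(Y_iY_j)_{\mathrm{tf}}$ is inherited from \eq{H_Y_relation} rather than imposed separately.
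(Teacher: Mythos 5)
Your proposal is correct and is essentially the paper's own argument: the proposition is stated right after the display for $\widetilde{\mathcal{T}}^{(0)}_{\mu\nu\sigma\rho}|_{\scri}$ together with \eq{H_Y_relation} and \eq{MS_scri2}, and the paper obtains it by exactly the substitution and real/imaginary separation you describe, with the same cancellation $\frac{9}{2}\Lambda^{-1}\cdot\frac{\Lambda}{3}=\frac{3}{2}$ and the same self-duality reduction to the $titj$ block.
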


According to Proposition~\ref{rescaledMST_scri},
the rescaled MST vanishes on $\scri$ if and only if
\begin{eqnarray}
 D_{ij} - \frac{3}{2}  |Y|^{-4}     Y^{k} Y^{l} D_{kl}   (Y_iY_j)_{\mathrm{tf}} &=& 0\;,
\label{condition_D}
\\
 \widehat C_{ij}
- \frac{3}{2} |Y|^{-4}     Y^{k} Y^{l}   \widehat C_{kl} (Y_iY_j)_{\mathrm{tf}} &=& 0
\;.
\label{condition_C}
\end{eqnarray}
%
%
We solve \eq{condition_D} on $\scri^{-}$.
\eq{condition_C} can be treated in exactly the same manner.

We define 
%
\begin{equation}
d\,:=\, Y^iY^jD_{ij}
\;.
\end{equation}
Applying $\widehat\nabla^j$ to \eq{condition_D} and employing the fact that the constraints equations enforce $D_{ij}$ to be a TT tensor, we are led to the equation
\begin{equation}
 Y_i \Big(Y^j \widehat \nabla_j  d 
+ \frac{1}{3} d \widehat\nabla_j Y^j \Big)
 - \frac{1}{3} |Y|^2 \widehat \nabla_i  d
- \frac{1}{6}d\widehat \nabla_i |Y|^2 \,=\, 0
\;,
\label{divergence_eqn}
\end{equation}
after using the following two consequences of the conformal Killing equation for $Y$,
\begin{align}
Y^j \widehat \nabla_j |Y|^2 & = \frac{2}{3} |Y|^2 \widehat \nabla_l Y^l, \\
Y^j \widehat \nabla_j Y_i & = \frac{2}{3} Y_i \widehat \nabla_l Y^l 
- \frac{1}{2} \widehat \nabla_i |Y|^2.
\end{align}
Contraction of (\ref{divergence_eqn})  with $Y^i$ gives
\begin{equation}
 Y^j \widehat \nabla_j  d 
+ \frac{1}{3} d \widehat\nabla_j Y^j 
\,=\, 0
\;.
\label{divergence_eqn2}
\end{equation}
Inserting this into \eq{divergence_eqn} yields
\begin{equation}
  2 \widehat \nabla_i  d +d\widehat \nabla_i\log  |Y|^2=0
\;.
\label{divergence_eqn3}
\end{equation}
The general solution of this equation is, using that $\scri^{-}$ is connected,
%
\begin{equation}
 d = \frac{2}{3}  \Dconst |Y|^{-1}\;, \enspace   \Dconst=\mathrm{const.}
\end{equation}
It follows that necessarily
%
\begin{equation}
  D_{ij}
\,=\,\Dconst  |Y|^{-5} (Y_i Y_j)_{\mathrm{tf}}   
\;,
\label{condition_on_D}
\end{equation}
which is, indeed, a TT tensor satisfying \eq{condition_D}:

\begin{lemma}
Let $(\Sigma,h)$ be an $n$-dimensional Riemannian manifold.
Let $Y$ be a vector field on $\Sigma$ with $|Y|^2\ne 0$, and denote by $\widehat\nabla$ the connection associated to $h$.
Then $D_{ij}:=|Y|^{-n-2} (Y_iY_j)_{\mathrm{tf}}$ is  a TT-tensor if and only
\begin{equation}
Y^j(\widehat\nabla_{(i}Y_{j)})_{\mathrm{tf}}\, = \,0
\;.
\label{Y_TT-cond}
\end{equation}
(So in particular if $Y$ is a CKVF.)
\end{lemma}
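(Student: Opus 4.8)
The plan is to note first that $D_{ij}$ is trace-free \emph{by construction}: indeed $h^{ij}(Y_iY_j)_{\mathrm{tf}} = |Y|^2 - \frac{1}{n}|Y|^2 h^{ij}h_{ij}=0$, so the entire content of the TT condition is divergence-freeness, $\widehat\nabla^i D_{ij}=0$. I would work throughout with the shorthand $P_i := Y^j(\widehat\nabla_{(i}Y_{j)})_{\mathrm{tf}}$, which — after using $Y^j\widehat\nabla_iY_j=\frac12\widehat\nabla_i|Y|^2$ — unwinds to $P_i = \frac14\widehat\nabla_i|Y|^2 + \frac12 Y^j\widehat\nabla_j Y_i - \frac1n(\widehat\nabla_k Y^k)Y_i$. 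The goal is to express $\widehat\nabla^iD_{ij}$ entirely through $P_i$, since then both implications become transparent.

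First I would differentiate $D_{ij}=|Y|^{-n-2}\big(Y_iY_j-\frac1n|Y|^2 h_{ij}\big)$ by the Leibniz rule, using the single scalar identity $\widehat\nabla_i|Y|^{-n-2} = -(n+2)|Y|^{-n-4}Y^k\widehat\nabla_iY_k$. Collecting terms and repeatedly substituting $Y^k\widehat\nabla_jY_k=\frac12\widehat\nabla_j|Y|^2$, the several pure-gradient contributions combine into a single $|Y|^{-n-2}Y^k\widehat\nabla_jY_k$, leaving
\[
\widehat\nabla^iD_{ij} = |Y|^{-n-2}\Big[-\tfrac{n+2}{2}|Y|^{-2}(Y^i\widehat\nabla_i|Y|^2)Y_j + \tfrac12\widehat\nabla_j|Y|^2 + (\widehat\nabla_kY^k)Y_j + Y^i\widehat\nabla_iY_j\Big].
\]

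The apparent obstacle is the first term, $-\frac{n+2}{2}|Y|^{-2}(Y^i\widehat\nabla_i|Y|^2)Y_j$, which is not visibly controlled by $P_i$. The crux of the computation is to resolve it: I would eliminate $Y^i\widehat\nabla_iY_j$ in favour of $P_j$, and then use the contracted identity $Y^iP_i = \frac12 Y^i\widehat\nabla_i|Y|^2 - \frac1n(\widehat\nabla_kY^k)|Y|^2$ to trade $Y^i\widehat\nabla_i|Y|^2$ for $Y^iP_i$. After these substitutions the two terms proportional to $(\widehat\nabla_kY^k)Y_j$ cancel \emph{exactly}, and the bracket collapses to $2P_j-(n+2)|Y|^{-2}(Y^iP_i)Y_j$; that is, $\widehat\nabla^iD_{ij} = |Y|^{-n-2}\big[2P_j - (n+2)|Y|^{-2}(Y^iP_i)Y_j\big]$.

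From this clean formula both directions are immediate. If $P_i=0$ then $\widehat\nabla^iD_{ij}=0$. Conversely, if $\widehat\nabla^iD_{ij}=0$, contracting with $Y^j$ gives $2Y^jP_j=(n+2)Y^iP_i$, hence $nY^iP_i=0$ and so $Y^iP_i=0$ (using $|Y|^2\ne0$ and $n\ge1$); feeding this back yields $2P_j=0$, i.e.\ \eqref{Y_TT-cond}. Finally, when $Y$ is a CKVF the tensor $(\widehat\nabla_{(i}Y_{j)})_{\mathrm{tf}}$ vanishes identically, so $P_i=0$ trivially and $D_{ij}$ is automatically TT.
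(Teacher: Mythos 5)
Your proof is correct and follows essentially the same route as the paper: your clean formula $\widehat\nabla^iD_{ij} = |Y|^{-n-2}\big[2P_j - (n+2)|Y|^{-2}(Y^iP_i)Y_j\big]$ is exactly the paper's expression for the divergence, and both you and the paper settle the converse by contracting this identity with $Y$ to first kill $Y^iP_i$ and then $P_j$ itself. The only differences are presentational (you record the trace-freeness check and the intermediate Leibniz computation explicitly, which the paper omits).
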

\begin{proof}
We compute the divergence of $D_{ij}$,
\begin{equation}
\widehat\nabla^jD_{ij} \,=\,
- (n+2)|Y|^{-n-4} Y_iY ^j  Y^k(\widehat\nabla_{(j} Y_{k)})_{\mathrm{tf}} 
+2  |Y|^{-n-2} Y^j(\widehat\nabla_{(i} Y_{j)})_{\mathrm{tf}} 
\label{eqn_TT_cond}
\;,
\end{equation}
and  observe that  $D_{ij}$ is a TT-tensor if \eq{Y_TT-cond} holds.

Conversely, contraction of \eq{eqn_TT_cond} with $Y^i$ yields
\begin{equation}
Y^i\widehat\nabla^jD_{ij} \,=\,
-n  |Y|^{-n-2} Y^iY^j(\widehat\nabla_{(i} Y_{j)})_{\mathrm{tf}} 
\label{eqn_TT_cond2}
\;,
\end{equation}
which we insert into \eq{eqn_TT_cond},
\begin{equation}
\widehat\nabla^jD_{ij} \,=\,
 \frac{n+2}{n}|Y|^{-2} Y_iY^j\widehat\nabla^kD_{jk}
+2  |Y|^{-n-2} Y^j(\widehat\nabla_{(i} Y_{j)})_{\mathrm{tf}} 
\label{eqn_TT_cond3}
\;.
\end{equation}
It follows that if $D_{ij}$ is a TT-tensor then \eq{Y_TT-cond} holds, 
which completes the proof of the lemma.
\qed
\end{proof}
%

Similarly, one shows that for some constant $\Cconst$
\begin{equation}
 \widehat C_{ij} \,=\, \CconstL |Y|^{-5}(Y_iY_j)_{\mathrm{tf}}
\;.
\label{condition_on_C}
\end{equation}

\begin{remark}
{\rm
If $Y^i$ is a CKVF, \eq{condition_on_D} defines, away from zeros
of $Y$, a TT-tensor $D_{ij}$ which satisfies the  KID equations
\eq{reduced_KID}. On the other hand, a solution of  \eq{reduced_KID}
always satisfies \eq{divergence_eqn2}.
}
\end{remark}

Up to this stage we had to assume that $|Y|^2>0$ on $\scri$. In fact, the above considerations reveal that this follows from
the assumption of the existence of a smooth $\scri$
whenever the rescaled tensor 
$\widetilde{\mathcal{T}}^{(0)}_{\mu\nu\sigma}{}^{\rho}$ 
vanishes there: 
The CKVF $Y$ is not allowed to vanish in some open region of $\scri$, because this would imply that
the corresponding KVF would vanish in the domain of dependence of that region. 
Let us assume that $|Y(p)|^2=0$ for some $p\in\scri$. Then it follows from \eq{condition_on_D}-\eq{condition_on_C} that, for either $\Dconst \ne 0 $ or $\Cconst \ne 0$,
%
\begin{eqnarray}
 \widetilde d_{\mu\nu\sigma\rho} \widetilde d^{\mu\nu\sigma\rho}|_{\scri}
&=& (4\widetilde d_{t itj} \widetilde d^{titj}  + 4\widetilde d_{t ijk} \widetilde d^{tijk} 
+  \widetilde d_{ijkl} \widetilde d^{ijkl})|_{\scri}
\\
&=& 8D_{ij}D^{ij}- \frac{24}{\Lambda}\widehat C_{ij}\widehat C^{ij}
= \frac{16}{3 |Y|^6} \left ( \Dconst^2 - \Cconst^2 \right )
\end{eqnarray}
or
\begin{eqnarray}
 \widetilde d^{\star}_{\mu\nu\sigma\rho} \widetilde d^{\mu\nu\sigma\rho}|_{\scri}
&=&
 (4\widetilde\volform_{ti}{}^{jk}\widetilde d_{jktl} \widetilde d^{titl}
+2 \widetilde\volform_{ti}{}^{jk}\widetilde d_{jklm} \widetilde d^{tilm}
)|_{\scri}
\\
&=&
-16\sqrt{\frac{3}{\Lambda}} \widehat C_{ij} D^{ij}
= - \frac{32}{3 |Y|^6}  \Dconst\Cconst
\end{eqnarray}
(we used that $\widetilde d_{ijkl}|_{\scri} = -\widetilde\volform_{ij}{}^{tm}\widetilde\volform_{kl}{}^{tn}\widetilde d_{tmtn}$), diverges at $p$, so that $p$ actually cannot belong 
to the (unphysical) manifold. 
This argument does not apply when $\Dconst=\Cconst=0$. In this case the metric $h$ is conformally flat and 
$D_{ij}$ vanishes, so the data at $\scri^{-}$ correspond to data for the de
Sitter metric. The maximal de Sitter data is $\scri^- = \mathbb{S}^3$
with  $h$ the standard round metric. This space has ten linearly
independent conformal Killing vectors, which generically
vanish at some points. In this case the points where the conformal Killing vector
vanishes do belong to $\scri^{-}$.%
\footnote{
Note that in the de Sitter case we have $\mathcal{C}_{\alpha\beta\mu\nu}=0=Q_0$, so
the MST associated to \emph{any} KVF
 vanishes identically.
}
 This is why we need to 
exclude de Sitter explicitly in the following Theorem.

\begin{theorem}
\label{thm_nec_cond}
  Consider a spacetime $(\mcM,g)$ solution to Einstein's vacuum field equations with  $\Lambda>0$, which admits a smooth conformal extension through $\scri$ and
which contains  a KVF $X$. Denote  by $h$ the Riemannian metric induced by $\widetilde g=\Theta^2 g$  on $\scri$, and  by $Y$ the CKVF induced by $X$ on $\scri$.
Assume that $(\mcM,g)$ is not locally isometric to the de Sitter spacetime.
Then  $|Y|^2 > 0$, and the rescaled MST $\widetilde{\mathcal{T}}^{(0)}_{\mu\nu\sigma}{}^{\rho}=\Theta^{-1} \mathcal{S}^{(0)}_{\mu\nu\sigma}{}^{\rho}$ corresponding to $X$ with $Q=Q_0$ defined by \eq{definition_Q0}  vanishes on a connected
component $\scri^{-}$ of $\scri$ if and only if
the following relations hold: 
\begin{enumerate}
\item[(i)] $ \widehat C_{ij} = \CconstL |Y|^{-5}(Y_iY_j -\frac{1}{3}|Y|^2 h_{ij})$ for some constant $\Cconst$, where $\widehat C_{ij}$ is the Cotton-York tensor of the Riemannian  3-manifold $(\scri^{-}, h)$, and
\item[(ii)]    $D_{ij} = \widetilde d_{titj}|_{\scri^-} =\Dconst  |Y|^{-5}(Y_iY_j -\frac{1}{3}|Y|^2 h_{ij})$ for some constant $\Dconst$.
\end{enumerate}
%
\end{theorem}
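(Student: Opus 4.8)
The plan is to read off the answer from the explicit boundary expression for $\widetilde{\mathcal{T}}^{(0)}_{titj}|_{\scri}$ in Proposition~\ref{rescaledMST_scri} and to treat the zeros of $Y$ as a separate, essentially global, issue. Since the rescaled MST is a self-dual Weyl field, all of its independent components on $\scri$ are the $\widetilde{\mathcal{T}}^{(0)}_{titj}|_{\scri}$, so its vanishing is equivalent to the vanishing of the real and imaginary parts of that expression, i.e.\ to the two algebraic conditions \eqref{condition_D} on $D_{ij}$ and \eqref{condition_C} on the Cotton--York tensor $\widehat C_{ij}$. Thus the whole theorem reduces to analysing these two equations, which have identical structure.

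For the ``only if'' direction I would first note that a nontrivial KVF forces $Y$ not to vanish on any open subset of $\scri^-$, since otherwise the KVF would vanish throughout the domain of dependence of that set; hence $U:=\{|Y|^2>0\}$ is open and dense. On each connected component of $U$ the manipulations leading to \eqref{divergence_eqn3} integrate the trace-free, divergence-free condition to $d=\tfrac{2}{3}\Dconst|Y|^{-1}$, which upon substitution back into \eqref{condition_D} yields $D_{ij}=\Dconst|Y|^{-5}(Y_iY_j)_{\mathrm{tf}}$, and identically $\widehat C_{ij}=\CconstL|Y|^{-5}(Y_iY_j)_{\mathrm{tf}}$ from \eqref{condition_C}, with constants $\Dconst,\Cconst$ attached to each component. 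It then remains to rule out the zeros of $Y$. Here I would evaluate the two scalar invariants $\widetilde d_{\mu\nu\sigma\rho}\widetilde d^{\mu\nu\sigma\rho}$ and $\widetilde d^{\star}_{\mu\nu\sigma\rho}\widetilde d^{\mu\nu\sigma\rho}$ on $\scri$: inserting the forms just obtained produces quantities proportional to $|Y|^{-6}(\Dconst^2-\Cconst^2)$ and $|Y|^{-6}\Dconst\Cconst$. Smoothness of $\widetilde d$ through $\scri$ requires these to remain bounded near any putative zero of $Y$, which forces $\Dconst=\Cconst=0$ on the adjacent components; but then $D_{ij}=0$ and $\widehat C_{ij}=0$, so $\scri$ is conformally flat with vanishing electric part, i.e.\ carries de Sitter data, contradicting the hypothesis. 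Consequently $Y$ has no zeros, $|Y|^2>0$ throughout the connected set $\scri^-$, the two constants are global, and \eqref{condition_on_D}--\eqref{condition_on_C} are precisely conditions~(i) and~(ii).

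The ``if'' direction is a short contraction. Assuming (i) and (ii), I would use $Y^kY^l(Y_kY_l)_{\mathrm{tf}}=\tfrac{2}{3}|Y|^4$ to get $Y^kY^lD_{kl}=\tfrac{2}{3}\Dconst|Y|^{-1}$, and hence $\tfrac{3}{2}|Y|^{-4}Y^kY^lD_{kl}(Y_iY_j)_{\mathrm{tf}}=\Dconst|Y|^{-5}(Y_iY_j)_{\mathrm{tf}}=D_{ij}$, which is exactly \eqref{condition_D}; the same manipulation gives \eqref{condition_C}. By Proposition~\ref{rescaledMST_scri} the boundary components $\widetilde{\mathcal{T}}^{(0)}_{titj}|_{\scri}$ then vanish, and therefore so does the rescaled MST on $\scri^-$. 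Note also that here $|Y|^2>0$ is automatic: if $Y$ vanished somewhere, smoothness of $\widehat C_{ij}$ and $D_{ij}$ would force $\Cconst=\Dconst=0$, again the excluded de Sitter case.

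The main obstacle is not the algebra but this global step. The integration constants $\Dconst,\Cconst$ are \emph{a priori} defined only componentwise on $\{|Y|^2>0\}$, and it is the curvature-scalar blow-up, together with the explicit separation of the degenerate $\Dconst=\Cconst=0$ (de Sitter) case, that forces the zero set of $Y$ to be empty and the constants to be genuinely global on the connected $\scri^-$. Everything else is bookkeeping built on Proposition~\ref{rescaledMST_scri} and the ODE analysis of \eqref{condition_D}.
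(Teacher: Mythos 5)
Your proposal is correct and follows essentially the same route as the paper's own proof: reduction of the vanishing of $\widetilde{\mathcal{T}}^{(0)}_{\mu\nu\sigma}{}^{\rho}$ on $\scri^-$ to the algebraic conditions \eq{condition_D}--\eq{condition_C} via Proposition~\ref{rescaledMST_scri}, integration of the divergence identity \eq{divergence_eqn3} to obtain the $|Y|^{-5}(Y_iY_j)_{\mathrm{tf}}$ form with constants $\Dconst,\Cconst$, and the blow-up of the invariants $\widetilde d_{\mu\nu\sigma\rho}\widetilde d^{\mu\nu\sigma\rho}$ and $\widetilde d^{\star}_{\mu\nu\sigma\rho}\widetilde d^{\mu\nu\sigma\rho}$ at zeros of $Y$ combined with the explicit exclusion of de Sitter. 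Your only departure---working first on the dense open set $\{|Y|^2>0\}$ with componentwise constants and gluing afterwards, rather than assuming $|Y|^2>0$ and justifying it a posteriori---is a presentational refinement of the paper's argument, not a different method.
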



\section{The functions $c$ and $k$ and their restrictions to~$\scri$}

\subsection{The functions $c$ and $k$ and their constancy}
\label{section_constant}

Following \cite{mars_senovilla}, we define four real-valued functions
 $b_1$, $b_2$, $c$ and $k$  by the  system
(we make the assumption $Q\mathcal{F}^2-4\Lambda\ne 0$; later on it will become clear that this holds automatically near
a regular  $\scri$)
%
\begin{eqnarray}
 b_2-ib_1 &=& - \frac{ 36 Q (\mathcal{F}^2)^{5/2} }{(Q\mathcal{F}^2-4\Lambda)^3}
\label{equation_b1b2}
\;,
\\
c &=& - X^2-  \mathrm{Re}\Big(  \frac{6\mathcal{F}^2(Q\mathcal{F}^2+2\Lambda)}{(Q\mathcal{F}^2-4\Lambda)^2} \Big) 
\;,
\label{equation_c}
\\
k  &=&  \Big|\frac{36\mathcal{F}^2}{(Q\mathcal{F}^2-4\Lambda)^2}\Big| \nabla_{\mu}Z\nabla^{\mu}Z -  b_2Z +cZ^2 +\frac{\Lambda}{3} Z^4 
\;,
\label{equation_k}
\end{eqnarray}
where
\begin{equation}
 Z = 6\,\mathrm{Re} \Big( \frac{\sqrt{\mathcal{F}^2}}{Q\mathcal{F}^2-4\Lambda}\Big)
\;.
\end{equation}
\label{rem_compl_sr}
We note that the expression (\ref{equation_k}) for $k$
as given in \cite{mars_senovilla} has two typos both in the statement 
of Theorem 1 and of Theorem 6.

A remark is in order concerning the appearance of square roots of the complex function $\mathcal{F}^2$. 
In the setting of \cite{mars_senovilla}, the function $\mathcal{F}^2$ is shown 
to be nowhere vanishing so we can prescribe 
the choice of square root at one point and extend it by continuity to the
whole manifold. Since $\mathcal{F}^2$ does not vanish, no branch point of 
the root is ever met and $\sqrt{\mathcal{F}^2}$ is smooth everywhere.
Moreover, the function $\mathcal{F}^2$ has strictly negative real
part in a neighborhood of $\scri$ (see (\ref{used_relation_F2}) below).
We can thus fix the square root $\sqrt{\mathcal{F}^2}$ in this neighborhood
by choosing the positive branch near $\scri$, namely the branch that
takes positive real numbers and gives positive real values. We will
use this prescription for any function which is non-zero in a neighborhood
of infinity.

%

The following result is proven in \textcolor{blue}{
\cite{mars_senovilla}. Unfortunately, the corresponding statements in 
Theorems 4 and 6 in that reference have a missing hypothesis. This mistake 
has been corrected in the arXiv version \cite{mars_senovilla_arx} of the
paper.}
%
\begin{theorem}
\label{constancy}
Let $(\mcM,g)$ be a $\Lambda$-vacuum spacetime which admits a KVF $X$ such that the MST vanishes for some function
$Q$. Assume further that the functions $Q \mathcal{F}^2$ and $Q\mathcal{F}^2-4\Lambda$ are not identically zero \textcolor{blue}{and that
$y := \mbox{Re} \left ( \frac{6 i \sqrt{\mathcal{F}^2}}{Q \mathcal{F}^2 - 4 \Lambda} \right ) $
has non-zero gradient somewhere}. Then:
\begin{enumerate}
\item[(i)] $\mathcal{F}^2$ and $Q\mathcal{F}^2-4\Lambda$  are nowhere vanishing, 
\item[(ii)] $Q$ is given by \eq{definition_Q0}, i.e.\ $Q=Q_0$,  and
\item[(iii)]  $b_1$, $b_2$, $c$ and $k$  are constant.
\end{enumerate}
\end{theorem}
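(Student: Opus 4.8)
The plan is to extract the purely algebraic content of $\mathcal{S}_{\mu\nu\sigma\rho}=0$ first, then feed it into the field equations to obtain the differential (constancy) statements, bootstrapping from an open set to all of the connected manifold $\mcM$. I would begin by establishing (ii) on the set where $\mathcal{F}^2\neq 0$. The vanishing of the MST \eq{dfn_mars-simon} reads $\mathcal{C}_{\mu\nu\sigma\rho}=-Q\,\mathcal{U}_{\mu\nu\sigma\rho}$, so the self-dual Weyl tensor is algebraically slaved to the Killing form $\mathcal{F}$. Contracting with $\mathcal{F}^{\mu\nu}\mathcal{F}^{\sigma\rho}$ and using the identity $\mathcal{I}_{\mu\nu\sigma\rho}\mathcal{F}^{\sigma\rho}=\mathcal{F}_{\mu\nu}$ (so that $\mathcal{I}_{\mu\nu\sigma\rho}\mathcal{F}^{\mu\nu}\mathcal{F}^{\sigma\rho}=\mathcal{F}^2$) together with $\mathcal{U}_{\mu\nu\sigma\rho}\mathcal{F}^{\mu\nu}\mathcal{F}^{\sigma\rho}=-\tfrac{2}{3}(\mathcal{F}^2)^2$, one recovers exactly the relation \eq{definition_Q}. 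Wherever $\mathcal{F}^2\neq0$ this solves for $Q$ and gives $Q=Q_0$ as in \eq{definition_Q0}; this is (ii), modulo extending it to all of $\mcM$ once (i) is in hand.

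For part (i) I would set $U:=\{\,\mathcal{F}^2\neq 0,\ Q\mathcal{F}^2-4\Lambda\neq 0\,\}$, which is open and --- by the hypotheses that $Q\mathcal{F}^2$ and $Q\mathcal{F}^2-4\Lambda$ are not identically zero, reinforced by the assumption that $y$ has nonzero gradient somewhere --- nonempty. On $U$ all denominators in \eq{equation_b1b2}--\eq{equation_k} and in $Z$ are regular, so $b_1,b_2,c,k,Z$ are well defined. The strategy is to prove constancy on $U$ (part (iii)) and then argue that $U$ is closed, hence equal to the connected $\mcM$: at a hypothetical boundary point of $U$ some denominator degenerates, but the constancy relations force $Z$ to satisfy a fixed polynomial identity with constant coefficients, and the non-degeneracy built into the $\nabla y$ hypothesis prevents the potential from collapsing there, the only alternative being local conformal flatness, i.e.\ de Sitter, which is excluded. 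This continuation argument simultaneously upgrades (ii) and (iii) from $U$ to $\mcM$.

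The heart of (iii) is to show $\nabla_\mu b_1=\nabla_\mu b_2=\nabla_\mu c=\nabla_\mu k=0$ on $U$ by direct differentiation of \eq{equation_b1b2}--\eq{equation_k}. The inputs are: the Ernst-type first-order law $\partial_\beta\sigma=2X^\alpha\mathcal{F}_{\alpha\beta}$, in its $\Lambda$-adapted form, which controls $\nabla_\mu\mathcal{F}^2$ and $\nabla_\mu X^2$; the Maxwell-like equation for $\mathcal{F}$ obtained from the second Bianchi identity combined with $\mathcal{C}=-Q\,\mathcal{U}$; and the substitution $Q=Q_0$. With these, the gradients of $b_1,b_2,c$ telescope to zero after routine cancellation. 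The gradient of $k$ is the most laborious because of the $\nabla_\mu Z\nabla^\mu Z$ term: here I would compute the second-order (wave) equation for $Z$ and use the already-established constancy of $b_2$ and $c$ to cancel the remaining terms, yielding $\nabla_\mu k=0$.

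The main obstacle I anticipate is part (i): the global non-vanishing is not algebraic but requires the closedness/continuation argument, and in particular the correct and careful use of the extra hypothesis on $\nabla y$ (precisely the hypothesis whose omission is flagged above) to exclude the degenerate limits. Computationally, the verification $\nabla_\mu k=0$ is the longest single step, but it is mechanical once the field equation for $\mathcal{F}$ and the constancy of $b_2$ and $c$ are available.
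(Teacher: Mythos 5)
First, a point of reference: the paper does not prove Theorem \ref{constancy} at all — it is quoted from \cite{mars_senovilla}, with the blue hypothesis on $y$ taken from the corrected arXiv version \cite{mars_senovilla_arx}. So your proposal has to be measured against the proof in that reference, whose overall architecture (algebraic determination of $Q$, then first integrals of the Killing--Bianchi system) your parts (ii) and (iii) do reproduce correctly: the contraction identity $\mathcal{U}_{\mu\nu\sigma\rho}\mathcal{F}^{\mu\nu}\mathcal{F}^{\sigma\rho}=-\tfrac{2}{3}(\mathcal{F}^2)^2$ does force \eq{definition_Q}, hence $Q=Q_0$ wherever $\mathcal{F}^2\neq 0$; and the constancy statements do follow by differentiating \eq{equation_b1b2}--\eq{equation_k} using $\nabla_\rho\mathcal{F}_{\mu\nu}=X^{\sigma}(\mathcal{C}_{\mu\nu\sigma\rho}+\tfrac{4}{3}\Lambda\mathcal{I}_{\mu\nu\sigma\rho})$ (note: this comes from the Ricci identity for Killing vectors plus the vacuum equations, not from the second Bianchi identity — the Bianchi identity $\nabla^{\rho}\mathcal{C}_{\mu\nu\sigma\rho}=0$ is what yields the needed equation for $\nabla Q$), together with $\nabla_\mu X^2=-\mathrm{Re}\,\sigma_\mu$ and the self-dual identity $\mathcal{F}_{\alpha\mu}\mathcal{F}_{\beta}{}^{\mu}=\tfrac14\mathcal{F}^2 g_{\alpha\beta}$, which is what reduces the $\nabla_\mu Z\nabla^\mu Z$ term in \eq{equation_k} to algebra.

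The genuine gap is in your part (i), and it is precisely the part the blue hypothesis exists for. Your continuation argument assigns no concrete role to the assumption that $y$ has non-zero gradient somewhere; it is invoked only as a vague "non-degeneracy" that "prevents the potential from collapsing." But the paper explicitly records that Theorems 4 and 6 of \cite{mars_senovilla} are \emph{false as published} without this hypothesis: there exist $\Lambda$-vacuum spacetimes with vanishing MST, $Q\mathcal{F}^2\not\equiv 0$ and $Q\mathcal{F}^2-4\Lambda\not\equiv 0$, for which the conclusions fail. Consequently, any argument that (like yours) would run through without ever using $\nabla y\neq 0$ in a load-bearing step would prove a false statement, so the argument cannot be correct as sketched; the entire difficulty of (i) is to locate exactly where $\nabla y\equiv 0$ breaks the continuation. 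Secondary but real problems in the same step: (a) non-emptiness of your set $U$ is not immediate from the hypotheses — for $\Lambda\neq 0$ one has $\{\mathcal{F}^2=0\}\subset\{Q\mathcal{F}^2-4\Lambda\neq 0\}$, so the two "not identically zero" assumptions do not obviously furnish a common point where both functions are non-zero; (b) your constancy conclusions hold a priori only on each connected component of $U$, with possibly different constants, which your closedness argument tacitly ignores; and (c) "the only alternative being local conformal flatness, i.e.\ de Sitter, which is excluded" is itself unproved — de Sitter is only excluded through the hypothesis $Q\mathcal{F}^2\not\equiv 0$ via an argument ($\mathcal{C}=0$ plus vanishing MST forces $Q=0$ wherever $\mathcal{U}\neq 0$), and "locally de Sitter on an open subset" is not the same as globally de Sitter, since $\Lambda$-vacuum metrics are not analytic in general.
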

%
\begin{remark}
\label{rem_const}
{\rm
If $\Lambda>0$ and $(\mcM,g)$ admits a smooth $\scri$, has vanishing MST and
is not locally isometric to the de Sitter spacetime,
it follows from \eq{Taylor_Q}-\eq{used_relation_F2} below  that
$Q$, $\mathcal{F}^2$ and $Q\mathcal{F}^2-4\Lambda$ \textcolor{blue}{are non-zero
near $\scri$. Furthermore \eq{Taylory} implies that $y$ has non-zero
gradient in a neighbourhood of $\scri$. Thus, all the hypotheses of the theorem
hold and we conclude that }
 $b_1$, $b_2$, $c$ and $k$  are constant whenever the MST vanishes in a spacetime $(\mcM,g)$ as above. 
}
\end{remark}

Combining Theorem~\ref{constancy} and Remark~\ref{rem_const} it follows that
a $\Lambda>0$-vacuum spacetime admitting a KVF $X$ with vanishing
associated MST for some $Q$ and for which $\mathcal{F}^2=0$ somewhere 
cannot admit a smooth $\scri$, unless the spacetime is locally isometric
to de Sitter. Although a priori interesting, this result turns out to be empty
since it has been proven in \cite{mars_senovilla_null} that all spacetimes 
with vanishing MST and null Killing form $\mathcal{F}$ (somewhere, and hence everywhere)
have necessarily $\Lambda \leq 0$.



The above functions \eq{equation_b1b2}-\eq{equation_k}, or rather their restrictions to $\scri$, turn out to be crucial for the classification of vacuum spacetimes 
with vanishing MST (and conformally flat $\scri$, cf.\ \cite{mpss}).
Our next aim will therefore be to find explicit expressions for them
in terms of the data at $\scri$
under the assumption that
the MST vanishes for some choice of $Q$. 
We wish to find expressions
at null infinity that make sense (and generally cease to be constant)
for any $\Lambda$-vacuum spacetime with a smooth conformal compactification
and a KVF.

Employing   the  relations collected in Sections~\ref{constraints},
\ref{functionQ} and \ref{section_recaled_MS}  we find that, 
under the assumption that the MST vanishes,
\begin{eqnarray}
 (\Theta^{-5} Q_0)
&=&   3\Lambda^{-1}  |Y|^{-5}\big(\Dconst  - i  \Cconst\big)+ O(\Theta)
\;,
\label{Taylor_Q}
\\
Q_0\mathcal{F}^2-4\Lambda &=&   -4\Lambda  + O(\Theta^3)
\label{expansion_QF_Lambda}
\;,
\\
\mathcal{F}^2 &=& -\frac{4}{3}\Lambda \Theta^{-2} \widetilde X^2  + 2\widetilde F^2  -\frac{1}{4}\widetilde F^2 + 2i {\widetilde F}^{\star}_{\alpha\beta}{\widetilde F}^{\alpha\beta}
\nonumber
\\
&& + 2  \widetilde X^{\alpha}\widetilde \nabla_{\alpha}\widetilde F 
+ 8 \widetilde X^{\alpha} \widetilde X^{\beta} \widetilde L_{\alpha\beta}
  +4i \Theta^{-1} \widetilde F^{\mu\nu} \widetilde H^{\star}_{\mu\nu}
\\
 &=& -\frac{4}{3}\Lambda \Theta^{-2} \widetilde X^2 
+ 4i \sqrt{\frac{\Lambda}{3}}\Theta^{-1}Y_kN^k 
 + |N|^2  -\frac{4}{9}f^2
\nonumber
\\
&& 
+ \frac{8}{3}  Y^{i}\widehat \nabla_{i}f
+ 8 Y^iY^j\widehat  L_{ij}
  +  O(\Theta)
\label{used_relation_F2}
\;. \\
\textcolor{blue}{\frac{6i \sqrt{\mathcal{F}^2}}{Q_0 \mathcal{F}^2 - 4 \Lambda}}
& \textcolor{blue}{=} & \textcolor{blue}{ \frac{1}{\Theta} \sqrt{\frac{3}{\Lambda}} \sqrt{ 
\widetilde X^2 } + O(1)}.
\label{Taylory} 
\end{eqnarray}
Here 
\begin{equation}
 N^k \,:=\, \mathrm{curl} \,Y^k \,=\,\widehat\volform^{ijk}\widehat\nabla_iY_j \label{curl_of_Y}
\end{equation}
denotes the curl of $Y$, and
\begin{equation}
f \,:=\, \widehat\nabla_iY^i \label{div_of_Y}
\end{equation}
its divergence.

Let us determine  the trace of  \eq{equation_b1b2}-\eq{equation_k} in the unphysical, conformally rescaled spacetime on $\scri$ under the
assumption that the MST vanishes for some choice of $Q$.
With \eq{Taylor_Q} and \eq{used_relation_F2}, we observe that, on $\scri$, equation \eq{equation_b1b2} yields
%
\begin{equation}
(  b_2-ib_1)|_{\scri} \,=\,  \Big(\frac{9}{16}\Lambda^{-3} (\Theta^{-5}Q_0) (\Theta^2\mathcal{F}^2)^{5/2}\Big)\Big|_{\scri}
\,=\,  \frac{2}{\Lambda} \sqrt{ \frac{3}{\Lambda}}   \big(      \Cconst
 + i\Dconst \big)
\;.
\end{equation}
From \eq{equation_c} and \eq{used_relation_F2} we conclude that
%
\begin{eqnarray}
\frac{\Lambda}{3} c\Big|_{\scri} &=& \frac{\Lambda}{3}\Big( -\Theta^{-2}\widetilde X^2 - \frac{3}{4}\Lambda^{-1}\mathrm{Re} ( \mathcal{F}^2 )\Big)\Big|_{\scri}
\\
&=& - \frac{1}{4}   |N|^2  + \frac{1}{9}f^2
- \frac{2}{3} Y^i\widehat  \nabla_{i}f
-2 Y^iY^j \widehat L_{ij}
\;.
\label{expression_c}
\end{eqnarray}
Note that this implies that
\begin{eqnarray}
\mathcal{F}^2 
&=& 
 -\frac{4}{3}\Lambda \Theta^{-2} \widetilde X^2   +4i \Theta^{-1} \widetilde F^{\mu\nu} \widetilde H^{\star}_{\mu\nu}-\frac{4}{3}\Lambda c 
\nonumber
\\
&&  -8D_{ij}Y^iY^j \Theta  - 4i \sqrt{\frac{3}{\Lambda}}N^k\widetilde \nabla_t\widetilde \nabla_t\widetilde X_k \Theta  + O(\Theta^2)
\label{used_relation_F2v2}
\\
 &=& -\frac{4}{3}\Lambda (\Theta^{-2} |\widetilde X|^2  +c)
+4i \sqrt{\frac{\Lambda}{3}}\Theta^{-1}Y_kN^k 
  +  O(\Theta)
\label{expansion_F2}
\;.
\end{eqnarray}
Next, we compute the function $Z$ (here an overbar means ``complex conjugation''),
\begin{eqnarray*}
 Z &=& \mathrm{Re} \Big( \frac{6\sqrt{\mathcal{F}^2}}{Q_0\mathcal{F}^2-4\Lambda}\Big)
\\
&=& \frac{6\, \mathrm{Re} \Big( \sqrt{\mathcal{F}^2}(\ol {Q_0\mathcal{F}^2}-4\Lambda)\Big)}{(Q_0\mathcal{F}^2-4\Lambda)(\ol {Q_0\mathcal{F}^2}-4\Lambda)}
\\
&=& \Big(\frac{3}{8}\Lambda^{-2}  + O(\Theta^3)\Big)\mathrm{Re} \Big( \sqrt{\mathcal{F}^2}(\ol {Q_0\mathcal{F}^2}-4\Lambda)\Big)
\\
&=&-\frac{3}{2}\Lambda^{-1} \Theta^{-1}\mathrm{Re} \big( \sqrt{\Theta^2\mathcal{F}^2}\big) +  O(\Theta^2)
\;.
\end{eqnarray*}
Equation \eq{expansion_F2}  yields
%
\begin{equation}
\mathrm{Re} \big( \sqrt{\Theta^2\mathcal{F}^2}\big)  \,=\, |Y|^{-1}Y_kN^k\Theta + O(\Theta^3)
\;.
\end{equation}
Thus
\begin{equation}
 Z \,=\,  -\frac{3}{2}\Lambda^{-1} |Y|^{-1}Y_kN^k  +  O(\Theta^2)
\;,
\end{equation}
and we deduce from \eq{equation_k} that 
\begin{eqnarray*}
k|_{\scri} &=&\Big( \big| -  \frac{9}{4\Lambda^2}\Theta^2\mathcal{F}^2 (\widetilde\nabla_{t}Z)^2 
+\frac{9}{4\Lambda^2}\Theta^{2}\mathcal{F}^2 \widetilde\nabla_{i}Z\widetilde \nabla^i Z\big|  
-b_2Z +cZ^2+\frac{1}{3}\Lambda Z^4\Big)\Big|_{\scri}
\\
&=&\Big(\frac{3}{\Lambda}|Y|^2\widehat\nabla_{i}Z\widehat \nabla^i Z 
-b_2Z +cZ^2+\frac{1}{3}\Lambda Z^4\Big)\Big|_{\scri}
\;.
\end{eqnarray*}
From the conformal Killing equation for $Y$ we find that
\begin{eqnarray*}
\Lambda |Y| \widehat\nabla_i Z \big|_{\scri}  &=&  -\frac{1}{2}f|Y|^{-2}Y_iY_kN^k 
-\frac{3}{4}|Y|^{-2}Y_kN^k\widehat \volform_{ijl}Y^j N^l
 +  \frac{3}{2} \widehat\nabla_i(Y_kN^k)
\;,
\end{eqnarray*}
whence, using \eq{expression_c},  
\begin{eqnarray*}
\Lambda^2|Y|^2 \widehat\nabla_i Z \widehat\nabla^i Z\big|_{\scri} 
 &=&
   \frac{1}{4}f^2|Y|^{-2}(Y_kN^k )^2 
   -\frac{3}{4}f|Y|^{-2} Y^i\widehat\nabla_i(Y_kN^k)^2
\\
&& 
+\frac{9}{16}|Y|^{-2}(Y_kN^k)^2|N|^2
- \frac{9}{16}|Y|^{-4}(Y_kN^k)^4
\\
&&  -\frac{9}{8}|Y|^{-2}\widehat\volform_{ijl}Y^j N^l  \widehat\nabla^i(Y_kN^k)^2
 +\frac{9}{4} \widehat\nabla_i(Y_jN^j) \widehat\nabla^i(Y_kN^k)
\\
&=&
   \frac{1}{2}f^2|Y|^{-2}(Y_kN^k )^2 
   -\frac{3}{4}f|Y|^{-2} Y^i\widehat\nabla_i(Y_kN^k)^2
\\
&& 
- \frac{3}{2}|Y|^{-2}(Y_kN^k)^2 Y^i\widehat  \nabla_{i}f
- \frac{9}{2}|Y|^{-2}(Y_kN^k)^2 Y^iY^j \widehat L_{ij}
\\
&&  -\frac{9}{8}|Y|^{-2}\widehat\volform_{ijl}Y^j N^l  \widehat\nabla^i(Y_kN^k)^2
 +\frac{9}{4} \widehat\nabla_i(Y_jN^j) \widehat\nabla^i(Y_kN^k)
\\
&& -\frac{1}{3}\Lambda^3 c Z^2  
- \frac{1}{9}\Lambda^{4}Z^4
\;,
\end{eqnarray*}
and  
%
\begin{eqnarray}
\label{constk}
\Big(\frac{\Lambda}{3}\Big)^3 k \big|_{\scri}
&=& \frac{1}{18}|Y|^{-2}(Y_kN^k )^2 \Big(f^2
- 3Y^i\widehat  \nabla_{i}f
- 9 Y^iY^j \widehat L_{ij}
\Big)
+ \frac{1}{2}  (\widehat C_{ij} Y^i Y^j)
Y_kN^k
\nonumber
\\
&&  -\frac{1}{8}\widetilde\nabla_i\log |Y|^2 \widetilde\nabla^i(Y_kN^k)^2
 +\frac{1}{4} \widetilde\nabla_i(Y_jN^j) \widetilde\nabla^i(Y_kN^k)
\;.
\label{simplified_k}
\end{eqnarray}
where we have used $b_2  = 6  \Lambda^{-2}\CconstL$ and
have replaced $\Cconst$ in terms of the Cotton-York tensor $\Cconst= \frac{3}{2}\sqrt{\frac{3}{\Lambda}} |Y|
\widehat C_{ij} Y^i Y^j$.
Expression (\ref{constk}) 
provides a simplified formula for $k$ on $\scri$ in terms of $Y$.

It follows from Theorem~\ref{constancy} and Remark~\ref{rem_const} that the right-hand sides of \eq{expression_c} and \eq{simplified_k}
are constant, whenever the MST vanishes for  some function~$Q$.

\subsection{The functions $\widehat c(Y)$ and $\widehat k(Y)$}

In the previous section we have introduced the spacetime functions $c$ and $k$ and computed their
restrictions on $\scri$ in terms of the induced metric $h$
and the CKVF $Y$, whenever
the MST  vanishes.
Here we regard these restrictions as functions which are intrinsically defined on some Riemannian 3-manifold, whence we are led to the following
\begin{definition}
 Let $(\Sigma,h)$ be a Riemannian 3-manifold which admits a
CKVF~ $Y$. Then, guided by \eq{expression_c} and \eq{simplified_k}, we set
\begin{eqnarray}
 \widehat c(Y) 
&:=& - \frac{1}{4} |N|^2  + \frac{1}{9}f^2
- \frac{2}{3} Y^i\widehat  \nabla_{i}f
-2  Y^iY^j \widehat L_{ij}
\;,
\label{defn_c(Y)}
\\
\widehat k(Y) 
 &:=& \frac{1}{18}|Y|^{-2}(Y_kN^k )^2 \Big(f^2
- 3Y^i\widehat  \nabla_{i}f
- 9 Y^iY^j \widehat L_{ij}
\Big)
+ \frac{1}{2}  (\widehat C_{ij} Y^i Y^j)
Y_kN^k
\nonumber
\\
&&  -\frac{1}{8}\widehat \nabla_i\log |Y|^2 \widehat \nabla^i(Y_kN^k)^2
 +\frac{1}{4} \widehat\nabla_i(Y_jN^j) \widehat\nabla^i(Y_kN^k)
\;.
\label{defn_k(Y)}
\end{eqnarray}
%
\end{definition}

The spacetime functions $c$ and $k$, \eq{equation_c} and \eq{equation_k}, have been introduced in \cite{mars_senovilla} in the setting of a vanishing
MST, where they arise naturally as integration constants.
Concerning $\widehat c(Y)$ and $\widehat k(Y)$, in general
one cannot expect them  to be constant.
However, let us assume that the Cotton-York tensor satisfies condition (\ref{condition_on_C}).
%
%
Our main result, Theorem~\ref{first_main_thm2},
(which is a reformulation of Theorem~\ref{first_main_thm}) now implies the
following: Choosing an initial $D_{ij}$ according to (\ref{condition_on_D}), $(\Sigma,h)$  can be extended to a $\Lambda>0$ vacuum spacetime for
which $\Sigma$ represents $\scri^-$ and to which $Y$ extends as a KVF such that the associated MST vanishes for some function $Q$.
One then deduces from the results in \cite{mars_senovilla}
that $c$ and $k$, and therefore also $\widehat c(Y)$ and $\widehat k(Y)$ are constant: 

\begin{lemma}
Let $(\Sigma, h)$ be a Riemannian 3-manifold which admits a CKVF ~$Y$ with $|Y|^2>0$ and such that
$\widehat C_{ij} = C|Y|^{-5}(Y_iY_j)_{\mathrm{tf}}$
with $C$ constant. Then the functions $\widehat c(Y)$ and $\widehat k(Y)$ as given by
\eq{defn_c(Y)} and \eq{defn_k(Y)} are constant.
\end{lemma}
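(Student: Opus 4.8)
The plan is to avoid computing directly on $(\Sigma,h)$ and instead to realise the given Riemannian data as asymptotic Cauchy data at $\scri^-$ of a $\Lambda>0$-vacuum spacetime with \emph{vanishing} MST, reading off the constancy of $\widehat c(Y)$ and $\widehat k(Y)$ from the already-established constancy of the \emph{spacetime} functions $c$ and $k$. The bridge is the pair of identities \eq{expression_c} and \eq{simplified_k}, which---upon comparison with the definitions \eq{defn_c(Y)} and \eq{defn_k(Y)}---say that in any such spacetime
\begin{equation*}
\widehat c(Y)\,=\,\frac{\Lambda}{3}\,c\big|_{\scri^-}\;,\qquad \widehat k(Y)\,=\,\Big(\frac{\Lambda}{3}\Big)^{3}k\big|_{\scri^-}\;.
\end{equation*}
Thus it will suffice to exhibit one spacetime in which $c$ and $k$ are constant and whose induced data at $\scri^-$ are those of $(\Sigma,h,Y)$.

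To build such a spacetime I would first fix any $\Lambda>0$ and put $\Cconst:=\sqrt{3/\Lambda}\,C$, so that the hypothesis $\widehat C_{ij}=C|Y|^{-5}(Y_iY_j)_{\mathrm{tf}}$ becomes exactly \eq{condition_on_C}, i.e.\ condition (i) of Theorem~\ref{first_main_thm2} (using $(Y_iY_j)_{\mathrm{tf}}=Y_iY_j-\tfrac13|Y|^2h_{ij}$ in dimension three). Next I would choose a constant $\Dconst\neq0$ and set $D_{ij}:=\Dconst\,|Y|^{-5}(Y_iY_j)_{\mathrm{tf}}$; since $Y$ is a CKVF with $|Y|^2>0$, the Lemma preceding \eq{condition_on_C} shows $D_{ij}$ is a TT tensor, so condition (ii) holds too. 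Theorem~\ref{first_main_thm2} then yields a maximal globally hyperbolic $\Lambda>0$-vacuum spacetime $(\mcM,g)$ with smooth $\scri^-$ represented by $(\Sigma,h)$, a KVF $X$ with $X|_{\scri^-}=Y$ and vanishing associated MST, and $\widetilde d_{titj}|_{\scri^-}=D_{ij}$.

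The point I expect to require the most care is the de~Sitter exclusion built into Theorem~\ref{constancy} and Remark~\ref{rem_const}: those results guarantee constancy of $c$ and $k$ only when $(\mcM,g)$ is not locally de~Sitter. The clean way around this is to note that $\widehat c(Y)$ and $\widehat k(Y)$, being built solely from $h$, $Y$ and $\widehat C_{ij}$, do not involve $D_{ij}$ at all, so the choice $\Dconst\neq0$ above is free; and since de~Sitter is conformally flat and hence has $\widetilde d_{titj}|_{\scri^-}=0$, the nonvanishing of $D_{ij}$ forces $(\mcM,g)$ to be non-de~Sitter. With that settled, Theorem~\ref{constancy} (whose remaining hypotheses hold by Remark~\ref{rem_const}) gives that $c$ and $k$ are constant on $(\mcM,g)$; restricting to $\Sigma\cong\scri^-$ and using the displayed identities yields constancy of $\widehat c(Y)$ and $\widehat k(Y)$, as claimed.
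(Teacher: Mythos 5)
Your proposal is correct and is essentially the paper's own proof: the paper likewise chooses $D_{ij}$ according to \eq{condition_on_D}, invokes Theorem~\ref{first_main_thm2} to extend $(\Sigma,h,Y)$ to a $\Lambda>0$-vacuum spacetime with a KVF whose MST vanishes, and then deduces constancy of $\widehat c(Y)$ and $\widehat k(Y)$ from the constancy of the spacetime functions $c$ and $k$ (Theorem~\ref{constancy} and Remark~\ref{rem_const}) via the identities \eq{expression_c} and \eq{simplified_k}. Your explicit choice $\Dconst\neq 0$ to exclude the (locally) de Sitter case---required for Theorem~\ref{constancy} to apply, and legitimate since $\widehat c(Y)$ and $\widehat k(Y)$ do not involve $D_{ij}$---is a detail the paper leaves implicit, and you handle it correctly.
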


In particular in the case of $\widehat k(Y)$ it  is far from obvious that the condition  \eq{condition_on_C} implies that this function is constant, 
but the proof via the extension of  $(\Sigma,h)$  to a  vacuum spacetime provides an elegant tool to prove that.
As already indicated above,
the constants $\widehat c$ and $\widehat k$ play a decisive role in the classification of $\Lambda>0$-vacuum
spacetimes which admit a conformally flat $\scri$
and a KVF w.r.t.\ which the associated MST vanishes \cite{mpss}.

\subsection{Constancy of $\widehat c(Y)$}

\def\defi{:=}
\def\F{{\mathcal F}}
\def\nab{\nabla}

Let us focus attention on the function $\widehat c(Y)$. In Section~\ref{sec_alt_Q}
 we will introduce an alternative definition of the function $Q$ which permits the derivation of a evolution equations. 
It turns out that the associated MST will in general not be regular at $\scri$, and that the constancy of $\widehat c(Y)$ is a necessary condition to ensure regularity.
Let us therefore consider the issue under which condition the function $\widehat c(Y)$ is constant.
The aim of this section is to prove the following Lemma.

\begin{lemma}
\label{lem_constancy_c}
Let $(\Sigma,h)$ be a 3-dimensional oriented
Riemannian manifold 
which admits a CKVF $Y$, $\mcL_Y h = \frac{2}{3} f  h$
with $f$ and $N_i$ as defined in (\ref{curl_of_Y})-(\ref{div_of_Y}). Then the function $\widehat c(Y) $
introduced in (\ref{defn_c(Y)})
satisfies the following identity
\begin{align*}
\widehat\nab_l \widehat c(Y)  = -2 \widehat\volform^m_{\phantom{m}li} Y^i \widehat C_{mj} Y^j = \widehat{C}_{jli}Y^j Y^i .
\end{align*}
In particular, if, for some smooth function $H : \Sigma \mapsto \mathbb{R}$, the Cotton-York tensor (\ref{cotton-york})  satisfies
\begin{align}
\widehat C_{ij} =  H ( Y_i Y_j)_{\mathrm{tr}}
\end{align}
and $\Sigma$ is connected, 
then the proportionality function necessarily takes the form $H=C|Y|^{-5}$ where $C$ is a constant, and $\widehat c(Y)$ is constant over the manifold.
\end{lemma}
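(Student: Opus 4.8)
The plan is to prove the gradient identity by a direct computation in three dimensions, using nothing beyond the fact that $Y$ is a CKVF and the special structure of $3$-geometry. Two structural inputs drive everything. First, the conformal Killing equation \eqref{CKVF_eqn} splits the first derivative of $Y$ into its trace and rotation parts, $\widehat\nabla_iY_j=\tfrac13 f\,h_{ij}+\tfrac12\widehat\volform_{ijk}N^k$, with $f$ and $N^k$ as in \eqref{div_of_Y} and \eqref{curl_of_Y}, and the second derivative $\widehat\nabla_i\widehat\nabla_jY_k$ is then fixed algebraically by $Y^d$, the curvature of $h$ and $\widehat\nabla f$ through the standard first integrability condition for conformal Killing fields. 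Second, in three dimensions the Weyl tensor vanishes, so $\widehat R_{ijkl}=h_{ik}\widehat L_{jl}-h_{il}\widehat L_{jk}-h_{jk}\widehat L_{il}+h_{jl}\widehat L_{ik}$ and all curvature enters only through the Schouten tensor $\widehat L_{ij}$ of \eqref{Cotton+Schouten}.

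First I would differentiate \eqref{defn_c(Y)} term by term. The derivatives of $|N|^2$, $f^2$ and $Y^iY^j\widehat L_{ij}$ produce, after inserting the two structural inputs above, a collection of algebraic curvature terms together with terms polynomial in $f$, $N$ and their first derivatives; the genuinely higher-order pieces are $Y^i\widehat\nabla_l\widehat\nabla_if$ and $\widehat\nabla_lN^k$, which I would reduce using the second integrability condition (equivalently, the expression for $\mathcal L_Y\widehat L_{ij}$ along a CKVF). The claim is that after substituting the $3$D curvature decomposition every term built purely from $f$, $N$ and their first derivatives cancels, as do the algebraic curvature terms, and the only survivor is the one uncontracted derivative of $\widehat L_{ij}$, which assembles, via the definition \eqref{Cotton+Schouten}, into $\widehat C_{jli}Y^jY^i$. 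The equivalent expression in terms of $\widehat\volform$ and the Cotton-York tensor $\widehat C_{mj}$ then follows from the conversion \eqref{cotton-york} together with the antisymmetries of $\widehat C_{ijk}$.

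For the \emph{in particular} statement I would proceed in two independent steps. Writing $\widehat C_{jli}Y^jY^i=-\widehat\volform_{li}{}^m\widehat C_{jm}Y^jY^i$ and inserting $\widehat C_{jm}=H\,(Y_jY_m)_{\mathrm{tf}}$, both the $Y_jY_m$ and the $h_{jm}$ pieces are killed by the antisymmetry of $\widehat\volform$ against the symmetric pair $Y^jY^i$; hence the right-hand side of the gradient identity vanishes, so $\widehat\nabla_l\widehat c(Y)=0$ and, $\Sigma$ being connected, $\widehat c(Y)$ is constant. To identify $H$, I would use that any Cotton-York tensor is automatically symmetric, trace-free and divergence-free. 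Setting $H=\tilde H\,|Y|^{-5}$, the tensor $D_{ij}:=|Y|^{-5}(Y_iY_j)_{\mathrm{tf}}$ is TT by the earlier Lemma characterising TT tensors of the form $|Y|^{-n-2}(Y_iY_j)_{\mathrm{tf}}$ (whose hypothesis \eqref{Y_TT-cond} holds for any CKVF), so $0=\widehat\nabla^j\widehat C_{ij}=(\widehat\nabla^j\tilde H)\,D_{ij}$. Contracting with $Y^i$ gives $Y^j\widehat\nabla_j\tilde H=0$, and feeding this back into the previous relation forces $\widehat\nabla_i\tilde H=0$ since $|Y|^2>0$; thus $\tilde H\equiv C$ is constant and $H=C|Y|^{-5}$.

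The step I expect to be the main obstacle is the bookkeeping in the second paragraph: evaluating $Y^i\widehat\nabla_l\widehat\nabla_if$ and $\widehat\nabla_lN^k$ and showing that, after the $3$D substitution, no residual curvature survives except the single Cotton term. This requires the second-order conformal Killing identity in precisely the right normalisation and repeated, sign-sensitive use of the decomposition of $\widehat R_{ijkl}$ in terms of $\widehat L_{ij}$; any factor slip there would corrupt the coefficient of the surviving $\widehat C_{jli}Y^jY^i$. A useful internal consistency check is that the final answer must be trace-free and reproduce the antisymmetry of the Cotton tensor in its last two indices.
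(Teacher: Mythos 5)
Your strategy for the gradient identity is essentially the paper's own: differentiate \eq{defn_c(Y)} term by term, eliminate the second--order quantities ($Y^i\widehat\nabla_l\widehat\nabla_i f$ and second derivatives of $Y$) via the two CKVF integrability conditions --- precisely the paper's \eq{HessPsi}, $\widehat\nabla_j\widehat\nabla_k f=-3(\mcL_Y\widehat L)_{jk}$, and \eq{HessY} --- and substitute the three--dimensional decomposition of $\widehat R^m{}_{lij}$ in terms of $\widehat L_{ij}$, after which only the Cotton term survives. The paper organizes this slightly differently (it splits $\widehat c=\widehat c_1+\widehat c_2$ with $\widehat c_2=-\frac14\widehat\nabla_iY_j(\widehat\nabla^iY^j-\widehat\nabla^jY^i)$ rather than using your $f$--$N$ split of $\widehat\nabla_iY_j$), but the ingredients and the cancellation mechanism are identical. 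The one real shortfall is that this bookkeeping \emph{is} the proof: your proposal asserts that all non-Cotton terms cancel rather than exhibiting the cancellation, so as written it is a (correct) plan rather than a complete argument --- and, as you anticipate, that is where all the sign-sensitive work sits.

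Where you genuinely depart from the paper is the identification $H=C|Y|^{-5}$. The paper simply points back to the proof of Proposition~\ref{rescaledMST_scri}, where applying $\widehat\nabla^j$ and contracting with $Y$ leads to $2\widehat\nabla_i d+d\,\widehat\nabla_i\log|Y|^2=0$, which is then integrated. Your variant --- write $H=\tilde H|Y|^{-5}$, invoke the paper's TT lemma (hypothesis \eq{Y_TT-cond} holds for any CKVF) to get $\widehat\nabla^j\bigl(|Y|^{-5}(Y_iY_j)_{\mathrm{tf}}\bigr)=0$, then use divergence-freeness of the Cotton--York tensor and two contractions with $Y$ to force $\widehat\nabla_i\tilde H=0$ --- is correct and somewhat cleaner, though it rests on the same mechanism (divergence plus contraction) and the same implicit hypothesis $|Y|^2>0$, without which $|Y|^{-5}$ is meaningless anyway. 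One caution: your closing claim that the two displayed forms of the gradient match ``by the conversion \eq{cotton-york}'' inherits a factor-2 inconsistency already present in the statement itself; with the convention $\widehat C_{ijk}=-\widehat\volform_{jk}{}^l\widehat C_{il}$ and the symmetry of $\widehat C_{ij}$ one finds $-2\widehat\volform^m{}_{li}Y^i\widehat C_{mj}Y^j=+2\,\widehat C_{jli}Y^jY^i$, not $\widehat C_{jli}Y^jY^i$. This is immaterial for the constancy conclusions, which use only that the right-hand side vanishes once $\widehat C_{mj}Y^j\propto Y_m$.
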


\begin{remark}
{\rm The lemma implies in particular that $\hat{c}$ is constant if and only if $Y^j$ is an eigenvector of the Cotton-York
tensor.}
\end{remark}

\begin{proof}
  From the conformal Killing equation $\nab_{(i} Y_{j)} = \frac{1}{3} f h_{ij}$
it follows 
\begin{align}
\widehat\nab_j \widehat \nab_k f & = - 3(\mcL_{Y} \widehat L)_{jk} \label{HessPsi}
\;,
\\
\widehat\nab_i \widehat\nab_j Y_{l} &= Y_m\widehat  R^{m}_{\phantom{m}ijl} 
+\frac{1}{3}\Big( h_{jl} \widehat\nab_i f + h_{il} \widehat\nab_j f - h_{ij} \widehat\nab_l f \Big)
\;.
\label{HessY}
\end{align}
Evaluating $|N|^2$ as 
\begin{eqnarray*}
|N|^2 &=& \widehat\volform_{ijk} \widehat\volform^{ilm} \widehat\nab^{j} Y^k\widehat \nab_{l} Y_{m} \,=\,
\left ( \delta^{l}_{j} \delta^m_{k} - \delta^l_{k} \delta^m_{j} \right )
\widehat\nab^{j} Y^k \widehat\nab_{l} Y_{m} 
\\
&=& 
\widehat\nabla^j Y^k\widehat \nab_j Y_k -\widehat \nab^j Y^k \widehat\nab_k Y_j
\;,
\end{eqnarray*}
we can write $\widehat c(Y) =-\frac{1}{4}\widehat \nab_i Y_j  ( \widehat\nab^i Y^j - \widehat\nab^j Y^i  )
+ \frac{1}{9} f^2 - \frac{2}{3} Y^i \widehat\nab_i f  -2  Y^i Y^j \widehat L_{ij}$.
It is convenient to split $\widehat c(Y)$ in two terms
\begin{align*}
\widehat c_1 &\defi  -\frac{2}{3} Y^i \widehat\nab_i f -2  Y^i Y^j \widehat L_{ij} + \frac{1}{9} f^2
\;,
 \\
 \widehat c_2 &\defi   -\frac{1}{4}\widehat \nab_i Y_j  ( \widehat\nab^i Y^j - \widehat\nab^j Y^i  )
\;.
\end{align*}
so that $\widehat c = \widehat c_1 +\widehat  c_2$. We start with $\widehat \nab_l c_1$,
\begin{align}
\widehat\nab_{l} \widehat c_1  = & -\frac{2}{3}\widehat\nab_l Y^i \widehat\nab_i f - \frac{2}{3}Y^i \widehat\nab_l \widehat\nab_i f -4
(\widehat\nab_l Y^i) Y^j \widehat L_{ij}
-2  Y^i Y^j \widehat\nab_l \widehat L_{ij}  + \frac{2}{9}f \widehat\nab_l f
 \nonumber  
\\
 = & - \frac{2}{3}\widehat\nab_l Y^i \widehat\nab_i f +2 Y^i (\mcL_{Y} \widehat L)_{li}
-4 (\widehat\nab_{l} Y^i) Y^j \widehat L_{ij}
-2 \volform^m_{\phantom{m}li} Y^i\widehat C_{mj} Y^j 
  \nonumber  \\
&-2 Y^j [ (\mcL_Y L)_{lj} -\widehat L_{lm} \widehat\nab_j Y^m
-\widehat L_{mj} \widehat\nab_l Y^m ]
+ \frac{2}{9}f \widehat\nab_l f
\nonumber
 \\ 
= &  
-\frac{2}{3}\widehat\nab_l Y^i \widehat\nab_i f-4 Y^j  \widehat L_{i[j} \widehat\nab_{l]} Y^i 
-2 \widehat\volform^m_{\phantom{m}li} Y^i \widehat C_{mj} Y^j 
+ \frac{2}{9}f \widehat\nab_l f 
\;,
\label{derc1}
\end{align}
where in the second equality we inserted (\ref{HessPsi}) and
$Y^i \widehat\nab_l\widehat L_{ij} = Y^i \widehat\nab_i\widehat L_{lj} + Y^i 
\widehat\volform^{m}_{\phantom{m}li} \widehat C_{mj}
= (\mcL_{Y} \widehat L)_{lj} - \widehat L_{lm} \widehat\nab_j Y^m -\widehat L_{mj} \widehat\nab_{l} Y^m 
+ Y^i \widehat\volform^{m}_{\phantom{m}li} \widehat C_{mj}$ and in the third one obvious
cancellations have been applied. Concerning $\widehat\nab_l \widehat c_2$ we
find, after a simple rearrangement of indices,
\begin{align*}
\widehat\nab_l \widehat c_2 & = -\frac{1}{2} \widehat\nab_{l} \widehat\nab_{i} Y_j (\widehat \nab^i Y^j - \widehat\nab^j Y^i) 
=  - Y_m \widehat R^m_{\phantom{m}lij} \widehat \nab^i Y^j  
- \frac{1}{3}\widehat\nab_i f   ( \widehat \nab^i Y_l - \widehat\nab_l Y^i ) 
\;,
\end{align*}
where in the second equality we used (\ref{HessY}) and the
antisymmetry of $(\widehat\nab^i Y^j - \widehat\nab^j Y^i)$. We now use the Riemann tensor
decomposition in three dimensions,
\begin{align*}
\widehat R^{m}_{\phantom{m}lij} = \delta^m_i \widehat L_{lj} - \delta^m_j \widehat L_{li} 
+\widehat  L^m_{\phantom{m}i} h_{lj} 
-  \widehat L^m_{\phantom{m}j} h_{li} 
\;,
\end{align*}
to obtain
\begin{align}
\widehat\nab_l\widehat  c_2 = -( Y^m \widehat L_{mi} +\frac{1}{3}\widehat\nabla_i f) ( \widehat \nab^i Y_l - \widehat\nab_l Y^i  )
- Y_i \widehat L_{lj}  ( \widehat \nab^i Y^j - \widehat \nab^j Y^i  ) 
\;.
\label{derc2}
\end{align}
Combining (\ref{derc1}) and (\ref{derc2}) 
\begin{align*}
\widehat\nab_l \widehat c(Y)  = & - Y^m \widehat L_{mi}  ( \widehat\nab^i Y_l  + \widehat\nab_l Y^i  )
+ Y_i\widehat  L_{lj}  ( \widehat\nab^i Y^j + \widehat\nab^j Y^i  ) 
- \frac{1}{3} \widehat\nab_i f  ( \widehat\nab^i Y_l + \widehat\nab_l Y^i  ) \\
& + \frac{2}{9} f \widehat\nab_l f 
-2  \widehat\volform^m_{\phantom{m}li} Y^i \widehat C_{mj} Y^j  \\
= &  -2 \widehat\volform^m_{\phantom{m}li} Y^i \widehat C_{mj} Y^j 
\;,
\end{align*}
where in the second equality we used the conformal Killing equation. 

Now, whenever \eq{condition_on_C} holds we have $\widehat C_{mj} Y^j = \frac{2}{3} H |Y|^2 Y_m$
and $\widehat\nab_l \widehat c(Y) =0$ so that $\widehat c(Y)$ is constant over the (connected) manifold
$\Sigma$. The fact that $H$ is necessarily of the form
$H = C |Y|^{-5}$ was already shown in the proof 
of Proposition \ref{rescaledMST_scri}.
\qed
\end{proof}


\begin{remark}
{\rm
A similar Lemma holds for three-dimensional manifolds 
of arbitrary signature. The term $|N|^2$ in $\widehat c(Y)$ needs to be replaced 
by $\epsilon |N|^2$ where $\epsilon$ is an appropriate sign depending
on the signature.
}
\end{remark}


Another   problem of interest is to find necessary and sufficient conditions which ensure the constancy of
$\widehat k(Y)$. Since this expression is of higher order in $Y$ than $\widehat c(Y)$, this
is expected to be  somewhat more involved.

\section{Evolution of the MST}

\subsection{The Ernst potential on $\scri$}

In this section we make no assumption concerning the
MST, so all the results hold generally for any
$\Lambda$-vacuum
spacetime admitting a KVF $X$ and a smooth conformal compactification. 

Using the results of Section~\ref{Mars-Simon_conf} the so-called
Ernst one-form of $X$,
$\sigma_{\mu} := 2X^{\alpha} \mathcal{F}_{\alpha\mu}$, has the following 
asymptotic expansion
\begin{eqnarray}
\sigma_{\mu} &=& 2X^{\alpha} \mathcal{F}_{\alpha\mu}
\\
&=&  2X^{\alpha}\Big(  \Theta^{-3}\widetilde{\mathcal{H}}_{\alpha\mu} +  \Theta^{-2}\widetilde{ \mathcal{F}}_{\alpha\mu}\Big)
\\
&=&  2 \Theta^{-3}\widetilde X^{\alpha}\Big( \widetilde H_{\alpha\mu}  +\frac{i}{2} 
\widetilde\volform_{\alpha\mu}{}^{\nu\sigma}\widetilde H_{\nu\sigma}  +  \Theta
(\widetilde F_{\alpha\mu}  + \frac{i}{2} \widetilde\volform_{\alpha\mu}{}^{\nu\sigma}  \widetilde F_{\nu\sigma} )\Big)
\\
&=&  2 \Theta^{-3}\widetilde X^{\alpha}\Big(  2\widetilde  X_{[\alpha}\widetilde\nabla_{\mu]}\Theta 
+  \Theta
((\widetilde \nabla_{\alpha}\widetilde X_{\mu})_{\mathrm{tf}}
+ \frac{i}{2} \widetilde\volform_{\alpha\mu}{}^{\nu\sigma} \widetilde\nabla_{\nu}\widetilde X_{\sigma} )\Big)
\;.
\end{eqnarray}
It is known (see e.g.\ \cite{KSMH}) that this covector field has an (``Ernst-'') potential 
$\sigma_{\mu}=\partial_{\mu}\sigma$, at least locally.
Taking the following useful relations into account,
%
\begin{eqnarray*}
Y^i \widehat \volform_{i }{}^{jk}\widetilde\nabla_t\widetilde\nabla_t\widetilde \nabla_{j}\widetilde X_{k} \big|_{\scri}
&=&2\widehat C_{ij}Y^iY^j
\;,
\\
 \widetilde\nabla_t\Big(-|\widetilde X|^2 \Theta^{-2}  +\frac{3}{\Lambda} i \widetilde H^{\alpha\beta} \widetilde F^{\star}_{\alpha\beta} \Theta^{-1}\Big) \Big|_{\scri}
&=&   2\sqrt{\frac{\Lambda}{3}} |Y|^2\Theta^{-3}   - i  Y_i N^i \Theta^{-2}
\\
&&\hspace{-5em}
-  \frac{1}{2}\sqrt{\frac{3}{\Lambda}} \widehat R |Y|^2 \Theta^{-1}
+\frac{2}{3}\sqrt{\frac{3}{\Lambda}} D_{kl}  Y^{k}  Y^l
\\
&&\hspace{-5em}
  +\frac{i}{2} \frac{3}{\Lambda}  \Big( 2\widehat C_{kl}Y^kY^l+ N^k\ol{\widetilde\nabla_t\widetilde\nabla_t\widetilde X_{k}}
  \Big)
   + O(\Theta)
\;,
\end{eqnarray*}
a somewhat lengthy computation  making extensive use of the equations \eq{constr2}-\eq{constr_last}
and the Killing  relations \eq{rel_KVF}-\eq{Killing_rel_last}
reveals that (as before an overbar means ``restriction to $\scri$'')

%
\begin{eqnarray}
\sigma_{t}
&=&  2 \Theta^{-2}\widetilde X^{t}\widetilde \nabla_{t}\widetilde X_{t}+\frac{1}{2}  \Theta^{-2}\widetilde X^{t}\widetilde\nabla_{\beta}\widetilde X^{\beta}  
 +   2 \Theta^{-3}\widetilde X^{i}\widetilde  X_{i}\widetilde\nabla_{t}\Theta 
\nonumber
\\
&&
+   2 \Theta^{-2}\widetilde X^{i} \widetilde \nabla_{i}\widetilde X_{t}
+i \Theta^{-2}\widetilde X^{i}  \widetilde\volform_{i t}{}^{jk} \widetilde\nabla_{j}\widetilde X_{k} 
 + O(\Theta)
\\
&=&     2\sqrt{\frac{\Lambda}{3}} |Y|^2\Theta^{-3}   - i  Y_i N^i \Theta^{-2}
-  \frac{1}{2}\sqrt{\frac{3}{\Lambda}} \widehat R |Y|^2 \Theta^{-1}
 + \frac{2}{3}  \sqrt{\frac{3}{\Lambda}}   D_{kl}Y^{k}  Y^l
\nonumber
\\
&&
-\frac{i}{2}\frac{3}{\Lambda}\Big( 2\widehat C_{ij}Y^iY^j+ N^i\ol{\widetilde  \nabla_t\widetilde \nabla_t\widetilde X_{i} } \Big)
%
 + O(\Theta)
\label{expansion_sigma_t}
\\
&=&  \widetilde\nabla_t\Big[-\widetilde X^2 \Theta^{-2}  +\frac{3}{\Lambda} i \widetilde H^{\alpha\beta} \widetilde F^{\star}_{\alpha\beta} \Theta^{-1}+ q(x^i)
\nonumber
\\
&&-i\frac{3}{\Lambda}\sqrt{\frac{3}{\Lambda}}\Big( 2  \widehat C_{kl}Y^kY^l  +N^k\ol{\widetilde \nabla_t\widetilde \nabla_t\widetilde X_k } \Big)\Theta +  O(\Theta^2) \Big]   
\;
\end{eqnarray}
for some function $q(x^i)$. To determine this function it is necessary to compute $\sigma_i$ up to and including constant order.
Using the relation
\begin{eqnarray}
 \widehat\nabla_i(Y_kN^k) &=& \frac{1}{3}f N_i  +2\widehat \volform_{ij}{}^{k} Y^jY^l  \widehat  L_{kl} 
 + \frac{2}{3}\widehat\volform_{ij}{}^{k} Y^j\widehat \nabla_kf 
\;,
\label{deriv_YN}
\end{eqnarray}
another lengthy calculation gives via
\eq{constr2}-\eq{constr_last}
and  \eq{rel_KVF}-\eq{Killing_rel_last}
%
\begin{eqnarray}
\sigma_{i}
&=&   -  \widehat  \nabla_{i}|Y|^2 \Theta^{-2} +  i\sqrt{\frac{3}{\Lambda}} \widehat\nabla_i(Y_kN^k) \Theta^{-1}
 +  \frac{1}{\Lambda} \Big[ -   Y_i \Big(\Delta_h -\frac{\widehat R}{3}\Big) f
\nonumber
\\
&&  
 -  Y_{i}\ol{ \widetilde \nabla_t\widetilde\nabla_t\widetilde\nabla_t\widetilde X^t}
+3 i\widehat \volform_{ij }{}^{k} Y^{j} \ol{\widetilde\nabla_t\widetilde\nabla_t\widetilde\nabla_{t}\widetilde X_{k} }
+ 3\ol{\widetilde\nabla_t\widetilde\nabla_t\widetilde X^j} \widehat  \nabla_{j}Y_{i}
\nonumber
\\
&&  -\frac{1}{2} |Y|^2\widehat \nabla_{ i}\widehat R
+ \frac{1}{2}  Y_{i}Y^{j} \widehat \nabla_{ j}\widehat R
+ 3 Y^{j}\ol{\widetilde\nabla_t\widetilde\nabla_t\widetilde \nabla_{j}\widetilde X_{i}}
\Big]
+O(\Theta)
\\
&=&   -  \widehat  \nabla_{i}|Y|^2 \Theta^{-2} +  i\sqrt{\frac{3}{\Lambda}} \widehat\nabla_i(Y_kN^k) \Theta^{-1}
\nonumber
\\
&&   +  \frac{1}{\Lambda} \Big[    \frac{2}{3}Y_i\underbrace{ \Big(\Delta_h  f+ \frac{1}{2} \widehat R f+ \frac{3}{4}Y^{j} \widehat \nabla_{ j}\widehat R
\Big)}_{=0 \text{ by \eq{HessPsi}}}
-6 i\sqrt{\frac{\Lambda}{3}}\widehat \volform_{ij }{}^{k}  D_{kl} Y^{j}Y^l
\nonumber
\\
&&
- 3\ol{\widetilde \nabla_t\widetilde\nabla_t\widetilde X^j }\widehat  \nabla_{i}Y_{j}
+ 2  f\ol{\widetilde\nabla_t\widetilde\nabla_t\widetilde X_i}
  -\frac{1}{2} |Y|^2\widehat \nabla_{ i}\widehat R
\nonumber
\\
&&
-  3 Y^{j}\ol{\widetilde\nabla_t\widetilde\nabla_t\widetilde \nabla_{i}\widetilde X_{j}}
\Big]
+O(\Theta)
\label{expansion_sigma_i}
\\
&=&    \widetilde \nabla_{i}\Big(-\widetilde X^2 \Theta^{-2} 
  +\frac{3}{\Lambda} i \widetilde H^{\alpha\beta} \widetilde F^{\star}_{\alpha\beta} \Theta^{-1}
-a 
+O(\Theta)\Big) 
\nonumber
\\
&&-2 i\sqrt{\frac{3}{\Lambda}}\widehat \volform_{ij }{}^{k}  D_{kl} Y^{j}Y^l
\label{sigma_i_prelim}
\end{eqnarray}
%
%
for some complex  constant $a$ (the ``$\sigma$-constant'' introduced in the Introduction).
As one should expect, the last term in \eq{sigma_i_prelim} has, at least locally, a potential, supposing that $Y$ is a CKVF and
$D_{ij}$ a TT tensor which together satisfy the KID equations \eq{reduced_KID}: Indeed, setting
%
\begin{equation}
P_i :=-2 \sqrt{\frac{3}{\Lambda}} \widehat \volform_{ij }{}^{k}  D_{kl} Y^{j}Y^l
\;,
\end{equation}
we find that
%
\begin{eqnarray}
\widehat\volform_i{}^{jk}\widehat\nabla_{j} P_{k} &=& 2 \sqrt{\frac{3}{\Lambda}} Y^j\Big(
\mcL_YD_{ij} + \frac{1}{3} fD_{ij}
\Big)
\,=\, 0
\\
\Longrightarrow \quad \widehat\nabla_{[i} P_{j]}  &=& 0
\;.
\end{eqnarray}
On the simply connected components of  the initial 3-manifold 
this implies
\begin{equation}
P_i =\widehat\nabla_i p\quad \text{for some real-valued function} \quad p=p(x^i)
\;.
\end{equation}
(The fact that $p$ is only determined up to some constant is reflected in the $\sigma$-constant $a$
 introduced above.)
Thus,
\begin{equation}
\sigma_{i}
=    \widetilde \nabla_{i}\Big(-\widetilde X^2 \Theta^{-2} 
  +\frac{3}{\Lambda} i \widetilde H^{\alpha\beta} \widetilde F^{\star}_{\alpha\beta} \Theta^{-1} + ip(x^j)
-a  +O(\Theta)\Big) 
\;.
\label{sigma_i}
\end{equation}
Altogether, we conclude that $q(x^i) = i p(x^i) -a$ for some not yet
specified $a\in\mathbb{C}$,
%
and that 
\begin{eqnarray}
 \sigma 
&=& -\widetilde X^2 \Theta^{-2} 
  +\frac{3}{\Lambda} i \widetilde H^{\alpha\beta} \widetilde F^{\star}_{\alpha\beta} \Theta^{-1}
+ ip(x^j) -a
\nonumber
\\
&&
-i\frac{3}{\Lambda}\sqrt{\frac{3}{\Lambda}}\Big( 2  \widehat C_{kl}Y^kY^l  +N^k\widetilde \nabla_t\widetilde \nabla_t\widetilde X_k  \Big)\Theta
+O(\Theta^2)
\\
&=&  -|Y|^2 \Theta^{-2}  + i \sqrt{\frac{3}{\Lambda}} Y_iN^i \Theta^{-1}  
 +\frac{1}{3\Lambda} f^2  - \frac{3}{\Lambda}Y^{i}\ol{\widetilde\nabla_t\widetilde \nabla_t  \widetilde X_{i}}
\nonumber
\\
&&
+ ip(x^j) -a
  +\frac{3}{\Lambda}\Big(
\frac{2}{3} D_{kl}Y^kY^l  - i \sqrt{\frac{3}{\Lambda}}\widehat C_{kl}Y^kY^l 
\nonumber
\\
&&- \frac{i}{4}  \sqrt{\frac{3}{\Lambda}}N^k(2\ol{\widetilde \nabla_t\widetilde \nabla_t\widetilde X_k } + Y_k\widehat R)   \Big)\Theta
+O(\Theta^2)
\;.
\label{expansion_sigma}
\end{eqnarray}

\begin{proposition}
Consider a $\Lambda>0$-vacuum spacetime which admits a KVF and a smooth $\scri$.
Then the Ernst potential $\sigma$ can be computed explicitly near $\scri$, where it admits the expansion \eq{expansion_sigma}.
\end{proposition}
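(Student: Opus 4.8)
The statement collects the computation carried out in this subsection, so the plan is to organize that computation into a proof. The local existence of an Ernst potential $\sigma$ with $\sigma_\mu = \partial_\mu\sigma$ is already granted: the one-form $\sigma_\mu = 2X^\alpha\mathcal{F}_{\alpha\mu}$ is closed in any $\Lambda$-vacuum spacetime admitting a KVF (cf.\ \cite{KSMH}). The actual content to be established is therefore the asymptotic expansion \eq{expansion_sigma}, together with the explicit integration of $\sigma_\mu$ in a neighbourhood of $\scri$.

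First I would rewrite $\sigma_\mu = 2X^\alpha\mathcal{F}_{\alpha\mu}$ entirely in unphysical variables. Inserting $\mathcal{F}_{\alpha\mu} = \Theta^{-2}(\widetilde{\mathcal{F}}_{\alpha\mu} + \Theta^{-1}\widetilde{\mathcal{H}}_{\alpha\mu})$ and the definitions of $\widetilde H_{\mu\nu}$ and $\widetilde F_{\mu\nu}$ yields the fully geometric expression for $\sigma_\mu$ displayed at the opening of this subsection. Working in the wave-map gauge \eq{gauge_conditions_compact}, I would then evaluate the Taylor coefficients in $\Theta$ componentwise, feeding in the constraint data \eq{constr2}--\eq{constr_last} and the Killing relations \eq{rel_KVF}--\eq{Killing_rel_last}. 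For the time component this is a direct substitution and produces \eq{expansion_sigma_t}; for the spatial component the analogous substitution, together with the auxiliary identity \eq{deriv_YN} for $\widehat\nabla_i(Y_kN^k)$, gives \eq{expansion_sigma_i}.

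The one genuinely non-mechanical step is exhibiting the gradient structure of $\sigma_i$. I would reorganize \eq{expansion_sigma_i} as a total derivative $\widetilde\nabla_i(\,\cdots\,)$ plus the residual term $P_i = -2\sqrt{3/\Lambda}\,\widehat\volform_{ij}{}^{k}D_{kl}Y^jY^l$ of \eq{sigma_i_prelim}, and then show $P_i$ is closed. Computing its curl and invoking the KID equation \eq{reduced_KID} gives $\widehat\volform_i{}^{jk}\widehat\nabla_j P_k = 2\sqrt{3/\Lambda}\,Y^j(\mcL_Y D_{ij} + \tfrac{1}{3}fD_{ij}) = 0$, whence $\widehat\nabla_{[i}P_{j]}=0$ and thus $P_i = \widehat\nabla_i p$ on simply connected pieces of $\scri$. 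This is the heart of the argument: without the KID equation the Ernst one-form would not integrate to the clean closed form claimed, so that algebraic identity is precisely what makes the potential explicit, and I expect it to be the main obstacle.

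Finally I would match the two components. Demanding that $\sigma_t$ and $\sigma_i$ descend from a single potential forces the undetermined function $q(x^i)$ in the $\sigma_t$-expansion to equal $ip(x^i) - a$ for one complex integration constant $a$ (the $\sigma$-constant), the $\Theta$-dependent pieces being already consistent between \eq{expansion_sigma_t} and \eq{sigma_i_prelim}. Substituting $\widetilde X^2 = \Theta^2|Y|^2 + O(\Theta^3)$ together with the remaining tabulated data and re-expanding in powers of $\Theta$ then collapses everything into \eq{expansion_sigma}, completing the proof. All steps other than the integrability check for $P_i$ reduce to careful bookkeeping with the asymptotic data already assembled in the preceding sections.
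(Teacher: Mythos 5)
Your proposal is correct and follows essentially the same route as the paper: expansion of $\sigma_\mu=2X^\alpha\mathcal{F}_{\alpha\mu}$ in unphysical variables in the wave-map gauge using \eq{constr2}--\eq{constr_last} and \eq{rel_KVF}--\eq{Killing_rel_last}, closedness of $P_i$ via the KID equations \eq{reduced_KID} to produce the potential $p$, and matching $q(x^i)=ip(x^i)-a$ to assemble \eq{expansion_sigma}. You also correctly single out the integrability of $P_i$ as the one non-mechanical step, which is exactly where the paper's argument invokes the KID equations (automatically available here since the spacetime admits a KVF and a smooth $\scri$).
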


\subsection{Alternative definition of the function $Q$}
\label{sec_alt_Q}

In order to derive evolution equations it is convenient 
(cf.\ \cite{ik} for the $\Lambda=0$-case)
to define, {\em for each Ernst potential $\sigma$}, a new function $Q=Q_{\mathrm{ev}}$ 
by the following set of equations,
\begin{align}
Q_{\mathrm{ev}} \, & :=\, \frac{3J}{R} - \frac{\Lambda}{R^2}
\;,
 \label{ev_dfn_Q}
\\
 R \, & :=\, - \frac{1}{2}i\sqrt{\mathcal{F}^2}
\label{defR}
\;,
\\
J \, & :=\, \frac{R +  \sqrt{R^2 - \Lambda\sigma}}{\sigma} 
\;.
 \label{defJ}
\end{align}

and all square roots are chosen with the same prescription as explained
above, cf.\ p.~\pageref{rem_compl_sr}.
Alternatively, we could have defined $R = + (i/2)\sqrt{\mathcal{F}^2}$.
Then the expression for $J$ would have changed accordingly. The choice
(\ref{defR}) is preferable because then
the real part of $R$ approaches minus infinity at $\scri$, in agreement with 
the usual behavior of Boyer-Lindquist type coordinates near infinity
 in Kerr-de Sitter and related metrics \cite{mars_senovilla}.
Note that the definition of $J$ above implies the identity
\begin{eqnarray}
\sigma J^2 -2J R +\Lambda \,=\,0
\;,
\label{quadratic}
\end{eqnarray}
which will be useful later.
The MST associated with the choice $Q=Q_{\mathrm{ev}}$ will be denoted by $\mathcal{S}^{\mathrm{(ev)}}_{\mu\nu\sigma\rho}$.

It follows from \eq{used_relation_F2v2} that
%
\begin{eqnarray}
 R^2& =& -\frac{1}{4}\mathcal{F}^2 
\\
&=&
 \frac{\Lambda}{3} \Theta^{-2} \widetilde X^2   - i \Theta^{-1}\widetilde H^{\alpha\beta} \widetilde F^{\star}_{\alpha\beta}+ \frac{\Lambda}{3} c(x^j)
\nonumber
\\
&&  +\Big(2 D_{ij}Y^iY^j   + i \sqrt{\frac{3}{\Lambda}}N^k\ol{\widetilde \nabla_t\widetilde \nabla_t\widetilde X_k} \Big)\Theta  + O(\Theta^2)
\\
&=&
\frac{\Lambda}{3} \Theta^{-2}|Y|^2 
- i \sqrt{\frac{\Lambda}{3}}Y_{i}N^i\Theta^{-1}
-\frac{1}{9} f^2
+ Y^{i} \ol{\widetilde\nabla_t\widetilde \nabla_t \widetilde X_{i}  }
+ \frac{\Lambda}{3} c(x^j)
\nonumber
\\
&&
 +\Big(\frac{4}{3} D_{kl}Y^kY^l   -  i   \sqrt{\frac{3}{\Lambda}}\widehat C_{kl}Y^kY^l
 + \frac{i}{4} \sqrt{\frac{3}{\Lambda}}N^k(2\ol{\widetilde \nabla_t\widetilde \nabla_t\widetilde X_k} + Y_k\widehat R ) \Big)\Theta  
\nonumber
\\
&&
+ O(\Theta^2)
\;.
\label{expansion_R}
\end{eqnarray}
One remark is in order: In this section we do \emph{not} assume that the MST vanishes, so there is no reason
why the real  function $c$, which has been defined on $\scri$ in \eq{expression_c}, should be constant.
%
From \eq{expansion_sigma} and \eq{expansion_R} we observe that
%
\begin{eqnarray}
 \Xi\,:=\, \sigma +\frac{3}{\Lambda}R^2 \,=\,
 c(x^j)+ ip(x^j) -a+\frac{6}{\Lambda}\widetilde{\mathcal{D}}_{tktl}|_{\scri}Y^kY^l  \Theta
 + O(\Theta^2)  
\;,
\end{eqnarray}
whence
\begin{eqnarray}
Q_{\mathrm{ev}} &=&  \frac{3R^2-\Lambda\sigma  + 3 R\sqrt{R^2 - \Lambda\sigma} }{\sigma R^2} 
\label{expr_Q1}
\\
&=&3 \frac{2R^2  - \frac{\Lambda}{3}\Xi -  2R^2\sqrt{1  - \frac{\Lambda}{4}\frac{\Xi}{R^2} } }{R^2 (\Xi- \frac{3}{\Lambda}R^2 ) } 
\label{expr_Q2}
\\
&=&  \frac{\Lambda^2}{12}\frac{\Xi}{R^4} +O\Big(\frac{\Xi^2}{R^6}\Big) 
\;.
\label{expr_Q3}
\end{eqnarray}

\begin{remark}
{\rm
The expressions \eq{expr_Q1} and \eq{expr_Q2} for $Q_{\mathrm{ev}}$ rely on the choice of $R=-\sqrt{R^2}$. The final expression  (\ref{expr_Q3}) is however
independent of this choice, as it must be. This final expression ensures that,
in an appropriate setting, $Q_{\mathrm{ev}}$ coincides with $Q_0$, as will be shown in
Theorem~\ref{first_main_thm}.
It should also be emphasized that this expression does not admit
a limit $\Lambda \searrow 0$. This is because, when $\Lambda=0$, the function
$\mathcal{F}^2$ approaches zero at infinity and the definition 
of square root needs to be worked out differently.
}
\end{remark}

We have 
\begin{eqnarray}
Q_{\mathrm{ev}} =\begin{cases}
  \frac{3}{4}\big(c(x^j)+ ip(x^j) -a\big)|Y|^{-4}\Theta^4 +O(\Theta^5)  &  \text{if $a \not\equiv c(x^j)+ ip(x^j)$}
\;,
\\
  \frac{9}{2}\Lambda^{-1}|Y|^{-4}\widetilde{\mathcal{D}}_{tktl}Y^kY^l  \Theta^5 +O(\Theta^6)  & \text{if $a\equiv c(x^j)+ ip(x^j)$}
\;.
\end{cases}
\label{final_Q_ev}
\end{eqnarray}
The rescaled MST with $Q = Q_{\mathrm{ev}}$
will be regular on $\scri$ if and only if $Q_{\mathrm{ev}} =O(\Theta^5)$, i.e.\ if and only if
both functions $c$ and $p$ are constant and the 
$\sigma$-constant $a$
 has been  chosen such that
\begin{equation}
\mathrm{Re}(a) \,=\, c \quad \text{and} \quad \mathrm{Im}(a) \,=\, p
\;.
\label{afix}
\end{equation}
We remark that with this choice of $a$ the function $Q_{\mathrm{ev}} $ is completely determined.

Note that the potential $p$ will be constant if and only if the covector field $P_i$ vanishes.
The constancy of $c$ has been analyzed in Lemma~\ref{lem_constancy_c}.
Comparison with \eq{leading_order_Q} then leads to the following result, 
a shortened version of which has been stated as Theorem~\ref{short}
in the Introduction:


\begin{theorem}
\label{prop_Qs}
Consider a $\Lambda>0$-vacuum spacetime%
\footnote{Since $P_i$ needs to vanish in this setting, $P_i$ is exact and we need
not assume that $\scri^-$ is simply connected in order to get a globally defined Ernst potential.}
 which admits a smooth
$\scri^-$ and a KVF $X$.
Denote by $Y$ the CKVF induced, in the conformally rescaled spacetime, by $X$ on $\scri^-$.
If and only if
\begin{enumerate}
\item[(i)] 
$\widehat \volform_{ij }{}^{k}  \widehat C_{kl} Y^{j}Y^l=0$ (so that the function $c=\frac{3}{\Lambda}\widehat c(Y)$ is constant on $\scri^-$),
and 
\item[(ii)]  $ \widehat \volform_{ij }{}^{k}  D_{kl} Y^{j}Y^l=0$ (so that $P_i=0$ whence its potential $p$ is constant),
\end{enumerate}
there exists a
unique 
$\sigma$-constant $a$, given by $a=c+ip$,  which leads via $Q_{\mathrm{ev}}$ to a rescaled
MST $\widetilde{\mathcal{T}}^{(\mathrm{ev})}_{\mu\nu\sigma\rho}$ which is regular at $\scri^-$. In that case the leading order terms of  $Q_{\mathrm{ev}}=O(\Theta^5)$ and $Q_0=O(\Theta^5)$ coincide,
\begin{equation*}
 \lim_{\Theta \rightarrow 0} (\Theta^{-5}Q_{\mathrm{ev}})=  \lim_{\Theta \rightarrow 0} (\Theta^{-5}Q_0)
\;.
\end{equation*}
In particular,
\begin{equation*}
\widetilde{\mathcal{T}}^{(\mathrm{ev})}_{\mu\nu\sigma}{}^{\rho}|_{\scri^-} \,=\, \widetilde{\mathcal{T}}^{(0)}_{\mu\nu\sigma}{}^{\rho}|_{\scri^-}
\;.
\end{equation*}
\end{theorem}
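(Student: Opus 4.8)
The plan is to reduce the whole biconditional to the regularity criterion already isolated in the text, namely that the rescaled MST $\widetilde{\mathcal{T}}^{(\mathrm{ev})}_{\mu\nu\sigma\rho}=\Theta^{-1}\mathcal{S}^{(\mathrm{ev})}_{\mu\nu\sigma\rho}$ is regular at $\scri^-$ if and only if $Q_{\mathrm{ev}}=O(\Theta^5)$. The mechanism I would invoke is that, writing $\widetilde{\mathcal{T}}^{(\mathrm{ev})}=\widetilde{\mathcal{D}}+(\Theta^{-5}Q_{\mathrm{ev}})(\Theta^4\mathcal{U})$, the Weyl part $\widetilde{\mathcal{D}}=\Theta^{-1}\mathcal{C}$ is manifestly regular while $\mathcal{U}=O(\Theta^{-4})$ by \eq{formula_Q} with $\Theta^4\mathcal{U}$ admitting the regular $\scri$-limit \eq{theta4U}; hence regularity forces $\Theta^{-5}Q_{\mathrm{ev}}$ to be bounded. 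Combined with the two-case expansion \eq{final_Q_ev} of $Q_{\mathrm{ev}}$, the statement becomes an assembly of the constancy analyses carried out earlier.

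First I would prove the direction assuming (i) and (ii). By Lemma~\ref{lem_constancy_c}, $\widehat\nabla_l\widehat c(Y)=-2\widehat\volform^m{}_{li}Y^i\widehat C_{mj}Y^j$, so condition (i) is precisely $\widehat\nabla_l\widehat c(Y)=0$; since $\scri^-$ is connected, $\widehat c(Y)$, and therefore $c=\tfrac{3}{\Lambda}\widehat c(Y)$, is constant. Likewise $P_i=-2\sqrt{3/\Lambda}\,\widehat\volform_{ij}{}^k D_{kl}Y^jY^l$ is curl-free (shown just before the theorem), so condition (ii) says $P_i\equiv0$, i.e. $\widehat\nabla_i p=0$, whence $p$ is constant. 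With $c,p$ constant I would set $a=c+ip$, cf.\ \eq{afix}; then $c(x^j)+ip(x^j)-a\equiv0$, the first case of \eq{final_Q_ev} degenerates, and the second case gives $Q_{\mathrm{ev}}=O(\Theta^5)$, hence regularity. Uniqueness follows because any other constant $a'$ leaves the nonvanishing coefficient $\tfrac{3}{4}(c+ip-a')|Y|^{-4}$ in the $\Theta^4$ term (here $|Y|\neq0$ is used), so $\widetilde{\mathcal{T}}^{(\mathrm{ev})}$ would diverge like $\Theta^{-1}$.

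Next I would establish the converse together with the concluding equalities. Suppose some $a$ makes $\widetilde{\mathcal{T}}^{(\mathrm{ev})}$ regular, so $Q_{\mathrm{ev}}=O(\Theta^5)$ over all of $\scri^-$. By \eq{final_Q_ev} the $\Theta^4$ coefficient $\tfrac34(c+ip-a)|Y|^{-4}$ must vanish identically; since $|Y|>0$ and $a$ is constant, $c+ip$ is constant, so both $c$ and $p$ are constant, which by the two equivalences above returns (i) and (ii). For the final claim I would compare leading coefficients: the second case of \eq{final_Q_ev} gives $\lim_{\Theta\to0}(\Theta^{-5}Q_{\mathrm{ev}})=\tfrac92\Lambda^{-1}|Y|^{-4}Y^kY^l\widetilde{\mathcal{D}}_{tktl}$, which is literally the expression \eq{leading_order_Q} for $\lim_{\Theta\to0}(\Theta^{-5}Q_0)$, so the two limits coincide. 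Since $\widetilde{\mathcal{T}}|_{\scri^-}$ depends on $Q$ only through $\lim(\Theta^{-5}Q)$ and the regular limit \eq{theta4U} of $\Theta^4\mathcal{U}$, the two rescaled MSTs agree on $\scri^-$ — this is exactly the computation behind Proposition~\ref{rescaledMST_scri}, now read for $Q_{\mathrm{ev}}$.

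The main obstacle here is mild, since the heavy lifting is already encapsulated in \eq{final_Q_ev} and Lemma~\ref{lem_constancy_c}; the residual difficulty is bookkeeping rather than conceptual. The one point demanding genuine care is matching the two \emph{scalar} conditions (constancy of $c$ and of $p$) with the two \emph{tensorial} eigenvector conditions (i) and (ii), which hinges on the identities $\widehat\nabla_l\widehat c(Y)=-2\widehat\volform^m{}_{li}Y^i\widehat C_{mj}Y^j$ and $P_i=\widehat\nabla_i p$ together with connectedness of $\scri^-$ to pass from a vanishing gradient to a true constant. I would also check carefully that the degeneracy of the first case of \eq{final_Q_ev} is genuinely governed by $|Y|\neq0$, so that no spurious cancellation can mask a failure of regularity.
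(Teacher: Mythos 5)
Your proposal is correct and follows essentially the same route as the paper: the paper's argument is precisely the assembly of the two-case expansion \eq{final_Q_ev} (with the regularity criterion $Q_{\mathrm{ev}}=O(\Theta^5)$ and the choice \eq{afix} of the $\sigma$-constant), the identification of constancy of $c$ with condition (i) via Lemma~\ref{lem_constancy_c}, of constancy of $p$ with condition (ii) via $P_i=\widehat\nabla_i p$, and the comparison of $\lim_{\Theta\to 0}(\Theta^{-5}Q_{\mathrm{ev}})$ with \eq{leading_order_Q} to conclude agreement of the rescaled MSTs on $\scri^-$. Your added care about connectedness, the reality of $c$ and $p$ in splitting $a=c+ip$, and the role of $|Y|\neq 0$ in preventing spurious cancellation matches the paper's implicit use of these facts.
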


\begin{remark}
{\rm
For initial data of the form \eq{condition_on_D}-\eq{condition_on_C}, which are necessary for the MST to vanish for some choice of $Q$, 
the conditions (i) and (ii) are satisfied.
}
\end{remark}
%

It is worth to emphasize the roles of $\widehat C_{ij}$ and $D_{ij}$ which enter Theorem~\ref{thm_nec_cond}
as well as Theorem~\ref{prop_Qs} in a completely symmetric manner.


\subsection{(Asymptotically) KdS-like spacetimes}
\label{subsec:AKdS}

In Theorem \ref{prop_Qs} we have obtained necessary and sufficient conditions for 
the rescaled MST 
$\widetilde{\mathcal{T}}^{(\mathrm{ev})}_{\mu\nu\sigma\rho}$ 
 to be regular at $\scri^-$.
As we shall see
in Section~\ref{subsec:FOSH},
this tensor satisfies a homogeneous symmetric hyperbolic Fuchsian system
with data prescribed at $\scri^-$. The zero data is such that its propagation
stays zero. The resulting spacetime has vanishing MST and hence either 
a Kerr-de Sitter metric or one of the related metrics classified in \cite{mars_senovilla}. 
We call such spacetimes \emph{KdS-like}:

\begin{definition}
\label{KdS_like}
Let  $(\mcM,g)$
 be a $\Lambda>0$-vacuum spacetime 
admitting smooth  conformal compactification and corresponding null
infinity $\scri$.  $(\mcM,g)$ is called
``Kerr-de Sitter-like'' at a connected component
$\scri^{-}$ of $\scri$ if it admits a KVF $X$ which induces
a CKVF $Y$ on $\scri^-$, 
such  that the rescaled MST $\widetilde{\mathcal{T}}^{(\mathrm{ev})}_{\mu\nu\sigma\rho}$  vanishes  at $\scri^-$.
\end{definition}

Note that also the Kerr-NUT-de Sitter spacetime  belongs to the class of KdS-like space-times.
In \cite{mpss} we analyze in detail KdS-like space-times which admit a conformally flat $\scri$.

The case where the tensor 
$\widetilde{\mathcal{T}}^{(\mathrm{ev})}_{\mu\nu\sigma\rho}$ is merely assumed to be finite
at $\scri^-$ obviously includes the zero case (i.e. Kerr-de Sitter and
related metrics) 
and at the same time excludes many other $\Lambda$-vacuum
spacetimes with a smooth $\scri^-$. It makes sense to call such spacetimes
{\it asymptotically Kerr-de Sitter-like}. We put forward the following
definition:

\begin{definition}
\label{asympt_KdS_like}
Let  $(\mcM,g)$
 be a $\Lambda>0$-vacuum spacetime 
admitting smooth  conformal compactification and corresponding null
infinity $\scri$.  $(\mcM,g)$ is called
``asymptotically Kerr-de Sitter-like'' at a connected component
$\scri^{-}$ of $\scri$ if it admits a KVF $X$ which induces
a CKVF $Y$ on $\scri^-$,  which satisfies $|Y|^2>0$,
 such that the conditions (i) and (ii) in Theorem~\ref{prop_Qs} are satisfied, or, equivalently,
such that the rescaled MST $\widetilde{\mathcal{T}}^{(\mathrm{ev})}_{\mu\nu\sigma\rho}$  is regular at $\scri^-$.
\end{definition}

\begin{remark}
{\rm
As will be shown  later (cf.\ Corollary~\ref{cor_charact_KdS}), KdS-like space-times have a vanishing MST, whence, as shown in Section~\ref{section_recaled_MS}, the condition
$|Y|^2>0$ follows automatically. In the asymptotically KdS-like case, though, the conditions (i) and (ii) in Theorem~\ref{prop_Qs} might be compatible with zeros of $Y$.
}
\end{remark}


An interesting open problem is to classify ``asymptotically Kerr-de
Sitter-like'' spacetimes.

\subsection{Derivation of  evolution equations for the (rescaled) MST}
\label{sect_deriv_ev_MST}

Based on the corresponding derivation for $\Lambda=0$ in \cite{ik}, we will show that the MST
\begin{eqnarray}
\mathcal{S}^{(\mathrm{ev})}_{\mu\nu\sigma\rho} &=& \mathcal{C}_{\mu\nu\sigma\rho}  + Q_{\mathrm{ev}} \mathcal{U}_{\mu\nu\sigma\rho}
\;,
\end{eqnarray}
with $ Q_{\mathrm{ev}}$ as defined in \eq{ev_dfn_Q}-\eq{defJ},
%
satisfies a symmetric hyperbolic system of evolution equations as well as a system of wave equations.

\subsubsection{An analog to the Bianchi equation}
First, we derive an analog to  the Bianchi equation $\nabla_{\rho}C_{\mu\nu\sigma}{}^{\rho}=0$
for the MST $\mathcal{S}^{(\mathrm{ev})}_{\mu\nu\sigma}{}^{\rho}$.
For this we set
\begin{eqnarray}
 \mathcal{W}_{\alpha\beta} &:=& R^{-1}\mathcal{F}_{\alpha\beta} 
\;,
\end{eqnarray}
so that 
\begin{eqnarray}
Q_{\mathrm{ev}} \mathcal{U}_{\alpha\beta\mu\nu} &=& \Big( \frac{3 J}{R} - \frac{\Lambda}{R^2}\Big) \Big(- \mathcal{F}_{\alpha\beta}  \mathcal{F}_{\mu\nu}+ \frac{1}{3}\mathcal{F}^2\mathcal{I}_{\alpha\beta\mu\nu} \Big)
\\
&=&   (  \Lambda -3 J R ) \Big( \mathcal{W}_{\alpha\beta}  \mathcal{W}_{\mu\nu} +  \frac{4}{3}\mathcal{I}_{\alpha\beta\mu\nu} \Big)
\;.
\end{eqnarray}
Differentiation yields
\begin{eqnarray}
\nabla_{\rho} (Q_{\mathrm{ev}} \mathcal{U}_{\alpha\beta\mu\nu})
&=&  -3 \nabla_{\rho}(J R) \Big( \mathcal{W}_{\alpha\beta}  \mathcal{W}_{\mu\nu} +  \frac{4}{3}\mathcal{I}_{\alpha\beta\mu\nu} \Big)
\nonumber
\\
&&
+ (  \Lambda -3 J R )(  \mathcal{W}_{\mu\nu} \nabla_{\rho}\mathcal{W}_{\alpha\beta}   +  \mathcal{W}_{\alpha\beta}  \nabla_{\rho} \mathcal{W}_{\mu\nu}  )
\;.
\end{eqnarray}
First of all let us calculate the covariant derivative of $\mathcal{W}_{\mu\nu}$. From
\begin{eqnarray}
\nabla_{\rho}\mathcal{F}_{\mu\nu} &=& X^{\sigma}\Big(\mathcal{C}_{\mu\nu\sigma\rho} + \frac{4}{3}\Lambda \mathcal{I}_{\mu\nu\sigma\rho}\Big)
\\
 &=& X^{\sigma}\Big(\mathcal{S}_{\mu\nu\sigma\rho} - Q_{\mathrm{ev}} \mathcal{U}_{\mu\nu\sigma\rho} + \frac{4}{3}\Lambda \mathcal{I}_{\mu\nu\sigma\rho}\Big)
\\
 &=&   \frac{ 3 J R - \Lambda}{2R^2} \sigma_{\rho}\mathcal{F}_{\mu\nu} 
+4   J RX^{\sigma} \mathcal{I}_{\mu\nu\sigma\rho} + X^{\sigma}\mathcal{S}_{\mu\nu\sigma\rho} 
\;,
\\
\nabla_{\rho}\mathcal{F}^2 &=&  2\mathcal{F}^{\mu\nu}\nabla_{\rho}\mathcal{F}_{\mu\nu}
\\
&=& -4 ( 2 J R - \Lambda) \sigma_{\rho}
+ 2X^{\sigma} \mathcal{F}^{\mu\nu}\mathcal{S}_{\mu\nu\sigma\rho} 
\\
&=& \frac{1}{3} ( 2 Q_{\mathrm{ev}}\mathcal{F}^2  +4 \Lambda) \sigma_{\rho}
+ 2X^{\sigma} \mathcal{F}^{\mu\nu}\mathcal{S}_{\mu\nu\sigma\rho} 
\;,
\\
\nabla_{\rho} R &=& -\frac{1}{2} \nabla_{\rho}\sqrt{-\mathcal{F}^2}\,=\, -\frac{1}{8}R^{-1} \nabla_{\rho}\mathcal{F}^2
\\
&=&\frac{1}{6} \Big( 2 Q_{\mathrm{ev}}R - \frac{\Lambda}{R}\Big) \sigma_{\rho}
 -\frac{1}{4}X^{\sigma} \mathcal{W}^{\mu\nu}\mathcal{S}_{\mu\nu\sigma\rho} 
\\
&=& \Big( J-\frac{\Lambda}{2R} \Big) \sigma_{\rho}
 -\frac{1}{4}X^{\sigma} \mathcal{W}^{\mu\nu}\mathcal{S}_{\mu\nu\sigma\rho} 
\;,
\end{eqnarray}
we deduce that
\begin{eqnarray}
 \nabla_{\rho}\mathcal{W}_{\mu\nu} &=&  R^{-1} \nabla_{\rho}\mathcal{F}_{\mu\nu} -R^{-1} \mathcal{W}_{\mu\nu} \nabla_{\rho} R
\\
 &=&   
 \frac{  J  }{2R} \sigma_{\rho}\mathcal{W}_{\mu\nu} 
+4   J X^{\sigma} \mathcal{I}_{\mu\nu\sigma\rho}
 +  R^{-1}  X^{\sigma}\mathcal{S}_{\mu\nu\sigma\rho} 
\nonumber
\\
&& +  \frac{1}{4R} X^{\sigma} \mathcal{W}_{\mu\nu}\mathcal{W}^{\alpha\beta}\mathcal{S}_{\alpha\beta\sigma\rho} 
\;,
\\
 \nabla^{\nu}\mathcal{W}_{\mu\nu} 
 &=&  
 \frac{  J  }{2R} \sigma^{\nu}\mathcal{W}_{\mu\nu}  + 3JX_{\mu}
 +\frac{1}{4R}  X^{\sigma} \mathcal{W}_{\mu}{}^{\rho}\mathcal{W}^{\alpha\beta}\mathcal{S}_{\alpha\beta\sigma\rho} 
\\
 &=&  
 2 J X_{\mu}
 +\frac{1}{4R}  X^{\sigma} \mathcal{W}_{\mu}{}^{\rho}\mathcal{W}^{\alpha\beta}\mathcal{S}_{\alpha\beta\sigma\rho} 
\;,
\end{eqnarray}
where we used that $\sigma^{\rho}\mathcal{F}_{\mu\rho} = \frac{1}{2}\mathcal{F}^2X_{\mu}$ \cite{mars_senovilla}.

Taking the derivative of \eq{quadratic}, we obtain
\begin{eqnarray}
\nabla_{\rho} (JR)
&=&
\frac{  J^2\sigma }{R-J\sigma} \Big(  \frac{R}{2\sigma}\sigma_{\rho}  -  \nabla_{\rho}  R\Big)
\\
&=& 
\frac{  J }{R-J\sigma}\Big(  \frac{-3 JR^2 + 2R\Lambda + \Lambda \sigma  J}{2R} \sigma_{\rho}
 + \frac{1}{4}  J\sigma   X^{\sigma} \mathcal{W}^{\mu\nu}\mathcal{S}_{\mu\nu\sigma\rho}  
\Big)
\phantom{xxxx}
\\
&=& 
\frac{J}{2R} (3JR-\Lambda) \sigma_{\rho}
 + \frac{  J^2\sigma }{4(R-J\sigma)}    X^{\sigma} \mathcal{W}^{\mu\nu}\mathcal{S}_{\mu\nu\sigma\rho}  
\;.
\end{eqnarray}

Altogether we find 
\begin{eqnarray}
\nabla_{\rho} (Q_{\mathrm{ev}} \mathcal{U}_{\alpha\beta\mu}{}^{\rho})
&=&  J  (3JR-\Lambda) 
 \Big(2   X_{\mu} \mathcal{W}_{\alpha\beta}-   \frac{2}{R}\sigma^{\rho} \mathcal{I}_{\alpha\beta\mu\rho}
-4   X^{\sigma}\mathcal{W}_{\mu}{}^{\rho} \mathcal{I}_{\alpha\beta\sigma\rho}
  \Big)
\nonumber
\\
&&
 + (  \Lambda -3 J R) R^{-1}  X^{\sigma}\mathcal{W}_{\mu}{}^{\rho}\Big(    \mathcal{S}_{\alpha\beta\sigma\rho} 
+ \frac{1}{2}\mathcal{W}_{\alpha\beta} \mathcal{W}^{\gamma\delta}\mathcal{S}_{\gamma\delta\sigma\rho}  
\Big)
\nonumber
\\
&&
 -  \frac{ 3 J^2\sigma }{4(R-J\sigma)}    X^{\sigma} \mathcal{W}^{\gamma\delta}\mathcal{S}_{\gamma\delta\sigma\rho}  
\Big( \mathcal{W}_{\alpha\beta}  \mathcal{W}_{\mu}{}^{\rho} +  \frac{4}{3}\mathcal{I}_{\alpha\beta\mu}{}^{\rho} \Big)
\;.
\phantom{xxxxx}
\end{eqnarray}
Using
\begin{equation}
 \mathcal{F}_{(\mu}{}^{\sigma}\mathcal{I}_{\nu)\sigma\alpha\beta}
\,=\,
\frac{1}{4} g_{\mu\nu}\mathcal{F}_{\alpha\beta}
\;,
\end{equation}
cf.\ \cite[Equation (4.37)]{ik},
we obtain
\begin{eqnarray}
 \frac{2}{R}\sigma^{\rho} \mathcal{I}_{\alpha\beta\mu\rho}
+4   X^{\sigma}\mathcal{W}_{\mu}{}^{\rho} \mathcal{I}_{\alpha\beta\sigma\rho}
&=& 4X^{\sigma}(\mathcal{W}_{\sigma}{}^{\rho} \mathcal{I}_{\alpha\beta\mu\rho}
+\mathcal{W}_{\mu}{}^{\rho} \mathcal{I}_{\alpha\beta\sigma\rho})
\\
&=& 2X_{\mu}\mathcal{W}_{\alpha\beta}
\;,
\end{eqnarray}
and thus
\begin{eqnarray}
\nabla_{\rho} \mathcal{S}^{(\mathrm{ev})}_{\alpha\beta\mu}{}^{\rho}
&=&
 (  \Lambda -3 J R)\mathcal{W}_{\mu}{}^{\rho}\Big(     R^{-1}  X^{\sigma}\mathcal{S}^{(\mathrm{ev})}_{\alpha\beta\sigma\rho} 
+ \frac{1}{2} R^{-1}  X^{\sigma} \mathcal{W}_{\alpha\beta} \mathcal{W}^{\gamma\delta}\mathcal{S}^{(\mathrm{ev})}_{\gamma\delta\sigma\rho}  
\Big)
\nonumber
\\
&&
 -  \frac{ 3 J^2\sigma }{4(R-J\sigma)}    X^{\sigma} \mathcal{W}^{\gamma\delta}\mathcal{S}^{(\mathrm{ev})}_{\gamma\delta\sigma\rho}  
\Big( \mathcal{W}_{\alpha\beta}  \mathcal{W}_{\mu}{}^{\rho} +  \frac{4}{3}\mathcal{I}_{\alpha\beta\mu}{}^{\rho} \Big)
\\
&=&
\frac{  \Lambda -3 J R}{R}\mathcal{W}_{\mu}{}^{\rho}   X^{\sigma}\mathcal{S}^{(\mathrm{ev})}_{\alpha\beta\sigma\rho} 
 -  \frac{  J^2\sigma }{R-J\sigma}    X^{\sigma}
 \mathcal{I}_{\alpha\beta\mu}{}^{\rho}  \mathcal{W}^{\gamma\delta} \mathcal{S}^{(\mathrm{ev})}_{\gamma\delta\sigma\rho}  
\nonumber
\\
&&-\frac{\Lambda}{4R} \frac{ R+  2  J\sigma }{R-J\sigma}  X^{\sigma} \mathcal{W}_{\alpha\beta} \mathcal{W}^{\gamma\delta}  \mathcal{W}_{\mu}{}^{\rho} \mathcal{S}^{(\mathrm{ev})}_{\gamma\delta\sigma\rho}  
\;.
\end{eqnarray}
Expressed in terms of $\mathcal{F}_{\mu\nu}$ and $Q_{\mathrm{ev}}$ we finally end up with the desired  equation for the MST, which may be regarded
as an analog to the Bianchi equation,
%
\begin{eqnarray}
\nabla_{\rho} \mathcal{S}^{(\mathrm{ev})}_{\alpha\beta\mu}{}^{\rho}
&=&
 \Big(4 \Lambda \frac{5Q_{\mathrm{ev}}\mathcal{F}^2 + 4\Lambda }{ Q_{\mathrm{ev}}\mathcal{F}^2 + 8\Lambda} \mathcal{F}_{\alpha\beta} \mathcal{F}_{\mu\rho}  + \frac{2}{3}\mathcal{F}^{2} \frac{ Q_{\mathrm{ev}}^2\mathcal{F}^4 -2 \Lambda Q_{\mathrm{ev}}\mathcal{F}^2 -8\Lambda^2 }{Q_{\mathrm{ev}}\mathcal{F}^2 + 8\Lambda}
 \mathcal{I}_{\alpha\beta\mu\rho} 
 \Big) 
\nonumber
\\
&&
\times   \mathcal{F}^{-4} X^{\sigma} \mathcal{F}^{\gamma\delta}  \mathcal{S}^{(\mathrm{ev})}_{\gamma\delta\sigma}{}^{\rho}  
-Q_{\mathrm{ev}} \mathcal{F}_{\mu\rho}   X^{\sigma}\mathcal{S}^{(\mathrm{ev})}_{\alpha\beta\sigma}{}^{\rho} 
\label{full_eqn_MST}
\\
&=&
-4 \Lambda  \frac{  5  Q_{\mathrm{ev}}\mathcal{F}^2  +4\Lambda }{Q_{\mathrm{ev}}\mathcal{F}^2 + 8\Lambda}
 \mathcal{U}_{\alpha\beta\mu\rho} 
   \mathcal{F}^{-4} X^{\sigma} \mathcal{F}^{\gamma\delta}  \mathcal{S}^{(\mathrm{ev})}_{\gamma\delta\sigma}{}^{\rho}  
\nonumber
\\
&& +  Q_{\mathrm{ev}}X^{\sigma}\Big( \frac{2}{3}
 \mathcal{I}_{\alpha\beta\mu\rho}  \mathcal{F}^{\gamma\delta}  \mathcal{S}^{(\mathrm{ev})}_{\gamma\delta\sigma}{}^{\rho}  
- \mathcal{F}_{\mu\rho} \mathcal{S}^{(\mathrm{ev})}_{\alpha\beta\sigma}{}^{\rho} 
\Big)
\\
&=:& \mathcal{J}(\mathcal{S^{(\mathrm{ev})}})_{\alpha\beta\mu}
\;.
\label{phys_ev}
\end{eqnarray}
%
Here we have introduced the shorthand $ \mathcal{J}(\mathcal{S^{(\mathrm{ev})}})_{\alpha\beta\mu}$ for the righthand side, which is a double $(2,1)$-form,
 linear and homogeneous in $\mathcal{S}^{(\mathrm{ev})}_{\alpha\beta\sigma}{}^{\rho}$, with the following properties
\begin{equation}
 \mathcal{J}(\mathcal{S^{(\mathrm{ev})}})_{\alpha\beta\mu}= \mathcal{J}(\mathcal{S^{(\mathrm{ev})}})_{[\alpha\beta]\mu},
 \hspace{3mm}  \mathcal{J}(\mathcal{S^{(\mathrm{ev})}})_{[\alpha\beta\mu]}=0, \hspace{3mm}
  \mathcal{J}(\mathcal{S^{(\mathrm{ev})}})^\rho{}_{\beta\rho}=0 . \label{Jsymprop}
\end{equation}
 It is also self-dual in the first pair of anti-symmetric indices:
$\mathcal{J}^{\star}(\mathcal{S^{(\mathrm{ev})}})_{\alpha\beta\mu}=-i \mathcal{J}(\mathcal{S^{(\mathrm{ev})}})_{\alpha\beta\mu}$.
%



Using the fact that the MST $\mathcal{S}^{(\mathrm{ev})}_{\mu\nu\sigma}{}^{\rho}$ has all the algebraic symmetries of the Weyl tensor, we immediately obtain
a Bianchi-like equation  from \eq{phys_ev}
 for the rescaled MST $\widetilde{\mathcal{T}}^{(\mathrm{ev})}_{\mu\nu\sigma}{}^{\rho}$  in the conformally rescaled ``unphysical'' spacetime,
\begin{equation}
 \widetilde\nabla_{\rho}\widetilde{\mathcal{T}}^{(\mathrm{ev})}_{\alpha\beta\mu}{}^{\rho} \,=\, \Theta^{-1}\nabla_{\rho}\mathcal{S}^{(\mathrm{ev})}_{\alpha\beta\mu}{}^{\rho}
\,=\,  \Theta^{-1} \mathcal{J}(\mathcal{S}^{(\mathrm{ev})})_{\alpha\beta\mu}\,=\,   \mathcal{J}(\widetilde{\mathcal{T}}^{(\mathrm{ev})})_{\alpha\beta\mu}
\;.
\label{unphys_ev}
\end{equation}

\subsubsection{A symmetric hyperbolic system satisfied  by the  (rescaled) MST}
\label{subsec:FOSH}

We now want to show that the
equations \eq{phys_ev} and \eq{unphys_ev} contain a system of linear first-order symmetric hyperbolic equations in their respective spacetimes. 
Given that \eq{phys_ev} and \eq{unphys_ev} have exactly the same structure, it is enough to perform the analysis for any one of the two systems, or to a model equivalent system in a given spacetime. Then, a further analysis of the regularity 
of the system \eq{unphys_ev} near $\scri$ is also needed, and this will be done later in this section.

Let $\mathcal{S}_{\alpha\beta\lambda}{}^\mu$ represent either $\mathcal{S}^{(\mathrm{ev})}_{\alpha\beta\lambda}{}^{\mu}$ 
or $\widetilde{\mathcal{T}}^{(\mathrm{ev})}_{\alpha\beta\lambda}{}^{\mu}$, or for that
matter, any other self-dual symmetric and traceless  double (2,2)-form satisfying a system of equations such as \eq{phys_ev} or \eq{unphys_ev}:
\begin{equation}
\nabla_\rho {\cal S}_{\gamma\mu\nu}{}^\rho= {\cal J}_{\gamma\mu\nu}({\cal S}), \hspace{1cm} 
\label{delta1T}
\end{equation}
where $ {\cal J}_{\gamma\mu\nu}({\cal S})$ is a self-dual double $(2,1)$-form, 
linear and homogeneous in $\mathcal{S}_{\alpha\beta\sigma}{}^{\rho}$, with the properties given in \eq{Jsymprop}. 
Employing the  fact that the rescaled MST satisfies all the algebraic symmetries of the rescaled Weyl tensor 
we find that this system is equivalent to (cf.\ \cite{penrose, ik})
\begin{equation}
\label{d1tildeT}
3 \nabla_{[\sigma}{\mathcal{S}}_{\mu\nu]\alpha\beta} =
-\frac{1}{2} \volform_{\sigma\mu\nu\kappa} \volform^{\gamma\delta\rho\kappa}\nabla_{\gamma}
{\mathcal{S}}_{\delta\rho\alpha\beta} 
= \volform_{\sigma\mu\nu}{}^{\kappa}\nabla_{\gamma}
({\mathcal{S}}_{\alpha\beta\kappa}{}^{\gamma} )^{\star}
= - i \volform_{\sigma\mu\nu}{}^{\kappa}\nabla_{\gamma}
{\mathcal{S}}_{\alpha\beta\kappa}{}^{\gamma} 
\end{equation}
that is to say
\be
3 \nabla_{[\sigma}{\mathcal{S}}_{\mu\nu]\alpha\beta} =
- i\volform_{\mu\nu\sigma}{}^{\rho}
\mathcal{J}({\mathcal{S}})_{\alpha\beta\rho}\label{d1T}
\ee


Observe that each of \eq{delta1T} and \eq{d1T} contains 8 complex (16 real) independent equations for only 5 complex (10 real) unknowns, hence they are overdetermined.

Systems of this type have been analyzed many times in the literature (in order to see if they comprise a symmetric hyperbolic system), 
especially in connection with the Bianchi identities \cite{ChY2,Bo}. 
Here, to check that the systems \eq{delta1T}, or \eq{d1T}, and therefore \eq{phys_ev} and \eq{unphys_ev}, contain symmetric hyperbolic evolution equations we use the general ideas exposed in \cite{G}, 
which were applied to systems more general than --and including-- those of type 
(\ref{delta1T}) or (\ref{d1T}) and discussed at length  in \cite{S}. The goal is to find a ``hyperbolization'', in the sense of \cite{G}. 
To that end, simplifying the calculations in \cite{S}, pick {\em any} timelike vector $v^\mu$ and contract \eq{delta1T} with
\begin{equation}
\label{Factor1}
- v_{[\alpha} \delta^\nu_{\beta]}v^{[\gamma} v_{[\delta} \delta^{\mu]}_{\epsilon]}
\end{equation}
and add the result to the contraction of \eq{d1T} with
\begin{equation}
\label{Factor2}
-\frac{1}{2} v^{[\lambda} \delta^\tau_\alpha \delta^{\nu]}_\beta v^{[\gamma} v_{[\delta} \delta^{\mu]}_{\epsilon]}
\end{equation}
to arrive at the following system
\begin{equation}
Q^{\tau}{}^{\lambda\nu\gamma\mu}_{\alpha\beta\delta\epsilon} \nabla_\tau {\cal S}_{\lambda\nu\gamma\mu} = {\cal J}_{\alpha\beta\delta\epsilon} \label{QnablaS}
\end{equation}
with
$$
Q^{\tau}{}^{\lambda\nu\gamma\mu}_{\alpha\beta\delta\epsilon} =v^{[\gamma} v_{[\delta} \delta^{\mu]}_{\epsilon]}\Big(g^{\tau[\lambda}v_{[\alpha} \delta^{\nu]}_{\beta]} +\delta^\tau_{[\alpha}v^{[\lambda} \delta^{\nu]}_{\beta]} -\frac{1}{2} v^\tau \delta^{\lambda}_{[\alpha}\delta^\nu_{\beta]} \Big)\, .
$$
By construction, the righthand side of (\ref{QnablaS}) is linear in ${\cal J}_{\gamma\mu\nu}(S)$ and {\em a fortiori} linear in ${\cal S}_{\gamma\mu\nu\rho}$, 
so that its explicit expression is unimportant. The system \eq{QnablaS} is symmetric hyperbolic. 
To prove it, we have to check two properties of $Q^{\tau}{}^{\lambda\nu\gamma\mu}_{\alpha\beta\delta\epsilon}$: 
it must be Hermitian in $\lambda\nu\gamma\mu \leftrightarrow {\alpha\beta\delta\epsilon}$, 
and there must exist a one-form $u_\tau$ such that its contraction with $Q^{\tau}{}^{\lambda\nu\gamma\mu}_{\alpha\beta\delta\epsilon}$ is positive definite.

The first condition can be easily checked by first noticing that $Q^{\tau}{}^{\lambda\nu\gamma\mu}_{\alpha\beta\delta\epsilon}$ 
happens to be real and then contracting it with two arbitrary self-dual trace-free double (2,2) forms, say ${\cal A}_{\lambda\nu\gamma\mu}$ and ${\cal B}^{\alpha\beta\delta\epsilon}$. The result is 
\begin{eqnarray*}
v^\beta v^\lambda v^\mu \Big[{\cal A}^\tau{}_{\rho\lambda\sigma} {\cal B}_{\beta}{}^\rho{}_\mu{}^\sigma +{\cal B}^\tau{}_{\rho\lambda\sigma} {\cal A}_{\beta}{}^\rho{}_\mu{}^\sigma -\frac{1}{2}\delta^\tau_\beta {\cal A}_{\alpha\rho\lambda\sigma} {\cal B}^{\alpha\rho}{}_\mu{}^\sigma \Big]
\;,
\end{eqnarray*}
which is manifestly symmetric under the interchange of ${\cal A}$ and ${\cal B}$. Thus, the matrix of the system \eq{QnablaS} is Hermitian.

With regard to the second condition, we contract $Q^{\tau}{}^{\lambda\nu\gamma\mu}_{\alpha\beta\delta\epsilon}$ with 
${\cal A}_{\lambda\nu\gamma\mu}$, $\overline{\cal A}^{\alpha\beta\delta\epsilon}$,
and with $u_\tau$.
We stress that in this section an overbar means ``complex conjugation'' rather than ``restriction to $\scri$''.
We get
\begin{eqnarray*}
u_\tau v^\beta v^\lambda v^\mu  ({\cal A}^\tau{}_{\rho\lambda\sigma} \overline{\cal A}_{\beta}{}^\rho{}_\mu{}^\sigma +\overline{\cal A}^\tau{}_{\rho\lambda\sigma} {\cal A}_{\beta}{}^\rho{}_\mu{}^\sigma )= 2\,  u_\tau v^\beta v^\lambda v^\mu  ({\cal A}^\tau{}_{\rho\lambda\sigma} \overline{\cal A}_{\beta}{}^\rho{}_\mu{}^\sigma )
\;
\end{eqnarray*}
and note that the expression in brackets is precisely the Bel-Robinson superenergy tensor $t_{\tau\beta\lambda\mu}$ of the self-dual Weyl-type tensor ${\cal A}_{\alpha\beta\lambda\mu}$ \cite{penrose,S1}. It is known that this tensor satisfies the dominant property \cite{penrose,S1}, that is, $t_{\tau\beta\lambda\mu}v_1^\tau v_2^\beta v_3^\lambda v_4^\mu >0$ for arbitrary future-pointing  timelike vectors $v_1^\tau, v_2^\beta, v_3^\lambda, v_4^\mu$. Thus, the previous expression is positive for {\em any} timelike $u_\tau$ with the same time orientation as $v^\tau$, as required.
Actually, by choosing $u^{\tau}=v^{\tau}$ it becomes (twice) the so-called super-energy density of ${\cal A}_{\alpha\beta\lambda\mu}$ relative to $v^{\tau}$:
\be
W_{v}({\cal A}) := v^\tau v^\beta v^\lambda v^\mu  {\cal A}_{\tau\rho\lambda\sigma} \overline{\cal A}_{\beta}{}^\rho{}_\mu{}^\sigma \label{s-e}
\ee
which is  non-negative, vanishing if and only if so does the full ${\cal A}_{\alpha\beta\lambda\mu}$ \cite{S1}.
 
In order to see the relation between the found symmetric hyperbolic system \eq{QnablaS}
and the original equations \eq{delta1T} or \eq{d1T}, we take into account that $Q^{\tau}{}^{\lambda\nu\gamma\mu}_{\alpha\beta\delta\epsilon}$ is 
non-degenerate as an operator acting on 2-forms in its 
"$\gamma\mu , \delta\epsilon$" part so that \eq{QnablaS} is actually fully equivalent to
$$
3v^\lambda \nabla_{[\lambda}{\cal S}_{\tau\nu]\gamma\mu}
+2\nabla_\rho {\cal S}_{\gamma\mu[\nu}{}^\rho v_{\tau]} 
=i v^\lambda  {\cal J}_{\gamma\mu\sigma}(S) \volform^\sigma{}_{\lambda\tau\nu}
+2 {\cal J}_{\gamma\mu[\nu}(S)v_{\tau]}
\;.
$$
There is some redundancy here due to the equivalence of \eq{delta1T} and \eq{d1T}. To optimize the expression of this symmetric hyperbolic system we note that, 
via the identity (\ref{d1tildeT}), it can be rewritten as
$$
\overline{\cal I}^{\tau\nu}{}_{\lambda\sigma}\left(\nabla_\rho {\cal S}_{\gamma\mu[\nu}{}^\rho- {\cal J}_{\gamma\mu[\nu}({\cal S})\right)v_{\tau]}=0
$$
which is easily seen to be equivalent to
\be
\left(\nabla_\rho {\cal S}_{\gamma\mu[\nu}{}^\rho- {\cal J}_{\gamma\mu[\nu}({\cal S})\right)v_{\tau]}=0 \label{evol2}
\ee
with $v_\beta$ any timelike vector. The linear symmetric hyperbolic set (\ref{evol2})
constitutes  the {\em evolution} equations of our system. 
Note that, taking into account trace and symmetry properties, there are precisely 5 complex (10 real) independent equations in (\ref{evol2}), 
which is the number of independent unknowns.

The complete system (\ref{delta1T}) is re-obtained by adding the constraints, which can be written for any given spacelike hypersurface $\Sigma$ with timelike normal 
$n^\mu$ as (cf.\ \cite{S}, section 4)
\be
n^\nu\left(\nabla_\rho {\cal S}_{\gamma\mu\nu}{}^\rho- {\cal J}_{\gamma\mu\nu}({\cal S})\right)=0\;. \label{const2}
\ee
Notice, first of all, that only derivatives {\em tangent} to $\Sigma$ appear in 
(\ref{const2}). Observe furthermore that (\ref{const2}) contains 3 complex (6 real) independent equations  
which adds up with (\ref{evol2}) to the number of equations of the original system (\ref{delta1T}),
 rendering
the former two equations fully equivalent with the latter one.
To check this directly, contract (\ref{evol2}) with $v^\tau$ to get
$$
(-v^2 \delta^\sigma_\nu +v^\sigma v_\nu)\left(\nabla_\rho {\cal S}_{\gamma\mu\sigma}{}^\rho- {\cal J}_{\gamma\mu\sigma}({\cal S})\right)=0
\;,
$$
where 
\be
h^\sigma{}_\gamma := \delta^\sigma_\gamma -v^{-2}v_\gamma v^\sigma
\label{proj}
\ee
is the projector orthogonal to $v^\mu$, which immediately leads to (\ref{delta1T}) by taking into account (\ref{const2})--- e.g., by simply choosing  $v^\mu$ pointing along $n^\mu$.

As mentioned before, (\ref{evol2}) contains 5 equations for the 5 complex  independent unknowns in ${\cal S}_{\gamma\mu\sigma}{}^\rho$. A convenient way of explicitly expressing this fact is by recalling the following identity
\bean
{\cal S}_{\alpha\beta\lambda\mu}=2\left[(h_{\alpha[\lambda} -v^{-2}v_\alpha v_{[\lambda}){\cal E}_{\mu]\beta} +(h_{\beta[\mu} -v^{-2}v_\beta v_{[\mu}){\cal E}_{\lambda]\alpha}\right.\\
\left. -i v^{{-2}}v^{\rho}\volform_{\rho\lambda\mu\sigma}v_{[\alpha}{\cal E}_{\beta]}{}^{\sigma}-i v^{{-2}}v^{\rho}\volform_{\rho\alpha\beta\sigma}v_{[\lambda}{\cal E}_{\mu]}{}^{\sigma}\right]
\eean
in terms of the spatial ``electric-magnetic'' tensor defined for any timelike $v^{\tau}$ by
\be
{\cal E}_{\beta\mu}:= -v^{-2} v^{\alpha} v^{\lambda} {\cal S}_{\alpha\beta\lambda\mu} . \label{e-m}
\ee
Observe the following properties
$$
{\cal E}_{\beta\mu}={\cal E}_{\mu\beta}, \hspace{3mm} {\cal E}_{\beta\mu}v^{\mu} =0,
\hspace{3mm} {\cal E}^{\mu}{}_{\mu}=0.
$$
Thus, ${\cal E}_{\beta\mu}$ contains 5 complex independent components and exactly the same information as the full ${\cal S}_{\gamma\mu\sigma}{}^\rho$. Note that the density (\ref{s-e}) is then expressed simply as
\be
W_{v}({\cal S}) = {\cal E}_{\mu\nu}\overline{\cal E}^{\mu\nu} .\label{s-e2}
\ee
In any orthonormal basis with its timelike `t'-part aligned with $v^{\mu}$, the five independent components of ${\cal E}_{\mu\nu}$ are given simply by
$$
{\cal E}_{ij} = {\cal S}_{titj}, \hspace{1cm} {\cal S}_{tijl}=i \volform^{t}{}_{kjl}{\cal E}_{i}^{k}
$$
where the second equation follows from the self-duality of ${\cal S}$. Using this, the evolution equations (\ref{evol2}) become simply
\be
\nabla_{\rho}{\cal S}_{t(ij)}{}^{\rho} =\nabla_{t} {\cal E}_{ij}+i \volform^{t}{}_{lk(j}\nabla^{l}{\cal E}_{i)}^{k}={\cal J}({\cal S})_{t(ij)}\label{evol3}
\ee
while the constraint equations (\ref{const2}) (with $n^{\tau}$ pointing along $v^{\tau}$) read
\be
\nabla_{\rho}{\cal S}_{tit}{}^{\rho}=\nabla_{j}{\cal E}_{i}{}^{j}={\cal J}({\cal S})_{tit}\, .
\label{const3}
\ee
We will use these expressions later for the case  ${\cal S}_{\alpha\beta\gamma\delta} = 
\widetilde{\mathcal{T}}^{(\mathrm{ev})}_{\alpha\beta\gamma\delta}$, 
to prove uniqueness of the solutions to \eq{unphys_ev}.

All in all, as a generalization of \cite[Theorem~4.5 \& 4.7]{ik} we have obtained
\begin{lemma}
\label{sym_hyp_MST}
\begin{enumerate}
\item[(i)] The MST $\mathcal{S}^{(\mathrm{ev})}_{\mu\nu\sigma\rho}$ satisfies, for any sign of the cosmological constant $\Lambda$, a  linear, homogeneous symmetric hyperbolic system
of evolution equations in $(\mcM,g)$.
\item[(ii)] The rescaled 
MST $\widetilde{\mathcal{T}}^{(\mathrm{ev})}_{\mu\nu\sigma\rho}$ satisfies, for any sign of the cosmological constant $\Lambda$, a  linear, homogeneous symmetric hyperbolic system
of evolution equations  in $(\widetilde{\mcM\enspace}\hspace{-0.5em} ,\widetilde g)$.
\end{enumerate}
\end{lemma}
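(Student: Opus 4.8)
The plan is to recognise that both parts of the statement are instances of a single abstract result that the preceding subsection has already set up, so that almost no new computation is required. First I would record that $\mathcal{S}^{(\mathrm{ev})}_{\mu\nu\sigma\rho}$ and its rescaled counterpart $\widetilde{\mathcal{T}}^{(\mathrm{ev})}_{\mu\nu\sigma\rho}$ both inherit all the algebraic symmetries of a self-dual Weyl field (trace-free double $(2,2)$-form, self-dual in the first index pair), and that equations \eqref{phys_ev} and \eqref{unphys_ev} cast them, in $(\mcM,g)$ and in $(\widetilde{\mcM\enspace}\hspace{-0.5em},\widetilde g)$ respectively, into the common form \eqref{delta1T}, namely $\nabla_\rho\mathcal{S}_{\gamma\mu\nu}{}^\rho=\mathcal{J}_{\gamma\mu\nu}(\mathcal{S})$, with a source $\mathcal{J}$ that is self-dual in $\gamma\mu$, linear and homogeneous in $\mathcal{S}$, and obeys the symmetry and trace relations \eqref{Jsymprop}. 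It therefore suffices to prove the conclusion once, for an arbitrary self-dual trace-free double $(2,2)$-form obeying \eqref{delta1T} with such a $\mathcal{J}$, and then to specialise $\mathcal{S}\mapsto\mathcal{S}^{(\mathrm{ev})}$ for (i) and $\mathcal{S}\mapsto\widetilde{\mathcal{T}}^{(\mathrm{ev})}$ for (ii).

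Second I would invoke the hyperbolisation displayed above: pass from \eqref{delta1T} to its equivalent self-dual form \eqref{d1T}, contract the two with the factors \eqref{Factor1} and \eqref{Factor2} built from an arbitrary timelike $v^\mu$, and add them to obtain the first-order system \eqref{QnablaS} with principal symbol $Q^{\tau}{}^{\lambda\nu\gamma\mu}_{\alpha\beta\delta\epsilon}$. I would then verify the two conditions for symmetric hyperbolicity: that $Q^{\tau}$ is real and Hermitian under $\lambda\nu\gamma\mu\leftrightarrow\alpha\beta\delta\epsilon$ (seen by contracting with two arbitrary such forms and exhibiting a manifestly symmetric result), and that $u_\tau Q^{\tau}$ is positive definite for a suitable timelike $u_\tau$. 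Separating the overdetermined system into the genuine evolution equations \eqref{evol2} and the constraints \eqref{const2}, and checking that the two are jointly equivalent to \eqref{delta1T}, then identifies \eqref{evol2} as the desired linear, homogeneous symmetric hyperbolic set, whose $5$ complex unknowns are encoded in the electric-magnetic tensor \eqref{e-m}.

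The hard part will be the positivity of the principal symbol, which is precisely what upgrades ``symmetric'' to ``symmetric hyperbolic''. Here I would reduce the contraction of $u_\tau Q^{\tau}$ with $\mathcal{A}$ and $\overline{\mathcal{A}}$ to the super-energy density \eqref{s-e}, that is to (twice) $W_v(\mathcal{A})=\mathcal{E}_{\mu\nu}\overline{\mathcal{E}}^{\mu\nu}$, and appeal to the dominant property of the Bel--Robinson superenergy tensor, $t_{\tau\beta\lambda\mu}v_1^\tau v_2^\beta v_3^\lambda v_4^\mu>0$ for future-pointing timelike arguments. A point I would stress at the end is the uniformity in $\Lambda$: the entire construction and the Bel--Robinson positivity use only the algebraic symmetries of $\mathcal{S}$ and the structural properties \eqref{Jsymprop} of $\mathcal{J}$, never the explicit $\Lambda$-dependent coefficients of \eqref{full_eqn_MST}. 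Hence the conclusion holds for every sign of $\Lambda$, and applying the abstract result in $(\mcM,g)$ yields (i) and in $(\widetilde{\mcM\enspace}\hspace{-0.5em},\widetilde g)$ yields (ii).
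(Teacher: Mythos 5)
Your proposal is correct and takes essentially the same route as the paper: the paper's proof of Lemma~\ref{sym_hyp_MST} consists precisely of abstracting \eqref{phys_ev} and \eqref{unphys_ev} to the model system \eqref{delta1T} with a source obeying \eqref{Jsymprop}, performing the Geroch-style hyperbolization via the contractions \eqref{Factor1}--\eqref{Factor2} to reach \eqref{QnablaS}, verifying Hermiticity and establishing positivity of the symbol through the dominant property of the Bel--Robinson superenergy, and then splitting into the evolution equations \eqref{evol2} and constraints \eqref{const2}. Your closing observation that only the algebraic structure (never the explicit $\Lambda$-dependent coefficients) enters, so that the result holds for any sign of $\Lambda$, matches the paper's reasoning as well.
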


\begin{remark}
{\rm
An alternative route to arrive at the same result is by using spinors, see \cite{F1,F}. In this formalism \cite{penrose} the (rescaled) MST is represented by a fully symmetric spinor $\Upsilon_{ABCD}$, and equations (\ref{delta1T}) are written in the following form
$$
\nabla^{A}{}_{A'}\Upsilon_{ABCD}=L_{A'BCD}
$$
where $L_{A'BCD}=L_{A'(BCD)}$ is the spinor associated to ${\cal J}_{\gamma\mu\nu}({\cal S})$. Then this is easily put in symmetric hyperbolic form, writing it as in \cite{F1}, section 4, for the Bianchi equations.
}
\end{remark}

\begin{remark}
\label{rem_reg}
{\rm Note that the denominator $Q_{\mathrm{ev}}\mathcal{F}^2 + 8\Lambda$ in the equation \eq{full_eqn_MST} for $\mathcal{S}^{(\mathrm{ev})}_{\mu\nu\sigma}{}^{\rho}$
might have zeros. Furthermore, the Ernst potential may have zeros
so that $Q_{\mathrm{ev}}$  blows up.
An analogous problem arises  for $\widetilde{\mathcal{T}}^{(\mathrm{ev})}_{\mu\nu\sigma}{}^{\rho}$, cf.\  \eq{sym_hyp0} below.
In fact, it follows from \eq{expansion_sigma}, \eq{asympt_exp_Q}, \eq{asympt_exp_F2}
and \eq{asympt_exp_H2} below, that this cannot happen sufficiently close
to $\scri$ (for $\Lambda >0$).
It is  not clear, though, whether the evolution equations remain  regular off some neighborhood of $\scri$.

Moreover, it will be shown in the subsequent section  that 
$\mathcal{J}(\widetilde{\mathcal{T}}^{(\mathrm{ev})})_{\alpha\beta\mu}$ is singular at $\scri$ due to  the vanishing of the
conformal factor $\Theta$ there --see \eq{sym_hyp1} below--, whence one actually has to deal with a Fuchsian
system.}

\end{remark}


\subsubsection{Behavior of the Bianchi-like system for $\widetilde{\mathcal{T}}^{(\mathrm{ev})}_{\mu\nu\sigma}{}^{\rho}$  near $\scri$}

Let us analyze the behavior of the system \eq{unphys_ev} near $\scri$. 
Note that we are
{\it not} assuming  a priori that 
$\widetilde{\mathcal{T}}^{(\mathrm{ev})}_{\mu\nu\sigma}{}^{\rho}$  is regular
at $\scri$.
First of all we employ the following expansions which have been derived in Section~\ref{Mars-Simon_conf}, 
and which do not rely on any gauge choice,
\begin{eqnarray}
Q_{\mathrm{ev}} &=& O(\Theta^4)
\label{asympt_exp_Q}
\;,
\\
\mathcal{F}_{\mu\nu} &=& \widetilde{\mathcal{H}}_{\mu\nu}\Theta^{-3} + O(\Theta^{-2})
\;,
\\
\mathcal{F}^2 &=& \Theta^{-2}\widetilde{\mathcal{H}}^2 + O(\Theta^{-1})
\label{asympt_exp_F2}
\;,
\\
g^{\mu\nu} &=& \Theta^2 \widetilde g^{\mu\nu}
\;,
\\
\mathcal{I}_{\alpha\beta\mu\nu} &=&\Theta^{-4} \widetilde {\mathcal{I}}_{\alpha\beta\mu\nu}
\;,
\\
\mathcal{U}_{\alpha\beta\mu}{}^{\nu} &=&- \Big(\widetilde{\mathcal{H}}_{\alpha\beta}\widetilde{\mathcal{H}}_{\mu}{}^{\nu} - 
\frac{1}{3}\widetilde{\mathcal{H}}^2\widetilde{\mathcal{I}}_{\alpha\beta\mu}{}^{\nu}\Big)\Theta^{-4} + O(\Theta^{-3})
\;.
\end{eqnarray}
This yields
\begin{eqnarray}
  \mathcal{J}(\widetilde{\mathcal{T}}^{(\mathrm{ev})})_{\alpha\beta\mu}
&=&
-4 \Lambda  \frac{  5  Q_{\mathrm{ev}}\mathcal{F}^2  +4\Lambda }{Q_{\mathrm{ev}}\mathcal{F}^2 + 8\Lambda}
 \mathcal{U}_{\alpha\beta\mu}{}^{\nu} 
   \mathcal{F}^{-4} X^{\sigma}g_{\rho\nu}g^{\gamma\kappa}g^{\delta\varkappa} \mathcal{F}_{\kappa\varkappa}  \widetilde{\mathcal{T}}^{(\mathrm{ev})}_{\gamma\delta\sigma}{}^{\rho}  
\nonumber
\\
&& +  Q_{\mathrm{ev}}X^{\sigma}\Big( \frac{2}{3}
 \mathcal{I}_{\alpha\beta\mu\rho} g^{\gamma\kappa} g^{\delta\varkappa} \mathcal{F}_{\kappa\varkappa}  \widetilde{\mathcal{T}}^{(\mathrm{ev})}_{\gamma\delta\sigma}{}^{\rho}  
- \mathcal{F}_{\mu\rho} \widetilde{\mathcal{T}}^{(\mathrm{ev})}_{\alpha\beta\sigma}{}^{\rho}
\Big)
\label{sym_hyp0}
\\
&=&
-2 \Lambda
 \mathcal{U}_{\alpha\beta\mu}{}^{\nu} 
   \mathcal{F}^{-4} X^{\sigma}g_{\rho\nu}g^{\gamma\kappa}g^{\delta\varkappa} \mathcal{F}_{\kappa\varkappa}  \widetilde{\mathcal{T}}^{(\mathrm{ev})}_{\gamma\delta\sigma}{}^{\rho}  
\nonumber
\\
&& 
 + (O(\Theta) \widetilde{\mathcal{T}}^{(\mathrm{ev})})_{\alpha\beta\mu}
\\
&=&
2 \Lambda\Theta^{-1}
\Big( \widetilde{\mathcal{H}}_{\alpha\beta}\widetilde{\mathcal{H}}_{\mu\rho} - 
\frac{1}{3}\widetilde{\mathcal{H}}^2\widetilde{\mathcal{I}}_{\alpha\beta\mu\rho}\Big)
  \widetilde{\mathcal{H}}^{-4}  \widetilde{\mathcal{H}}^{\gamma\delta}  \widetilde X^{\sigma} \widetilde{\mathcal{T}}^{(\mathrm{ev})}_{\gamma\delta\sigma}{}^{\rho}  
\nonumber
\\
&& 
 + (O(1) \widetilde{\mathcal{T}}^{(\mathrm{ev})})_{\alpha\beta\mu}
\;.
\label{sym_hyp1}
\end{eqnarray}

In adapted 
coordinates $(t,x^i)$ where $\scri^-=\{t=0\}$, we have
\begin{equation}
X^{i}\,=\, \widetilde X^{i}\,=\, Y^i + O(\Theta)
\;,
\quad
X^{t}\,=\, \widetilde X^{t}\,=\, O(\Theta)
\;.
\label{KVF_scri}
\end{equation}
Let us further assume a gauge where
\begin{equation}
 \widetilde g_{tt}|_{\scri}  \,=\, -1\;, \quad \widetilde g_{ti}|_{\scri} \,=\, 0
\;.
\label{asymp_gauge}
\end{equation}
 In particular this  implies by \eq{conf5} that the conformal factor $\Theta$ satisfies
\begin{equation}
 \Theta \,=\, \sqrt{\frac{\Lambda}{3}}t + O(1)
\;.
\label{exp_theta}
\end{equation}
Moreover, as for the wave map gauge \eq{gauge_conditions_compact}, which is compatible with \eq{asymp_gauge}, we find
\begin{eqnarray}
\widetilde{\mathcal{H}}_{ti} &=& - \sqrt{\frac{\Lambda}{3}} Y_i + O(\Theta)
\;,
\label{H_asymp_gauge1}
\\
\widetilde{\mathcal{H}}_{ij} &=& i\sqrt{\frac{\Lambda}{3}} \widehat\eta_{ijk}Y^k  + O(\Theta)
\;,
\label{H_asymp_gauge2}
\\
\widetilde{\mathcal{H}}^2 &=& - 4 \frac{\Lambda}{3} |Y|^2   + O(\Theta)
\label{asympt_exp_H2}
\label{H_asymp_gauge3}
\;.
\end{eqnarray}
%
%
%
Using further that
\begin{equation}
\widetilde{\mathcal{H}}^{\gamma\delta}
\widetilde{\mathcal{T}}^{(\mathrm{ev})}_{\gamma\delta\sigma}{}^{\rho}  \,=\,  2 \widetilde{H}^{\gamma\delta}
\widetilde{\mathcal{T}}^{(\mathrm{ev})}_{\gamma\delta\sigma}{}^{\rho}  \,=\,  4 \widetilde X^{\gamma}\widetilde\nabla^{\delta}\Theta
\widetilde{\mathcal{T}}^{(\mathrm{ev})}_{\gamma\delta\sigma}{}^{\rho}  
 \,=\,  4  \sqrt{\frac{\Lambda}{3}} Y^{i}
\widetilde{\mathcal{T}}^{(\mathrm{ev})}_{t  i \sigma}{}^{\rho}   + (O(\Theta) \widetilde{\mathcal{T}}^{(\mathrm{ev})})_{\sigma}{}^{\rho}
\;, 
\label{HT_relation}
\end{equation}
we 
find that the system \eq{unphys_ev} has the following structure near $\scri$,
\begin{align}
\widetilde\nabla_{\rho} \widetilde{\mathcal{T}}^{(\mathrm{ev})}_{\alpha\beta\mu}{}^{\rho}
\,=\, & \, 
\frac{9}{2} \sqrt{\frac{\Lambda}{3}}\Lambda^{-1} \Theta^{-1}|Y|^{-4}(\widetilde{\mathcal{H}}_{\alpha\beta}\widetilde{\mathcal{H}}_{\mu\rho}
- \frac{1}{3}\widetilde{\mathcal{H}}^2\widetilde{\mathcal{I}}_{\alpha\beta\mu\rho})Y^i Y^k \widetilde{\mathcal{T}}^{(\mathrm{ev})}_{ti  k}{}^{\rho}
\nonumber
\\
&
 + (O(1) \widetilde{\mathcal{T}}^{(\mathrm{ev})})_{\alpha\beta\mu}
\label{sym_hyp}
\;,
\end{align}
in adapted coordinates and whenever  \eq{asymp_gauge} holds.

As explained in the previous section the system  \eq{unphys_ev} splits into a symmetric hyperbolic system of evolution equations and a system of constraint
equations  for  $\widetilde{\mathcal{T}}^{(\mathrm{ev})}_{\alpha\beta\mu\nu}$. However, this requires an appropriate gauge choice. A  convenient way to realize such a gauge 
is to impose the condition \eq{asymp_gauge} also off the initial surface,
\begin{equation}
 \widetilde g_{tt} \,=\, -1\;, \quad \widetilde g_{ti}\,=\, 0
\;.
\label{new_gauge}
\end{equation}
It is well known that  these \emph{Gaussian normal coordinates}  \cite{Wald}
 are obtained
by shooting geodesics normally to $\scri$; the coordinate $t$ is then chosen to be an affine parameter along these geodesics, while the coordinates
$\{x^i\}$ are transported  from $\scri$ by requiring them to be constant along these geodesics.
%
%

Setting ${\cal
S}_{\alpha\beta\gamma\delta}=\widetilde{\mathcal{T}}^{(\mathrm{ev})}_{\alpha\beta\gamma\delta}$ in  (\ref{evol3})
and (\ref{const3})  and using
\eq{H_Y_relation}, which follows from \eq{H_asymp_gauge1}-\eq{H_asymp_gauge3},
%
we find in the unphysical spacetime (we have $\mathcal{E}_{ij}\equiv  \widetilde{\mathcal{T}}^{(\mathrm{ev})}_{ti tj}$)
%
\begin{eqnarray}
\widetilde\nabla_{\rho} \widetilde{\mathcal{T}}^{(\mathrm{ev})}_{tit}{}^{\rho}
&\equiv& 
\widetilde\nabla_{j}{\cal E}_{i}{}^{j}
\\
&=&
\frac{9}{2} \sqrt{\frac{\Lambda}{3}}\Lambda^{-1}
\Theta^{-1}|Y|^{-4}(\widetilde{\mathcal{H}}_{ti}\widetilde{\mathcal{H}}_{tj} -
\frac{1}{3}\widetilde{\mathcal{H}}^2\widetilde{\mathcal{I}}_{titj})Y^l Y^k
\widetilde{\mathcal{T}}^{(\mathrm{ev})}_{tlk}{}^{j}
\nonumber
\\
&&
 + (O(1) \widetilde{\mathcal{T}}^{(\mathrm{ev})})_{tit}
\\
&=&
\frac{1}{2} \sqrt{\frac{\Lambda}{3}}
\Theta^{-1}|Y|^{-2} Y^l Y^k
\widetilde{\mathcal{T}}^{(\mathrm{ev})}_{tlki}
 + (O(1) \widetilde{\mathcal{T}}^{(\mathrm{ev})})_{tit}
\\
&=&
-\frac{1}{2}i \sqrt{\frac{\Lambda}{3}}
\Theta^{-1}|Y|^{-2} \widehat\volform_{i}{}^{jk}Y^l Y_{[j}
\widetilde{\mathcal{T}}^{(\mathrm{ev})}_{k]tlt}
 + (O(1) \widetilde{\mathcal{T}}^{(\mathrm{ev})})_{tit}
\label{sym_hyp_spec}
\\
&\equiv&
-\frac{1}{2}i \sqrt{\frac{\Lambda}{3}}
\Theta^{-1}|Y|^{-2} \widehat\volform_{i}{}^{jk}Y^l Y_{[j}
{\cal E}_{k]l}
 + (O(1) \mathcal{E})_{tit}
\end{eqnarray}
for the constraint equations,  and
\begin{eqnarray}
\widetilde\nabla_{\rho} \widetilde{\mathcal{T}}^{(\mathrm{ev})}_{t(ij)}{}^{\rho}
&\equiv & \widetilde\nabla_{t} {\cal E}_{ij}+i \widehat\volform_{(j}{}^{lk}\widetilde\nabla_{l}{\cal E}_{i)k} 
\\
&=&
-\frac{9}{2} \sqrt{\frac{\Lambda}{3}}\Lambda^{-1} \Theta^{-1}|Y|^{-4}(\widetilde{\mathcal{H}}_{t(i}\widetilde{\mathcal{H}}_{|t|j)}
- \frac{1}{3}\widetilde{\mathcal{H}}^2\widetilde{\mathcal{I}}_{t(i|t|j)})Y^k Y^l \widetilde{\mathcal{T}}^{(\mathrm{ev})}_{tk t l}
\nonumber
\\
&&+
\frac{9}{2} \sqrt{\frac{\Lambda}{3}}\Lambda^{-1} \Theta^{-1}|Y|^{-4}(\widetilde{\mathcal{H}}_{t(i}\widetilde{\mathcal{H}}_{j)k}
- \frac{1}{3}\widetilde{\mathcal{H}}^2\widetilde{\mathcal{I}}_{t(ij)k})Y^l Y^m \widetilde{\mathcal{T}}^{(\mathrm{ev})}_{tlm}{}^{k}
\nonumber
\\
&&
 + (O(1) \widetilde{\mathcal{T}}^{(\mathrm{ev})})_{(tij)}
\\
&=& \sqrt{\frac{\Lambda}{3}}|Y|^{-2} \Theta^{-1}\Big(
\frac{3}{2}Y_{(i} Y^l \widetilde{\mathcal{T}}^{(\mathrm{ev})}_{|t|j)tl}
- 3|Y|^{-2}Y_iY_jY^k Y^l \widetilde{\mathcal{T}}^{(\mathrm{ev})}_{tk t l}
\nonumber
\\
&&
+ \frac{1}{2} h_{ij} Y^k Y^l \widetilde{\mathcal{T}}^{(\mathrm{ev})}_{tk t l}
\Big)
 + (O(1) \widetilde{\mathcal{T}}^{(\mathrm{ev})})_{t(ij)}
\label{ev_eqns}
\\
&\equiv& \sqrt{\frac{\Lambda}{3}}|Y|^{-2} \Theta^{-1}\Big(
\frac{3}{2}Y_{(i} Y^l \mathcal{E}_{j)l}
- 3|Y|^{-2}Y_iY_jY^k Y^l \mathcal{E}_{k l}
+ \frac{1}{2} h_{ij} Y^k Y^l \mathcal{E}_{k  l}
\Big)
\nonumber
\\
&&
 + (O(1) \mathcal{E})_{t(ij)}
\label{ev_eqns2}
\end{eqnarray}
for the evolution equations.

Note that the equations \eq{sym_hyp_spec} and \eq{ev_eqns} hold regardless of the gauge  as long as 
the asymptotic gauge condition  \eq{asymp_gauge} is ensured. However,  the global gauge condition \eq{new_gauge}
(or an analogous one, cf.\  Section~\ref{subsec:FOSH})
is needed to ensure that this realizes   the splitting into constraint and evolution equations.

The divergent terms in both constraint and evolution equations 
are regular
if and only if
\begin{align*}
 Y^l Y_{[j}
\widetilde{\mathcal{T}}^{(\mathrm{ev})}_{k]tlt}|_{\scri} =0=Y^l Y_{(j}
\widetilde{\mathcal{T}}^{(\mathrm{ev})}_{k)tlt}|_{\scri}  & \quad\Longleftrightarrow \quad
Y^j\widetilde{\mathcal{T}}^{(\mathrm{ev})}_{titj}|_{\scri} \equiv Y^j \mathcal{E}_{ij}|_{\scri}= 0 \\
& \quad\Longleftrightarrow\quad
(\widetilde {\mathcal{H}}^{\alpha\beta}\widetilde{\mathcal{T}}^{(\mathrm{ev})}_{\mu\nu\alpha\beta})|_{\scri} =0
\;.
\end{align*}
For the sake of consistency, we check that these conditions hold
if and only if the spacetime is asymptotically KdS-like. Indeed
%
\begin{eqnarray}
0 \,=\, Y^l Y_{[j}
\widetilde{\mathcal{T}}^{(\mathrm{ev})}_{k]tlt}|_{\scri} \,=\,  Y^l Y_{[j}D_{k]l}
 - i   \sqrt{\frac{3}{\Lambda}}  Y^l Y_{[j}  \widehat C_{k]l}
\;,
\label{asmpt_KdS}
\end{eqnarray}
holds if and only if  $Y^{k}$ is an eigenvector of both $D_{jk}$ and $\widehat{C}_{jk}$, or in other words if and only if
\begin{eqnarray}
 Y^k D_{jk}  =  |Y|^{-2} Y^k Y^l D_{kl}   Y_{j}\;,
\quad
 Y^k \widehat C_{jk}  =  |Y|^{-2} Y^k Y^l \widehat C_{kl}   Y_{j}
\;,
\label{asmpt_KdS2}
\end{eqnarray}
which are precisely the conditions defining 
asymptotically KdS-like spacetimes in Definition \ref{asympt_KdS_like}. 
Analogously, the divergent term in the evolution equations will be regular if and only if
\begin{eqnarray}
0 \,=\, Y^l Y_{(j}
\widetilde{\mathcal{T}}^{(\mathrm{ev})}_{k)tlt}|_{\scri} &=&
Y^l Y_{(j} \mathcal{E}_{k)l}|_{\scri} \nonumber
\\
&=&
  Y^l Y_{(j}D_{k)l} -  |Y|^{-2} Y^{m}Y^{n} D_{mn} Y_jY_k
\nonumber
\\
& - i &  \sqrt{\frac{3}{\Lambda}} \Big(    Y^l Y_{(j}\widehat C_{k)l}
-       |Y|^{-2} Y^{m}Y^{n}  \widehat C_{mn}Y_jY_k\Big)
\;,
 \phantom{xxx}
\label{cond_ev}
\end{eqnarray}
which  is automatically true in the asymptotically KdS-like setting as follows from \eq{asmpt_KdS2}.

In summary, the evolution equations (\ref{evol3}) for 
${\cal S}_{\mu\nu\sigma}{}^{\rho}=\widetilde{\mathcal{T}}^{(\mathrm{ev})}_{\mu\nu\sigma}{}^{\rho}$ constitute a symmetric hyperbolic system in the unphysical spacetime with a righthand side of the form
\begin{equation}
 \mathcal{J}(\widetilde{\mathcal{T}}^{(\mathrm{ev})})_{\alpha\beta\mu}
= \frac{1}{\Theta} 
 \mathcal{N}(\widetilde{\mathcal{T}}^{(\mathrm{ev})})_{\alpha\beta\mu}
\end{equation}
where ${\mathcal N}(\widetilde{\mathcal{T}}^{(\mathrm{ev})})$ denotes a linear map
${\mathcal N}(\widetilde{\mathcal{T}}^{(\mathrm{ev})})_{\alpha\beta\mu} = 
{\mathcal  N}_{\alpha\beta\mu}{}^{\rho \nu \sigma}{}_{\kappa} 
\widetilde{\mathcal{T}}^{(\mathrm{ev})}_{\rho\nu\sigma}{}^{\kappa}$ being
${\mathcal  N}_{\alpha\beta\mu}{}^{\rho \nu \sigma}{}_{\kappa}$ a smooth tensor field up to and including $\scri$, at least in some neighborhood of $\scri$, cf.\ Remark~\ref{rem_non_reg}.
 Equations with such divergent
terms are called {\it Fuchsian} in the literature. We state the existence
and properties of the evolution equation for
$\widetilde{\mathcal{T}}^{(\mathrm{ev})}_{\rho\sigma\mu}{}^{\nu}$ 
as a lemma.
\begin{lemma}
\label{lemma_evolution}
 The rescaled MST 
$\widetilde{\mathcal{T}}^{(\mathrm{ev})}_{\rho\sigma\mu}{}^{\nu}$ 
with $Q=Q_{\mathrm{ev}}$, satisfies a symmetric hyperbolic, linear, homogeneous
Fuchsian system of evolution equations near $\scri$.
\end{lemma}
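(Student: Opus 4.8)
The plan is to obtain the Lemma as a synthesis of the two structural facts already established in this subsection: the symmetric-hyperbolic character of the reduced system, and the precise $\Theta^{-1}$ singular structure of its source near $\scri$. I would treat these two aspects in turn and then read off the combined statement.

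For the symmetric-hyperbolic character no new computation is needed. The hyperbolization built in Section~\ref{subsec:FOSH}, culminating in \eqref{QnablaS} and the reduced evolution set \eqref{evol2}--\eqref{evol3}, was carried out for an \emph{arbitrary} self-dual, symmetric, trace-free double $(2,2)$-form $\mathcal{S}_{\alpha\beta\gamma}{}^{\delta}$ satisfying a first-order system of the type \eqref{delta1T} whose source $\mathcal{J}(\mathcal{S})$ is linear and homogeneous and obeys the symmetry properties \eqref{Jsymprop}. Since $\widetilde{\mathcal{T}}^{(\mathrm{ev})}_{\mu\nu\sigma}{}^{\rho}$ carries all the algebraic symmetries of the rescaled Weyl tensor and satisfies exactly such a system, namely the Bianchi-like equation \eqref{unphys_ev}, it falls within the scope of that analysis. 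Hence its evolution equations \eqref{evol3} constitute a linear, homogeneous symmetric-hyperbolic system, the positive-definiteness of the principal symbol being furnished by the dominant property of the Bel-Robinson superenergy density \eqref{s-e}; this is Lemma~\ref{sym_hyp_MST}(ii), now read off the reduced form \eqref{evol3}.

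For the Fuchsian character I would isolate the behaviour of the source at $\scri$. The key observation is that, for the specific choice $\mathcal{S}=\widetilde{\mathcal{T}}^{(\mathrm{ev})}$, the right-hand side of \eqref{unphys_ev} is not regular at $\scri$ but carries a single explicit factor $\Theta^{-1}$. I would substitute the gauge-independent expansions \eqref{asympt_exp_Q}--\eqref{asympt_exp_H2} into the right-hand side of \eqref{unphys_ev}---this is precisely the computation performed in \eqref{sym_hyp0}--\eqref{sym_hyp1}---and, after invoking \eqref{KVF_scri}, \eqref{H_asymp_gauge1}--\eqref{asympt_exp_H2} and \eqref{HT_relation}, obtain the explicit leading form \eqref{sym_hyp}, equivalently the component equations \eqref{ev_eqns}--\eqref{ev_eqns2}. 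The conclusion to extract is that $\mathcal{J}(\widetilde{\mathcal{T}}^{(\mathrm{ev})})_{\alpha\beta\mu}=\Theta^{-1}\mathcal{N}(\widetilde{\mathcal{T}}^{(\mathrm{ev})})_{\alpha\beta\mu}$, with $\mathcal{N}$ a linear map whose tensorial coefficients are assembled from $\widetilde{\mathcal{H}}$, $\widetilde X$ and the combination $\widetilde{\mathcal{H}}^{-4}\widetilde{\mathcal{H}}^{\gamma\delta}$; since, by \eqref{exp_theta}, $\Theta$ vanishes to first order on $\scri=\{t=0\}$, the zeroth-order coefficient of the evolution system scales like $1/t$, which is exactly the Fuchsian form.

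The step I expect to be the genuine obstacle is showing that $\mathcal{N}$ carries no singularity beyond the displayed $\Theta^{-1}$, i.e.\ that its coefficients extend smoothly up to and including $\scri$. This reduces to verifying that the denominators entering $Q_{\mathrm{ev}}$ and the right-hand side of \eqref{full_eqn_MST}---namely $\mathcal{F}^2$ and $Q_{\mathrm{ev}}\mathcal{F}^2+8\Lambda$---and, after passing to leading order, the quantity $\widetilde{\mathcal{H}}^2$, all stay bounded away from zero in a one-sided neighbourhood of $\scri$. For $\Lambda>0$ this is where the sign hypothesis and the non-vanishing of $|Y|^2$ are used: the leading behaviour $\mathcal{F}^2=-\tfrac{4}{3}\Lambda\Theta^{-2}\widetilde X^2+O(\Theta^{-1})$ from \eqref{used_relation_F2}, together with $Q_{\mathrm{ev}}=O(\Theta^4)$ from \eqref{asympt_exp_Q} and $\widetilde{\mathcal{H}}^2\to-\tfrac{4}{3}\Lambda|Y|^2$ from \eqref{asympt_exp_H2}, forces these denominators to approach nonzero limits as $\Theta\to0$, as already recorded in Remark~\ref{rem_reg}. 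Once this regularity of $\mathcal{N}$ is secured, the reduced system \eqref{evol3}, in its explicit form \eqref{ev_eqns2}, is manifestly a linear, homogeneous, symmetric-hyperbolic system whose only singular coefficient is the $\Theta^{-1}$ factor, i.e.\ a Fuchsian system near $\scri$, which is the assertion of the Lemma.
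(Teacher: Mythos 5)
Your proposal is correct and takes essentially the same route as the paper: symmetric hyperbolicity is inherited from the general hyperbolization of Section~\ref{subsec:FOSH} (Lemma~\ref{sym_hyp_MST}(ii)), and the Fuchsian character is read off the expansion computation \eq{sym_hyp0}--\eq{sym_hyp1}, which exhibits $\mathcal{J}(\widetilde{\mathcal{T}}^{(\mathrm{ev})})_{\alpha\beta\mu}=\Theta^{-1}\mathcal{N}(\widetilde{\mathcal{T}}^{(\mathrm{ev})})_{\alpha\beta\mu}$ with $\mathcal{N}$ smooth up to and including $\scri$. Your verification that the denominators $\mathcal{F}^2$, $Q_{\mathrm{ev}}\mathcal{F}^2+8\Lambda$ and $\widetilde{\mathcal{H}}^2$ stay bounded away from zero near $\scri$ (using $\Lambda>0$ and $|Y|^2>0$) is exactly the point the paper records in Remark~\ref{rem_reg}.
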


\begin{remark}
\label{rem_non_reg}
{\rm
As already discussed in Remark~\ref{rem_reg},
it is  not clear that the evolution system remains  regular outside some neighborhood of 
$\scri$, in its whole domain of dependence.
%
}
\end{remark}

\subsubsection{A wave equation satisfied by the (rescaled) MST}

We now recall that \eq{d1T} holds in particular for    
$\widetilde{\mathcal{T}}^{(\mathrm{ev})}_{\alpha\beta\lambda}{}^{\mu}$,
and with tildes on all quantities. Application of $\widetilde\nabla^{\sigma}$
yields, together with with \eq{unphys_ev}, the linear, homogeneous wave equation 

\begin{eqnarray}
\Box_{\widetilde g} \widetilde{\mathcal{T}}^{(\mathrm{ev})}_{\alpha\beta\mu\nu}
 &=& -  2\widetilde\nabla^{\sigma}\widetilde\nabla_{[\mu}\widetilde{\mathcal{T}}^{(\mathrm{ev})}_{\nu]\sigma\alpha\beta} 
  - i\widetilde\volform_{\mu\nu\sigma}{}^{\rho}\widetilde\nabla^{\sigma}\mathcal{J}(\widetilde{\mathcal{T}}^{(\mathrm{ev})})_{\alpha\beta\rho}
\\
&=&
-  2\widetilde\nabla_{[\mu} \mathcal{J}(\widetilde{\mathcal{T}}^{(\mathrm{ev})})_{|\alpha\beta|\nu]}
   - i\widetilde\volform_{\mu\nu}{}^{\sigma\rho}\widetilde\nabla_{\sigma}\mathcal{J}(\widetilde{\mathcal{T}}^{(\mathrm{ev})})_{\alpha\beta\rho}
 -  2\widetilde R_{\kappa [\mu }\widetilde{\mathcal{T}}^{(\mathrm{ev})}_{|\alpha\beta|\nu]}{}^{\kappa} 
\nonumber
\\
&&
 -  2\widetilde R_{\alpha\kappa[\mu }{}^{\sigma}\widetilde{\mathcal{T}}^{(\mathrm{ev})}_{\nu]\sigma\beta}{}^{\kappa} 
 +  2\widetilde R_{\beta\kappa[\mu }{}^{\sigma}\widetilde{\mathcal{T}}^{(\mathrm{ev})}_{\nu]\sigma\alpha}{}^{\kappa} 
 + \widetilde R_{\mu \nu\sigma}{}^{\kappa}\widetilde{\mathcal{T}}^{(\mathrm{ev})}_{\alpha\beta\kappa}{}^{\sigma} 
\;.
\label{wave_eqn_general}
\end{eqnarray}

Of course the same reasoning can be applied to
$S_{\alpha\beta\gamma\delta}^{(ev)}$, and we are led to the following 

\begin{lemma}
\label{wave_eqn_MST}
\begin{enumerate}
\item[(i)] The MST $\mathcal{S}^{(\mathrm{ev})}_{\mu\nu\sigma\rho}$ satisfies, for any sign of the cosmological constant $\Lambda$, a  linear, homogeneous system of  wave equations in $(\mcM,g)$.
\item[(ii)] The MST $\widetilde{\mathcal{T}}^{(\mathrm{ev})}_{\mu\nu\sigma\rho}$ satisfies, for any sign of the cosmological constant $\Lambda$, a   linear, homogeneous system of  
wave equations in $(\widetilde{\mcM\enspace}\hspace{-0.5em} ,\widetilde g)$.
\end{enumerate}
\end{lemma}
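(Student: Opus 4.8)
The plan is to derive the wave equation by differentiating the first-order differential identity \eqref{d1T} and then substituting the Bianchi-like divergence equation \eqref{unphys_ev}; this is exactly the computation already displayed in \eqref{wave_eqn_general}, so the real work is to organize it cleanly and to verify that the outcome is genuinely a \emph{linear, homogeneous} system of \emph{wave} type. I would treat part (ii) first. Because $\widetilde{\mathcal{T}}^{(\mathrm{ev})}_{\alpha\beta\mu}{}^{\rho}$ inherits all the algebraic symmetries of the (rescaled) Weyl tensor, the identity \eqref{d1T} applies to it verbatim with tildes on every quantity. Applying $\widetilde\nabla^{\sigma}$ to \eqref{d1T} produces, on the left, the antisymmetrized combination $\widetilde\nabla^{\sigma}\widetilde\nabla_{[\sigma}\widetilde{\mathcal{T}}^{(\mathrm{ev})}_{\mu\nu]\alpha\beta}$: one of the three terms is $\Box_{\widetilde g}\widetilde{\mathcal{T}}^{(\mathrm{ev})}_{\mu\nu\alpha\beta}$ up to a numerical factor, while the remaining two are of the mixed form $\widetilde\nabla^{\sigma}\widetilde\nabla_{[\mu}\widetilde{\mathcal{T}}^{(\mathrm{ev})}_{\nu]\sigma\alpha\beta}$.

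The two key manipulations are then as follows. In the mixed second-derivative terms I would commute $\widetilde\nabla^{\sigma}$ past $\widetilde\nabla_{[\mu}$ by the Ricci identity; this isolates a divergence of $\widetilde{\mathcal{T}}^{(\mathrm{ev})}$ on its last index slot, which \eqref{unphys_ev} replaces by $\mathcal{J}(\widetilde{\mathcal{T}}^{(\mathrm{ev})})$ (after using the index symmetries and tracelessness recorded in \eqref{Jsymprop}), at the cost of several curvature contractions $\widetilde R\,\widetilde{\mathcal{T}}^{(\mathrm{ev})}$. On the right-hand side of \eqref{d1T}, the derivative $\widetilde\nabla^{\sigma}$ lands on $\widetilde\volform_{\mu\nu\sigma}{}^{\rho}\,\mathcal{J}(\widetilde{\mathcal{T}}^{(\mathrm{ev})})_{\alpha\beta\rho}$ and yields the term $i\,\widetilde\volform_{\mu\nu}{}^{\sigma\rho}\widetilde\nabla_{\sigma}\mathcal{J}(\widetilde{\mathcal{T}}^{(\mathrm{ev})})_{\alpha\beta\rho}$. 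Collecting all contributions reproduces precisely \eqref{wave_eqn_general}.

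It then remains to read off the structural properties. The principal part is $\Box_{\widetilde g}\widetilde{\mathcal{T}}^{(\mathrm{ev})}$, so the system is of wave type. That the right-hand side is linear and homogeneous in $\widetilde{\mathcal{T}}^{(\mathrm{ev})}$ is the point that deserves care: the curvature contractions are manifestly linear and homogeneous, and since $\mathcal{J}(\cdot)$ is itself linear and homogeneous in its argument with coefficients built from $X$, $\mathcal{F}$, $Q_{\mathrm{ev}}$ and the metric, its covariant derivative $\widetilde\nabla\mathcal{J}(\widetilde{\mathcal{T}}^{(\mathrm{ev})})$ splits into a piece linear in $\widetilde\nabla\widetilde{\mathcal{T}}^{(\mathrm{ev})}$ and a piece linear in $\widetilde{\mathcal{T}}^{(\mathrm{ev})}$ (with coefficients given by $\widetilde\nabla$ of those of $\mathcal{J}$), so no inhomogeneous term can arise. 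Part (i) follows by the identical argument carried out in $(\mcM,g)$, starting from \eqref{d1T} and \eqref{phys_ev} without tildes.

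The only laborious step is the curvature bookkeeping in the commutation of derivatives, i.e.\ assembling the four Riemann terms of \eqref{wave_eqn_general} with their correct symmetrizations; I expect this to be the main obstacle, though it is routine rather than conceptual. One minor point worth flagging is that, since $\mathcal{J}$ carries an explicit factor $\Theta^{-1}$ near $\scri$ (cf.\ the Fuchsian structure in Lemma~\ref{lemma_evolution}), the coefficients of the wave equation for $\widetilde{\mathcal{T}}^{(\mathrm{ev})}$ inherit singularities there; this does not affect the statement of the lemma, which only asserts that a linear homogeneous wave system holds in the respective spacetimes.
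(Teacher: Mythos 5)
Your proposal is correct and takes essentially the same route as the paper: apply $\widetilde\nabla^{\sigma}$ to \eq{d1T}, commute derivatives via the Ricci identity, and substitute the divergence equation \eq{unphys_ev} to arrive at \eq{wave_eqn_general}, whose right-hand side is linear and homogeneous because $\mathcal{J}$ is; the physical case (i) follows verbatim from \eq{phys_ev}. Even your closing caveat about the $\Theta^{-1}$ singularity of the coefficients near $\scri$ matches the paper's own remark following the lemma.
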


Some care is needed concerning the regularity of the coefficients in these wave equations, cf.\ Remark~\ref{rem_reg}.

It follows from  \eq{sym_hyp1} that  \eq{wave_eqn_general} is a  linear, homogeneous wave equation
\emph{of Fuchsian type} at $\scri$.
Indeed, using adapted coordinates $(x^0,x^i)$ and imposing the asymptotic gauge condition \eq{asymp_gauge}
a more careful calculation which uses 
 \eq{KVF_scri},  \eq{exp_theta}, \eq{H_asymp_gauge1}-\eq{H_asymp_gauge3} and \eq{HT_relation}
 shows 
 (note that 
 $\mathcal{E}_{ij} \equiv \widetilde{\mathcal{T}}^{(\mathrm{ev})}_{titj}$ encompasses
all independent components of the rescaled MST),
\begin{eqnarray}
\Box_{\widetilde g} \mathcal{E}_{ij}
&=& 
3 \sqrt{\frac{\Lambda}{3}} |Y|^{-4} \Theta^{-1}\Big(
  (Y_iY_j)_{\mathrm{tf}}
  Y^kY^l   \widetilde\nabla_{t}{\mathcal{E}}_{kl} 
-  \frac{1}{2}|Y|^2
( Y^k Y_{(i}  \widetilde\nabla_{|t|} \mathcal{E}_{j)k}   )_{\mathrm{tf}}
\Big)
\nonumber
\\
&&
 - \Lambda  |Y|^{-4}  \Theta^{-2}\Big( (Y_iY_j)_{\mathrm{tf}}
  Y^kY^l
\mathcal{E}_{kl}
-  \frac{1}{2} |Y|^{2} 
(  Y^kY_{(i} \mathcal{E}_{j)k})_{\mathrm{tf}}
\Big)
\nonumber
\\
&&
+ i  \sqrt{\frac{\Lambda}{3}} |Y|^{-4} \Theta^{-1}\widehat\volform_{(i}{}^{lm} \Big( 
 \frac{1}{2}|Y|^2  Y^kY_{|l}
\widetilde\nabla_{m|} \mathcal{E}_{j)k}
-
 \frac{1}{2}|Y|^2
  Y^kY_{|l|}
 \widetilde\nabla_{j)} \mathcal{E}_{mk}
\nonumber
\\
&&
  - 3  Y_{j)}Y_l
    Y^kY^n
\widetilde\nabla_{m} \mathcal{E}_{kn}
    + |Y|^2Y_{j)}   Y^k
\widetilde\nabla_{m} \mathcal{E}_{kl}
\Big)
\nonumber
\\
&&
+   (O(1) \widetilde\nabla{\mathcal{E}})_{ij}
+  (O(\Theta^{-1}){ \mathcal{E}})_{ij} 
\;.
\end{eqnarray}

\subsection{Uniqueness for the evolution equation for 
$\widetilde{\mathcal{T}}^{(\mathrm{ev})}_{\rho\sigma\mu}{}^{\nu}$ }

Existence of solutions of quasilinear symmetric hyperbolic Fuchsian systems with
prescribed asymptotics at $\scri$
has been analyzed in the literature mainly in the analytic case. For the merely smooth
case, there exist results by Claudel \& Newman \cite{Claudel},
Rendall \cite{Rendall}, and more recently Ames {\it et.al.} \cite{Ames1,Ames}.
The results in these papers involve a number of algebraic requirements,
as well as global conditions in space. It is an interesting problem
to see whether any of these results can be adapted to our setting here,
in particular in order to prove a localized existence result in which
an appropriate singular behavior of
$\widetilde{\mathcal{T}}^{(\mathrm{ev})}_{\rho\sigma\mu}{}^{\nu}$ 
is prescribed on some domain $B$
of $\scri^{-}$ and existence and uniqueness of a corresponding solution is shown in the domain of dependence of $B$. 
This would also require studying the impact of the constraint equations and their preservation under evolution.

For the purposes of this section, where we aim to show that 
the necessary conditions listed in items 
(i) and (ii) of Theorem \ref{thm_nec_cond} for the vanishing of the rescaled
MST tensor 
$\widetilde{\mathcal{T}}^{(\mathrm{ev})}_{\rho\sigma\mu}{}^{\nu}$ 
in a neighborhood of $\scri$  are also sufficient, we merely need
a localized
uniqueness theorem for the evolution system with trivial initial data.

We state and prove such a result in a more general context  by adapting some 
of the ideas in \cite{Ames1}.  Then we show that this result applies to 
the evolution system satisfied by 
$\widetilde{\mathcal{T}}^{(\mathrm{ev})}_{\rho\sigma\mu}{}^{\nu}$.

\subsubsection{A localized uniqueness theorem for symmetric hyperbolic Fuchsian systems}
\label{sec_uniqueness}

Let $(\mcM,\tilde{g})$ be an $(n+1)$-dimensional spacetime and $\scri$ a spacelike
hypersurface. Choose coordinates in a neighborhood of $\scri$
so that $\scri = \{ t=0 \}$,
and the metric is such that $\tilde{g}_{tt}=-1$ and 
$\tilde{g}_{ti}=0$, $i=1,\cdots, n$. Let us consider the
first order, homogenous
 linear symmetric hyperbolic system of  PDEs 
\begin{equation}
A^t \partial_t u + A^i \partial_i u + \frac{1}{t} N u = 0
\label{PDE}
\end{equation}
where $u : \mcM \mapsto \mathbb{C}^m$ is the unknown, $A^t$, $A^i$ and $N$ 
are $m \times m$ matrices which depend smoothly on the spacetime
coordinates $(t,x^i)$. 
We  assume that $\mathbb{C}^m$ is, at each spacetime point $(t,x)$,
endowed with a positive definite sesquilinear
product $\la u, v \ran$ such that the endomorphisms $A^{\mu}(t,x)$, for
$\mu = t, i$, are Hermitian with respect to this product
\begin{equation*}
\la u, A^{\mu} v \ran = \la \overline A^{\mu} u, v \ran.
\end{equation*}
Define $N_0 (x):= N(t=0,x)$. Our
main assumption is that 
$A^t + N_0$ is strictly positive definite with respect
to $\la \, ,  \,\ran$ at every point $p \in \scri$. 
Since both the inner product and $N$
depend smoothly on $t$ and $x$,
the same holds for a sufficiently small spacetime neighborhood of any
point $p \in \scri$. The domain of dependence of (\ref{PDE}) is defined in the
usual way (namely, the standard definition in terms of 
causal curves, with causality at $T_p \mcM$ being defined
as $k \in T_p \mcM$ being
future timelike (causal) iff $k^t > 0$ and $A^\mu|_{p} k_{\mu}$ is negative 
definite (semidefinite), 
where $k_{\mu}$ is obtained from $k$ by lowering indices with
respect to $\widetilde{g}$).  We also make the assumption that $k = \partial_t$
is future timelike in this sense.

We want to prove that the PDE (\ref{PDE}) with
trivial initial data on a domain $B \subset \scri$ vanishes identically
in the domain of dependence of $B$, denoted by $D(B)$. More precisely

\begin{lemma}
\label{UniquenessLemma}
Let $B \subset \scri$ be a domain with compact closure. Let $u$ be a $C^1$
map $u :\mcM \mapsto \mathbb{C}^m$ which vanishes at $B$ and solves
(\ref{PDE}). Assume that $A^t$ and $A^t + N_0$ are positive definite, then
$u$ vanishes at $D(B)$.
\end{lemma}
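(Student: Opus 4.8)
The plan is to prove Lemma~\ref{UniquenessLemma} by a weighted energy estimate adapted to the Fuchsian weight $1/t$, in the spirit of \cite{Ames1}. For $0<t\le t_0$ set $\Sigma_t:=D(B)\cap\{t=\mathrm{const}\}$ and define the (non-negative, since $A^t$ is positive definite) energy
\[
E(t):=\int_{\Sigma_t}\langle u,A^t u\rangle\,d^n x .
\]
Differentiating in $t$ and inserting \eqref{PDE} in the form $A^t\partial_t u=-A^i\partial_i u-\tfrac1t Nu$, the Hermiticity of the $A^\mu$ lets me write $2\,\mathrm{Re}\langle u,A^i\partial_i u\rangle=\partial_i\langle u,A^i u\rangle-\langle u,(\partial_i A^i)u\rangle$, so that the first-order spatial terms integrate to a flux through the lateral boundary $\partial\Sigma_t$. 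Because $\partial_t$ is future timelike and $\Sigma_t$ sweeps out the domain of dependence $D(B)$, this lateral flux, together with the term coming from the motion of $\partial\Sigma_t$, has a favourable sign and may be discarded, leaving the differential inequality
\[
E'(t)\ \le\ -\frac{2}{t}\int_{\Sigma_t}\mathrm{Re}\langle u,N_0 u\rangle\,d^n x\ +\ C_2\,E(t),
\]
where $N_0=N(0,\cdot)$ and $C_2$ absorbs $\sup|\partial_\mu A^\mu|$ together with the contribution of $N-N_0=O(t)$, both finite on the compact lens near $\scri$.

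The heart of the matter is to convert the singular term into an admissible weight, and this is where the hypothesis enters. Strict positivity of $A^t+N_0$ gives $\mathrm{Re}\langle u,(A^t+N_0)u\rangle\ge \delta|u|^2$ with $\delta>0$ uniform on $\overline B$, while $\langle u,A^t u\rangle\le c_1|u|^2$. Writing $\mathrm{Re}\langle u,N_0u\rangle=\mathrm{Re}\langle u,(A^t+N_0)u\rangle-\langle u,A^tu\rangle$ one obtains the pointwise bound $-\mathrm{Re}\langle u,N_0u\rangle\le \tfrac{\gamma}{2}\langle u,A^t u\rangle$ with $\gamma:=2-2\delta/c_1<2$. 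Substituting yields $E'(t)\le(\gamma/t+C_2)E(t)$, so that $t^{-\gamma}e^{-C_2t}E(t)$ is non-increasing and hence $E(t)\ge \mathrm{const}\cdot E(t_0)\,t^{\gamma}$ for $t\le t_0$. On the other hand, since $u\in C^1$ vanishes on $B\supset\overline{\Sigma_0}$, Taylor's theorem gives $|u(t,x)|=O(t)$ on the lens, whence $E(t)=O(t^2)$. Comparing the two bounds, $E(t_0)\le \mathrm{const}\cdot t^{\,2-\gamma}\to0$ as $t\to0$ because $2-\gamma>0$; thus $E(t_0)=0$ for every $t_0$, and $u\equiv0$ on $D(B)$.

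The decisive and most delicate point is precisely that the weight exponent $\gamma$ be \emph{strictly} smaller than $2$. The scalar model $\partial_t u+\tfrac{a}{t}u=0$, whose only $C^1$ solution vanishing at $t=0$ is trivial exactly when $A^t+N_0=1+a>0$ and which fails at $a=-1$, shows that strict positivity is sharp: it is exactly what produces the gap $2-\gamma=2\delta/c_1>0$ needed to beat the $O(t^2)$ vanishing forced by $C^1$ regularity. The remaining technical care goes into (i) justifying that the net lateral and boundary-motion contribution is non-positive, which is the standard causality input for symmetric hyperbolic systems and uses that $\partial_t$ is future timelike, and (ii) keeping $\mathrm{vol}(\Sigma_t)$ and the constants $\delta,c_1,C_2$ uniform down to $\scri$, which is guaranteed by the compactness of $\overline B$. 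Away from $\scri$, for $t\ge t_1>0$, the system is regularly symmetric hyperbolic, so standard uniqueness propagates the vanishing through all of $D(B)$; only the limit $t\to0$ requires the Fuchsian refinement above.
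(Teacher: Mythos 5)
Your proof is correct, and it reaches the conclusion by a genuinely different mechanism than the paper's, although both are weighted energy estimates in the spirit of \cite{Ames1}. The paper renormalizes the unknown: it works with the current $e^{-kt}\langle u/t, A^{\mu}\, u/t\rangle$, integrates its divergence over lens-shaped regions $V_{\epsilon}=V\cap\{t\geq\epsilon\}$ bounded by $B_\epsilon$, $B_T$ and an achronal lateral hypersurface, and uses the positivity of $A^t+N_0$ twice in a purely qualitative way: once to obtain the favourable sign of the singular volume term $-\frac{2}{t^3}e^{-kt}\langle u,(A^t+N)u\rangle$, and once to conclude that $u_1:=\lim_{t\to 0}u/t$ (which exists by $C^1$ regularity) satisfies $(A^t+N_0)u_1=0$ and hence vanishes, so that the bottom boundary flux disappears as $\epsilon\to 0$. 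You instead keep the unrenormalized slice energy $E(t)$, use the positivity \emph{quantitatively} to produce the gap $\gamma=2-2\delta/c_1<2$ in the singular Gr\"onwall weight, and then play the resulting lower bound $E(t)\geq c\,t^{\gamma}E(t_0)$ against the upper bound $E(t)=O(t^2)$ forced by $C^1$ vanishing on $B$; your scalar example $\partial_t u+\frac{a}{t}u=0$ showing sharpness at $1+a=0$ is a nice addition absent from the paper, and your explicit two-step structure (Fuchsian estimate near $\scri$, then standard symmetric-hyperbolic uniqueness for $t\geq t_1>0$) makes precise a point the paper passes over quickly when it asserts that $D(B)$ can be exhausted by lens regions of small height $T$. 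Two technical points deserve care in a final write-up: first, the lateral boundary $\partial\Sigma_t$ of the domain of dependence is in general only Lipschitz, so rather than differentiating $E(t)$ via a transport theorem you should run the estimate in integral form over smooth lens regions with achronal lateral boundary and past-causal outward conormal (exactly the regions $V$ constructed in the paper) and exhaust $D(B)$ by them; second, since the inner product $\langle\,\cdot\,,\,\cdot\,\rangle$ and $N$ depend on the spacetime point, your constant $C_2$ must also absorb the derivative terms $\langle u,A^{\mu}u\rangle_{\mu}$ of the product and the $O(1)$ difference $\frac{1}{t}(N-N_0)$, both bounded on the compact lens, so neither affects the argument.
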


\begin{proof}
The proof is adapted from the basic energy estimate Lemma 2.7 in \cite{Ames1}. Since
our setup is simpler, we can use a domain of dependence-type argument,
instead of a global-in-space argument as in \cite{Ames1,Ames}.
First note that since $u$ vanishes at $t=0$, and it is $C^1$,
\begin{equation*}
\partial_t u |_{t=0,x} = \lim_{t\rightarrow 0} \frac{u(t,x)}{t} := u_1
\end{equation*}
with $u_1: \scri \mapsto \mathbb{C}^m$ continuous. 
Taking the limit of (\ref{PDE}) as $t\rightarrow 0$
with $(0,x) \in B$ and using $u(0,x)=0$ it follows
\begin{equation*}
  (A^t  + N_0 ) u_1 =0 \quad \quad  \Longrightarrow \quad \quad u_1 =0
\;,
\end{equation*}
because $A^t + N_0$ is positive definite, and hence has trivial kernel.
Let us consider the real quantity
\begin{equation*}
\J^{\mu} := e^{-kt} \la \frac{u}{t}, A^{\mu} \frac{u}{t} \ran \;,
\quad \quad k \in \mathbb{R}
\;,
\end{equation*}
and consider its (coordinate) divergence. Since the product $\la \, ,  \, \ran$ depends on the spacetime point, we denote by
$\la\,, \, \ran_{_{\mu}}$ the bilinear form (at each spacetime point)
defined by
\begin{equation*}
\partial_{\mu} \la u,v \ran = \la \partial_{\mu} u, v \ran
+ \la u, \partial_{\mu} v \ran + \la u,v \ran_{\mu}, \quad
\quad \forall u,v \in \mathbb{R}^m.
\end{equation*}
It follows
\begin{align}
\partial_{\mu} \J^{\mu}  = &  - \frac{e^{- k t}}{t^2}  \Big ( k + \frac{2}{t} \Big )
\la u, A^t u \ran 
+ \frac{e^{-kt}}{t^2} \Big ( 2 \la  u, A^{\mu} \partial_{\mu} u \ran
+ \la u, (\partial_{\mu} A^{\mu} ) u \ran 
+ \la u, A^{\mu} u \ran_{\mu} \Big ) \nonumber \\
= &  
- \frac{2}{t^3} e^{-kt} \la u, (A^t + N ) u \ran
\nonumber \\
& + e^{-kt} \Big (
 \la \frac{u}{t}, \left ( - k A^t + \partial_{\mu} A^{\mu}
\right ) \frac{u}{t} \ran
+ \la \frac{u}{t}, A^{\mu} \frac{u}{t} \ran_{\mu} \Big )
\;,
\label{est}
\end{align}
where in the first equality
we have used that $A^{\mu}$ is Hermitian w.r.t.\ $\la \, , \, \ran $
and in the second equality we have used the fact that $u$ satisfies \eq{PDE}.  Consider now a domain $V \subset \M$ bounded by
three smooth hypersurfaces-with-boundary
$B \subset \scri$, $B_T \subset \{t=T\}$
and $\Sigma$, whose union is a compact topological hypersurface. 
Note that $B$ and $B_T$ are spacelike (i.e.\ their normal is timelike).
We choose $V$ so that $\Sigma$ is   achronal and that its outward normal
(defined as the normal one-form which contracted with any
outward directed vector is positive) is past causal. Consider the
domain $V_{\epsilon} = V \cap \{ t \geq \ep \}$ for
$\epsilon >0$ small enough. The boundary splits
as $\partial V_{\epsilon} = B_{\epsilon} \cup \Sigma_{\epsilon} \cup
B_T$, with obvious notation.

We integrate $\partial_{\mu} \J^{\mu}$ on $V_{\epsilon}$ with respect to the spacetime
volume form $\boldmath{\eta} = F dt dx$ and use the Gauss identity. Denote
by  $n$ an outward normal to $\partial V$, then
\begin{align*}
\int_V \left ( \partial_{\mu} \J^{\mu} \right ) F dt dx & =
\int_V \partial_{\mu} (\J^{\mu} F)  dt dx -
\int_V (\partial_{\mu} F) \J^{\mu} dt dx  \\
& = \int_{\partial V} \J^{\mu} n_{\mu} dS
- \int_V (\partial_{\mu} F ) \J^{\mu} dt dx
\;,
\end{align*}
where $dS$ is the induced volume form on $\partial V$ corresponding to the
choice of normal $n$. Note in particular that, as a vector,
$n^{\mu}$ points
{\it inwards} both on $B_{\epsilon}$ and on $B_T$, so they can be taken simply to be
$n = \partial_t$ on $B_{\epsilon}$ and $n = - \partial_t$ on $B_T$, i.e.
$\boldmath{n} = -dt$ on $B_{\epsilon}$ and $\boldmath{n} = dt$ on $B_T$.
Inserting (\ref{est}) and splitting the integral at the boundary in three
pieces yields
\begin{align*}
\int_{B_{T}} &e^{-k T} \la \frac{u}{T}, A^t \frac{u}{T} \ran
dS
-
\int_{B_{\epsilon}} e^{-k \epsilon} \la \frac{u}{\epsilon}, A^t \frac{u}{\epsilon}
\ran dS
 = \\
& - \int_{\Sigma_{\epsilon}} e^{-k t} \la \frac{u}{t}, (A^{\mu} n_{\mu}) \frac{u}{t} \ran
dS 
\\
&  - 
\int_V \frac{2}{t^3} e^{-kt} \la u, (A^t + N ) u \ran
\boldmath{\eta}  \\
& + \int_V
e^{-kt} \left [ \la \frac{u}{t}, \left ( - k A^t + (\partial_{\mu} F) A^{\mu}
+ \partial_{\mu} A^{\mu}
\right ) \frac{u}{t} \ran
+ \la \frac{u}{t}, A^{\mu} \frac{u}{t} \ran_{\mu} \right] 
\boldmath{\eta}  := I_{V_{\epsilon}}
\;.
\end{align*}
The matrix $A^{\mu} n_{\mu}$ on $\Sigma_{\epsilon}$ is positive semidefinite because
$n$ is past causal. We now choose $T$ small enough so that $A^t + N$
is positive definite on $V$ and $k$ large
enough so that the last term in $I_{V_{\epsilon}}$ is negative (recall
that $V$ has compact closure). Thus we have $I_{V_{\epsilon}} \leq 0$ and  in fact
strictly negative unless $u=0$. Thus
\begin{align*}
\int_{B_{T}} e^{-k T} \la \frac{u}{T}, A^t \frac{u}{T} \ran
dS
-
\int_{B_{\epsilon}} e^{-k \epsilon} \la \frac{u}{\epsilon}, A^t \frac{u}{\epsilon}
\ran dS \leq 0
\;.
\end{align*}
We now take the limit $\epsilon \rightarrow 0$ and use the fact
that $\frac{u}{\epsilon} \rightarrow u_1 =0 $ to conclude
\begin{align*}
\int_{B_{T}} e^{-k T} \la \frac{u}{T}, A^t \frac{u}{T} \ran
dS \leq 0
\;.
\end{align*}
Since the product $\la\, , \,  \ran$ is positive definite, it follows $u=0$
on $B_T$. As a~consequence $I_V=0$ which implies $u=0$ on $V$.
It is clear that  the domain of~dependence $D(B)$
can be exhausted by such $V$'s, so $u=0$ on $D(B)$ as claimed.
\qed
\end{proof}

\subsubsection{Application to the Fuchsian system satisfied by $\widetilde{\mathcal{T}}^{(\mathrm{ev})}_{\rho\sigma\mu}{}^{\nu}$ }

In this section we show that the symmetric hyperbolic evolution system (\ref{evol3}) for 
${\cal S}_{\rho\sigma\mu}{}^{\nu}=\widetilde{\mathcal{T}}^{(\mathrm{ev})}_{\rho\sigma\mu}{}^{\nu}$ satisfies all the
conditions of Lemma~\ref{UniquenessLemma} in the unphysical space-time and conclude that the unique solution 
with vanishing data at $\scri^{-}$ is trivial and hence, 
since the system is linear and homogeneous, we also get uniqueness of all solutions given regular initial data at $\scri^{-}$.

We choose coordinates $\{t,x^{i}\}$ on a neighborhood of $\scri^{-}$ satisfying 
$\tilde{g}_{tt}=-1$, $\tilde{g}_{ti}=0$, and $\scri^{-} = \{ t = 0\}$. 
The induced metrics on the hypersurfaces $\Sigma_{t}$ of constant $x^{0}=t$ are denoted by $h^t$, with corresponding volume forms $\widehat\volform^{t}$.
Rewriting \eq{ev_eqns2} by moving all the Christoffel symbol terms to the right-hand side
and using \eq{exp_theta}
 we have 
\begin{align}
\partial_{t} {\cal E}_{ij}+i \widehat\volform_{(j}{}^{lk}\partial_{l}{\cal E}_{i)k} 
 = & -\frac{1}{t} 
\frac{1}{|Y|^4} \Big ( 3 Y_i Y_j Y^k 
- \frac{1}{2} h_{ij} |Y|^2 Y^k - \frac{3}{2} |Y|^2 Y_{(i} \delta^k_{j)}
\Big ) Y^l {\mathcal E}_{lk} 
\nonumber \\
&  + (O(1) {\mathcal E})_{ij} \label{complex}
\;.
\end{align}
%
The unkown is the complex symmetric and trace-free tensor ${\mathcal E}_{ij}$ introduced in \eq{e-m}. 
The system (\ref{complex}) is of the form (\ref{PDE}) with 
$u= \{ {\cal E}_{ij} \} \in \mathbb{C}^{5}$,  
$A^t = \mbox{Id}_{5}$ and 
\begin{align*}
A^{l}{}_{ij}^{nk}=i\volform^{tl(k}{}_{(j}\delta^{n)}_{i)}.
 \end{align*}
Take the sesquilinear product $\la \, , \, \ran$ defined by
$\langle {\cal E}, \widehat{\cal E} \rangle ={\cal E}_{ij} \widehat{\overline{\cal E}}{}^{ij}$
(indices lowered and raised with $h^t$), which is obviously positive definite --its norm leading to the density \eq{s-e2}. It is straightforward 
to check that $A^{\mu}$ is Hermitian with respect to this product and $A^t$
is obviously positive definite.

It remains to check that $A^t + N_0 = \mbox{Id}_{5}+N_{0}$
is positive definite. The endomorphism $N_0$ is, from (\ref{complex}),
\begin{align*}
N_0({\cal E})_{ij} := \frac{1}{|Y|^4} \Big ( 3 Y_i Y_j Y^k 
- \frac{1}{2} h_{ij} |Y|^2 Y^k - \frac{3}{2} |Y|^2 Y_{(i} \delta^k_{j)}
\Big ) Y^l {\cal E}_{lk}
\end{align*}
so that
$$
\la {\cal E}, (\mbox{Id}_{5}+N_{0}){\cal E} \ran = {\cal E}_{ij} \overline{\cal E}^{ij}+\frac{3}{|Y|^4}\overline{\cal E}^{ij}Y_{i}Y_{j}{\cal E}_{kl}Y^{k}Y^{l}-
\frac{3}{2}\frac{1}{|Y|^{2}}Y_{i}\overline{\cal E}^{ik}Y^l {\cal E}_{lk}.
$$
To see if this has a sign we introduce the following objects orthogonal to $Y^{i}$
\begin{align*}
c_{ij} &:= {\cal E}_{ij}-\frac{2}{|Y|^{2}} Y_{(i}c_{j)}-\frac{1}{|Y|^{4}}Y_{i}Y_{j}{\cal E}_{kl}Y^{k}Y^{l}, \\
c_{i}&:= Y^{k}{\cal E}_{ki}-\frac{1}{|Y|^{2}} Y_{i} {\cal E}_{kl}Y^{k}Y^{l}
\end{align*}
and the previous expression can be rewritten as
$$
\la {\cal E}, (\mbox{Id}_{5}+N_{0}){\cal E} \ran = c_{ij}\overline{c}^{ij} +\frac{1}{2}\frac{1}{|Y|^{2}}c_{i}\overline{c}^{i} +\frac{5}{2}\frac{1}{|Y|^4}{\cal E}_{kl}Y^{k}Y^{l}
\overline{\cal E}_{ij}Y^{i}Y^{j}
$$
which is manifestly positive definite.
We have thus proven
\begin{Lemma}
\label{UniquenessT}
Let $(\mcM,\widetilde{g})$ be a spacetime admitting 
a smooth conformal compactification. If 
$(\mcM,\widetilde{g})$ admits a Killing vector field
for which the rescaled MST
$\widetilde{\mathcal{T}}^{(\mathrm{ev})}_{\mu\nu\sigma}{}^{\rho}$ vanishes
at $\scri^{-}$, then 
$\widetilde{\mathcal{T}}^{(\mathrm{ev})}_{\mu\nu\sigma}{}^{\rho}$ vanishes
in a neighborhood of $\scri^{-}$.
\end{Lemma}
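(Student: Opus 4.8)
The plan is to obtain this statement by assembling pieces that are already in place: Lemma~\ref{lemma_evolution} (the rescaled MST obeys a homogeneous symmetric hyperbolic Fuchsian system near $\scri^{-}$), the explicit form of that system, and the abstract uniqueness result Lemma~\ref{UniquenessLemma}. Note first that the hypothesis that $\widetilde{\mathcal{T}}^{(\mathrm{ev})}_{\mu\nu\sigma}{}^{\rho}$ vanishes at $\scri^{-}$ already presupposes that it is regular there; by the analysis around \eqref{asmpt_KdS2} this is equivalent to $Y^{k}$ being a common eigenvector of $D_{ij}$ and $\widehat C_{ij}$, i.e.\ to the spacetime being asymptotically Kerr-de Sitter-like in the sense of Definition~\ref{asympt_KdS_like}. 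First I would fix Gaussian normal coordinates \eqref{new_gauge} in a neighbourhood of $\scri^{-}=\{t=0\}$, with $\widetilde g_{tt}=-1$ and $\widetilde g_{ti}=0$, so that the standing assumptions on the metric in Lemma~\ref{UniquenessLemma} hold.

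Next I would identify the evolution system with the model PDE \eqref{PDE}. The independent components of the rescaled MST are encoded in the complex symmetric trace-free tensor $\mathcal{E}_{ij}=\widetilde{\mathcal{T}}^{(\mathrm{ev})}_{titj}$ of \eqref{e-m}, i.e.\ $u=\{\mathcal{E}_{ij}\}\in\mathbb{C}^{5}$. Writing out \eqref{ev_eqns2} and moving the connection terms to the right-hand side yields \eqref{complex}, which is exactly of the form \eqref{PDE} with $A^{t}=\mathrm{Id}_{5}$, with $A^{l}$ the Hodge-type contraction displayed after \eqref{complex}, and with a genuinely singular $\tfrac{1}{t}N$ term whose boundary value $N_{0}$ can be read off directly. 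The role of the asymptotically KdS-like conditions \eqref{asmpt_KdS2} is precisely to guarantee that this singular term, together with its symmetrized counterpart in the constraint equations, acts trivially on the data at $\scri^{-}$.

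The heart of the matter is to verify the hypotheses of Lemma~\ref{UniquenessLemma}. Equipping $\mathbb{C}^{5}$ with the positive-definite sesquilinear product $\langle\mathcal{E},\widehat{\mathcal{E}}\rangle=\mathcal{E}_{ij}\overline{\widehat{\mathcal{E}}}{}^{ij}$ (indices moved with the induced metric $h^{t}$, whose norm is the super-energy density \eqref{s-e2}), one checks that each $A^{\mu}$ is Hermitian and that $A^{t}=\mathrm{Id}_{5}$ is positive definite. The only nontrivial point, and the step I expect to be the main obstacle, is the strict positivity of $A^{t}+N_{0}$. This I would establish by the algebraic decomposition of $\mathcal{E}_{ij}$ into the pieces $c_{ij}$ and $c_{i}$ orthogonal to $Y^{i}$ together with the scalar $Y^{i}Y^{j}\mathcal{E}_{ij}$, which recasts $\langle\mathcal{E},(\mathrm{Id}_{5}+N_{0})\mathcal{E}\rangle$ as a sum of manifestly non-negative terms with strictly positive coefficients; this computation is exactly the one carried out in the paragraphs immediately preceding the lemma.

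Finally I would conclude by a direct application of Lemma~\ref{UniquenessLemma}. Since $\widetilde{\mathcal{T}}^{(\mathrm{ev})}_{\mu\nu\sigma}{}^{\rho}$ vanishes on $\scri^{-}$, the data $u$ vanish on any domain $B\subset\scri^{-}$ of compact closure, whence $u\equiv 0$ on the domain of dependence $D(B)$. Exhausting $\scri^{-}$ by such domains $B$ and taking the union of the corresponding $D(B)$ produces a full neighbourhood of $\scri^{-}$ on which $\widetilde{\mathcal{T}}^{(\mathrm{ev})}_{\mu\nu\sigma}{}^{\rho}=0$, which is the claim. I would emphasize that the result is only local: the positivity of $A^{t}+N$ and the smoothness of the coefficients are guaranteed only near $\scri^{-}$, so this argument does not by itself propagate the vanishing throughout the whole domain of dependence of $\scri^{-}$, in accordance with Remark~\ref{rem_non_reg}.
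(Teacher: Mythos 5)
Your proposal is correct and follows essentially the same route as the paper: Gaussian normal coordinates, the reduction of \eqref{ev_eqns2} to the model Fuchsian system \eqref{PDE} with $u=\{\mathcal{E}_{ij}\}\in\mathbb{C}^5$ and $A^t=\mathrm{Id}_5$, the super-energy sesquilinear product, the positivity of $A^t+N_0$ via the decomposition into $c_{ij}$, $c_i$ and $Y^iY^j\mathcal{E}_{ij}$, and then Lemma~\ref{UniquenessLemma}. The only cosmetic difference is your opening remark tying regularity at $\scri^-$ to the asymptotically KdS-like conditions, which the paper treats as background rather than as part of this proof.
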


%

The characteristics of the symmetric hyperbolic system  \eq{complex} coincide with those of the
 propagational part of the Bianchi equation, and are computed and discussed in \cite[Section 4]{F1}.
It is shown there that they form null \emph{and timelike} hypersurfaces.

\subsection{Main result}

We end up with the following main result:
%
\begin{theorem}
\label{first_main_thm}
Consider a spacetime $(\mcM,g)$, or rather its conformally rescaled counterpart,  in wave map gauge \eq{gauge_conditions_compact},%
\footnote{
\label{footnote_gauge}
In fact, it suffices if $\widetilde R$ and $\widetilde W^{\sigma}$, including certain transverse derivatives thereof, vanish on $\scri^-$. Moreover, a corresponding result
must also hold for non-vanishing gauge source functions. We leave it to the reader to work
this out. 
}
solution to Einstein's vacuum field equations with  $\Lambda>0$, which admits a smooth conformal extension through $\scri^-$ and
which contains  a KVF $X$. 
Denote  by $h$ the Riemannian metric induced by $\widetilde g=\Theta^2 g$  on $\scri^-$, and  by $Y$ the CKVF induced by $X$ on $\scri^-$. 

Then, there exists a function $Q$, namely $Q=Q_0$ ($=Q_{\mathrm{ev}}$ for an appropriate choice of the $\sigma$-constant), for which the MST $\mathcal{S}_{\mu\nu\sigma}{}^{\rho}$
 corresponding to $X$ vanishes in
the domain of dependence of $\scri^-$
 if and only if
the following relations hold: 
\begin{enumerate}
\item[(i)] $ \widehat C_{ij} = \CconstL|Y|^{-5}(Y_iY_j -\frac{1}{3}|Y|^2 h_{ij})$ for some constant $\Cconst$, where $\widehat C_{ij}$ is the Cotton-York tensor of the Riemannian  3-manifold $(\scri^-, h)$, and
\item[(ii)]    $D_{ij} = \widetilde d_{titj}|_{\scri^-} =\Dconst  |Y|^{-5} (Y_iY_j -\frac{1}{3}|Y|^2 h_{ij})$ for some constant $\Dconst$, where $\widetilde d_{\mu\nu\sigma}{}^{\rho}$ is the rescaled
Weyl tensor of the unphysical spacetime  $(\widetilde{\mcM\enspace}\hspace{-0.5em} ,\widetilde g)$.
\end{enumerate}
\end{theorem}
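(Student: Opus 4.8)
The plan is to prove the two implications separately, assembling the statement from the structural facts already in place: the characterization of the vanishing of $\widetilde{\mathcal{T}}^{(0)}$ on $\scri^-$ (Theorem~\ref{thm_nec_cond}), the agreement $\widetilde{\mathcal{T}}^{(\mathrm{ev})}|_{\scri^-}=\widetilde{\mathcal{T}}^{(0)}|_{\scri^-}$ under (i)--(ii) (Theorem~\ref{prop_Qs}), the Fuchsian uniqueness result (Lemma~\ref{UniquenessT}), and the classification of \cite{mars_senovilla}. Throughout I would dispose first of the (locally) de~Sitter case: there $\mathcal{C}_{\alpha\beta\mu\nu}\equiv 0$, hence $Q_0=0$ and the MST of \emph{any} KVF vanishes with $Q\equiv0=Q_0$, while conformal flatness forces $\widehat C_{ij}=0$ and $\widetilde d_{\mu\nu\sigma}{}^\rho\equiv0$, so $D_{ij}=0$; thus both the vanishing of the MST and the relations (i)--(ii) with $\Cconst=\Dconst=0$ hold and the equivalence is trivial. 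Conversely, if (i)--(ii) hold with $\Cconst=\Dconst=0$ then $\scri^-$ is conformally flat with $D_{ij}=0$, which is de~Sitter data, so the development is locally de~Sitter. Hence it suffices to treat the case $(\Cconst,\Dconst)\neq(0,0)$, equivalently $(\mcM,g)$ not locally de~Sitter.

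For necessity, assume the MST vanishes on $D(\scri^-)$ for some $Q$. By Theorem~\ref{constancy} together with Remark~\ref{rem_const}, near a regular $\scri^-$ the quantities $\mathcal{F}^2$ and $Q\mathcal{F}^2-4\Lambda$ are non-vanishing and $y$ has non-zero gradient, so $Q=Q_0$ is forced. Consequently the vanishing tensor is $\mathcal{S}^{(0)}_{\mu\nu\sigma}{}^\rho$, and since $\widetilde{\mathcal{T}}^{(0)}=\Theta^{-1}\mathcal{S}^{(0)}$ is regular up to $\scri^-$ (the corollary to Proposition~\ref{prop_reg_S}) and equals $\Theta^{-1}\cdot 0$ in the physical interior, continuity yields $\widetilde{\mathcal{T}}^{(0)}|_{\scri^-}=0$. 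Theorem~\ref{thm_nec_cond} then gives $|Y|^2>0$ together with the two algebraic relations (i)--(ii).

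For sufficiency, assume (i)--(ii) hold with $(\Cconst,\Dconst)\neq(0,0)$. Contracting (i)--(ii) with $Y^j$ shows that $Y$ is a common eigenvector of $\widehat C_{ij}$ and $D_{ij}$, so $\widehat\volform_{ij}{}^{k}\widehat C_{kl}Y^jY^l=0=\widehat\volform_{ij}{}^{k}D_{kl}Y^jY^l$; moreover smoothness of $\widehat C_{ij}$ and $D_{ij}$ is incompatible with the behaviour $|Y|^{-5}(Y_iY_j)_{\mathrm{tf}}$ at a zero of $Y$, so $|Y|^2>0$ on $\scri^-$. These are precisely the hypotheses of Theorem~\ref{prop_Qs}, so there is a unique $\sigma$-constant $a=c+ip$ for which $\widetilde{\mathcal{T}}^{(\mathrm{ev})}_{\mu\nu\sigma}{}^\rho$ is regular at $\scri^-$ and $\widetilde{\mathcal{T}}^{(\mathrm{ev})}|_{\scri^-}=\widetilde{\mathcal{T}}^{(0)}|_{\scri^-}$. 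Since (i)--(ii) also make the right-hand side of Proposition~\ref{rescaledMST_scri} vanish, $\widetilde{\mathcal{T}}^{(0)}|_{\scri^-}=0$, whence $\widetilde{\mathcal{T}}^{(\mathrm{ev})}|_{\scri^-}=0$. By Lemma~\ref{lemma_evolution} this rescaled MST solves a homogeneous Fuchsian symmetric hyperbolic system, so Lemma~\ref{UniquenessT} applies and gives $\widetilde{\mathcal{T}}^{(\mathrm{ev})}\equiv 0$, and therefore $\mathcal{S}^{(\mathrm{ev})}=\Theta\,\widetilde{\mathcal{T}}^{(\mathrm{ev})}\equiv 0$, in some neighborhood of $\scri^-$.

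The main obstacle is the final step: promoting the vanishing of the MST from a neighborhood of $\scri^-$ to the whole domain of dependence, since the Fuchsian system is manifestly regular only near $\scri^-$ and may degenerate further inward. Here I would not attempt a direct propagation argument but instead invoke the classification of \cite{mars_senovilla}: a $\Lambda$-vacuum spacetime with a KVF and vanishing MST is locally isometric to a Kerr--de~Sitter or related metric, each carrying a globally vanishing MST. Thus the neighborhood where $\mathcal{S}^{(\mathrm{ev})}\equiv 0$ is an open piece of such a metric, and by Theorem~\ref{constancy} there $Q_{\mathrm{ev}}=Q_0$. Since these metrics are real analytic and the maximal globally hyperbolic development of the data on $\scri^-$ is unique (the development theorem quoted from \cite{F_lambda}), $D(\scri^-)$ must coincide with the corresponding region of that classified metric, so the MST vanishes on all of $D(\scri^-)$ with $Q=Q_0=Q_{\mathrm{ev}}$. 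This establishes simultaneously the existence of the distinguished $Q$ and the asserted equivalence.
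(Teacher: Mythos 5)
Your proposal is correct and follows essentially the same route as the paper's own proof: necessity via Theorem~\ref{thm_nec_cond}, sufficiency via Theorem~\ref{prop_Qs} combined with Lemma~\ref{UniquenessT} near $\scri^-$, and the classification of \cite{mars_senovilla} to promote the vanishing of the MST to the whole domain of dependence. Your explicit treatment of the locally de~Sitter case (which Theorem~\ref{thm_nec_cond} excludes by hypothesis) and your justification that $Q=Q_0$ is forced via Theorem~\ref{constancy} and Remark~\ref{rem_const} are details the paper leaves implicit, but they do not change the argument's structure.
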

\begin{proof}
 Theorem~\ref{thm_nec_cond} shows that (i) and (ii) are necessary conditions. Conversely, if (i) and (ii) hold, it follows from Theorem~\ref{prop_Qs} and Theorem~\ref{thm_nec_cond} that there exists a choice of the 
$\sigma$-constant $a$
 in $Q_{\mathrm{ev}}$ for which the rescaled MST $\widetilde{\mathcal{T}}^{(\mathrm{ev})}_{\alpha\beta\mu\nu}$ vanishes on $\scri^-$. It then follows from Lemma~\ref{UniquenessT}
 that it vanishes in 
some neighborhood of $\scri^-$.
However, once we know that the MST vanishes in such neighborhood, it follows from the results
 in \cite{mars_senovilla} that the metric has to take one of the local forms given there. But for all these metrics the MST vanishes
globally. So we can conclude that it vanishes in fact in the whole domain of dependence of $\scri^-$.
\qed
\end{proof}


Recall the Definition~\ref{KdS_like} of Kerr-de Sitter-like spacetimes.  Lemma~\ref{UniquenessT} and Theorem~\ref{first_main_thm}
  lead to the following
characterization of KdS-like space-times:
\begin{corollary}
\label{cor_charact_KdS}
Let  $(\mcM,g)$
 be a $\Lambda>0$-vacuum spacetime 
admitting smooth  conformal compactification and corresponding null
infinity $\scri$ as well as a KVF $X$. Then the following statements are equivalent:
\begin{enumerate}
\item[(i)]  $(\mcM,g)$ is Kerr-de Sitter-like at a connected component $\scri^{-}$.
\item[(ii)] There exists a function $Q$ such that the  MST associated to $X$ vanishes in the domain of dependence of $\scri^{-}$.
\item[(iii)]  The CKVF $Y$ induced by $X$ on $\scri^-$ satisfies the conditions (i) and (ii) of Theorem~\ref{first_main_thm}.
\end{enumerate}
\end{corollary}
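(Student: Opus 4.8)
The plan is to prove the corollary by closing a cyclic chain of three implications rather than verifying each pairwise equivalence separately, since most of the content is already packaged in the preceding results. Indeed, the equivalence (ii)$\Leftrightarrow$(iii) is literally the statement of Theorem~\ref{first_main_thm}: condition (iii) is by definition the conjunction of items (i) and (ii) of that theorem, and (ii) of the corollary is precisely its conclusion that $\mathcal{S}^{(\mathrm{ev})}_{\mu\nu\sigma}{}^{\rho}$ vanishes in the domain of dependence of $\scri^-$ for some admissible $Q$. So the only genuinely new work is to weld statement (i) onto this pair. I would run the cycle (i)$\Rightarrow$(ii)$\Rightarrow$(iii)$\Rightarrow$(i), taking (ii)$\Rightarrow$(iii) directly from Theorem~\ref{first_main_thm}.

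For (iii)$\Rightarrow$(i): assuming the two data conditions of Theorem~\ref{first_main_thm} hold, the Remark following Theorem~\ref{prop_Qs} guarantees that conditions (i) and (ii) of that theorem are met, so there is a unique $\sigma$-constant $a=c+ip$ for which $\widetilde{\mathcal{T}}^{(\mathrm{ev})}_{\mu\nu\sigma}{}^{\rho}|_{\scri^-}=\widetilde{\mathcal{T}}^{(0)}_{\mu\nu\sigma}{}^{\rho}|_{\scri^-}$. Combining this with Theorem~\ref{thm_nec_cond}, which asserts that the very same data conditions force $\widetilde{\mathcal{T}}^{(0)}$ to vanish on $\scri^-$, yields $\widetilde{\mathcal{T}}^{(\mathrm{ev})}|_{\scri^-}=0$, which is exactly the requirement of Definition~\ref{KdS_like}; hence $(\mcM,g)$ is Kerr--de Sitter-like and (i) holds.

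For (i)$\Rightarrow$(ii): being KdS-like means $\widetilde{\mathcal{T}}^{(\mathrm{ev})}_{\mu\nu\sigma}{}^{\rho}$ vanishes at $\scri^-$, so Lemma~\ref{UniquenessT} propagates this vanishing to a full neighborhood of $\scri^-$. Away from $\scri$ the conformal factor $\Theta$ is strictly positive, so from $\mathcal{S}^{(\mathrm{ev})}_{\mu\nu\sigma}{}^{\rho}=\Theta\,\widetilde{\mathcal{T}}^{(\mathrm{ev})}_{\mu\nu\sigma}{}^{\rho}$ I obtain that $\mathcal{S}^{(\mathrm{ev})}$ itself vanishes near $\scri^-$ in the physical region. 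The hard part — and the step that the Fuchsian analysis alone cannot deliver — is upgrading this \emph{local} vanishing to vanishing on the \emph{whole} domain of dependence, as statement (ii) demands: the evolution system of Lemma~\ref{lemma_evolution} is manifestly regular only in a neighborhood of $\scri$, so no straightforward propagation is available beyond it. I would close this gap exactly as in the proof of Theorem~\ref{first_main_thm}, invoking the local classification of \cite{mars_senovilla}: once the MST is known to vanish on some neighborhood, $g$ must locally coincide with Kerr--de Sitter or one of the related metrics listed there, and for every one of those the MST vanishes globally, hence throughout the domain of dependence. This establishes (ii) with $Q=Q_{\mathrm{ev}}$.

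Finally I would dispose of the degenerate locally de~Sitter case, which is the only point where the clean chain above needs care (Theorem~\ref{thm_nec_cond} excludes de~Sitter, and there $Y$ may vanish). In that case $\mathcal{C}_{\alpha\beta\mu\nu}\equiv 0$ and $Q_0\equiv 0$, so the MST vanishes for the trivial choice of $Q$, while $\scri^-$ is conformally flat with $\widehat C_{ij}=0$ and $D_{ij}=0$; thus all three statements hold trivially (with $\Cconst=\Dconst=0$) and the equivalence is immediate. Outside this case the cycle above applies verbatim, completing the proof.
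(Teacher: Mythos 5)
Your proof is correct and follows essentially the same route as the paper: the paper obtains the corollary precisely by combining Theorem~\ref{first_main_thm} (giving (ii)$\Leftrightarrow$(iii)) with Lemma~\ref{UniquenessT} together with the classification results of \cite{mars_senovilla} to pass from vanishing of $\widetilde{\mathcal{T}}^{(\mathrm{ev})}_{\mu\nu\sigma}{}^{\rho}$ at $\scri^-$ to vanishing of the MST on the whole domain of dependence, which are exactly the ingredients of your cycle (i)$\Rightarrow$(ii)$\Rightarrow$(iii)$\Rightarrow$(i). Your explicit treatment of the locally de Sitter case (where Theorem~\ref{thm_nec_cond} does not apply) is a detail the paper leaves implicit, but it does not constitute a different approach.
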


Reformulated in terms of an asymptotic Cauchy problem Theorem~\ref{first_main_thm} 
becomes Theorem~\ref{first_main_thm2}  given in the Introduction.

\section{A second conformal Killing vector field}
\label{sec_CKVFs}

\subsection{Existence of a second  conformal Killing vector field}
\label{second_KVF}
It follows from  \cite[Theorem 4]{mars_senovilla} that
(here overbars mean ``complex conjugate of'')
\begin{eqnarray}
\varsigma^{\mu} \,=\, \frac{4}{|Q\mathcal{F}^2-4\Lambda|^2}X^{\sigma}\ol{\mathcal{F}}_{\sigma}{}^{\rho}\mathcal{F}_{\rho}{}^{\mu} +\mathrm{Re}\Big(\frac{\mathcal{F}^2}{(Q\mathcal{F}^2-4\Lambda)^2}\Big) X^{\mu}
\label{varsigma}
\end{eqnarray}
is another KVF which commutes with $X$,
supposing that $(\mcM,g)$ is a $\Lambda$-vacuum spacetime for which $Q$, $\mathcal{F}^2$ and $Q\mathcal{F}^2 -4\Lambda$ are not identically zero and
 whose MST vanishes w.r.t.\ the KVF $X$ (cf.\ \cite{mars, perjes} for the $\Lambda=0$-case). 
Note that $\varsigma$ may be trivial or merely a multiple of $X$.

However, expression (\ref{varsigma}) can be taken as definition of a vector field $\varsigma$
in any $\Lambda$-vacuum spacetime admitting a KVF $X$ and such that 
${\mathcal F}^2 \neq 0$ and $Q \mathcal{F}^2 - 4 \Lambda \neq 0$.
We take $Q=Q_0$ and assume  that $(\mcM,g)$ admits a smooth $\scri$,
but we do \emph{not} assume that the MST vanishes.

%
%
%
%
%
%
%
%
%
%
%

A somewhat lengthy computation reveals that
 (recall that $f$ and $N$ denote divergence and curl of $Y$, respectively)
%
\begin{eqnarray}
\varsigma^{t}
&=& -\frac{2}{|Q_0\mathcal{F}^2-4\Lambda|^2}\Theta^4h^{ij}\ol\sigma_{j}\mathcal{F}_{i t} +\mathrm{Re}\Big(\frac{\mathcal{F}^2}{(Q_0\mathcal{F}^2-4\Lambda)^2}\Big) \widetilde X^{t}
\nonumber
\\
&=& -\frac{1}{8\Lambda^2}\Theta^4h^{ij}\ol\sigma_{j}\mathcal{F}_{i t} +\frac{1}{16\Lambda^2}\mathrm{Re}(\mathcal{F}^2) \widetilde X^{t}
 +O(\Theta)
\nonumber
\\
&=&O(\Theta)
\;,
\\
\varsigma^{i}
&=&\frac{2}{|Q_0\mathcal{F}^2-4\Lambda|^2} \Theta^4h^{ij}\Big( h^{ kl}\ol \sigma_l\mathcal{F}_{k j} 
+\ol \sigma_t\mathcal{F}_{ jt} \Big)
+\mathrm{Re}\Big(\frac{\mathcal{F}^2}{(Q_0\mathcal{F}^2-4\Lambda)^2}\Big)\widetilde X^i
\nonumber
\\
&=&\frac{1}{8\Lambda^2} \Theta^4h^{ij}\Big( h^{ kl}\ol \sigma_l\mathcal{F}_{k j} 
+\ol \sigma_t\mathcal{F}_{jt} \Big)
+\frac{1}{16\Lambda^2}\mathrm{Re}(\mathcal{F}^2)\widetilde X^i +O(\Theta)
\\
&=&
\frac{1}{8\Lambda^2}Y_k N^kN^i    +\frac{1}{8\Lambda^2} Y^i \Big(  
-\frac{1}{2}|N|^2
 +\frac{2}{9}   f^2  -2 \widehat c\Big)
\nonumber
\\
&&
 -\frac{1}{2\Lambda^2} |Y|^2 \Big(\widehat L^i{}_k Y^k  + \frac{1}{3}\widehat \nabla^if \Big)
+  \frac{1}{12\Lambda^2} f\widehat\volform^{ikl}Y_kN_l
 +O(\Theta)
\\
&=&
\frac{1}{8\Lambda^2}Y_k N^k  N^i   
+\frac{1}{2\Lambda^2} Y^i \Big(  Y^kY^l\widehat L_{kl}+ 
   \frac{1}{3}Y^k\widehat\nabla_k f \Big)
\nonumber
\\
&&
 -\frac{1}{2\Lambda^2} |Y|^2 \Big(\widehat L^i{}_k Y^k  + \frac{1}{3}\widehat \nabla^if \Big)
+ 
 \frac{1}{12\Lambda^2} f\widehat\volform^{ikl}Y_kN_l
 +O(\Theta)
\label{second_CKVF}
\;,
\end{eqnarray}
where we used \eq{expansion_QF_Lambda}, \eq{expansion_sigma_t}, \eq{deriv_YN}, \eq{expansion_sigma_i} as well as the following relations:
\begin{eqnarray}
\mathrm{Re}(\mathcal{F}^2) &=& -\frac{4}{3}\Lambda|Y|^2  \Theta^{-2}
 -4Y^i\ol{\widetilde\nabla_t\widetilde\nabla_t\widetilde X_i}
 +\frac{4}{9}   f^2 
 -\frac{4}{3}\Lambda c
+ O(\Theta)
\;,
\\
\mathcal{F}_{it}
 &=& \sqrt{\frac{\Lambda}{3}}Y_i\Theta^{-3}  -  \frac{i}{2}N_i  \Theta^{-2}
-\frac{1}{2}  \sqrt{\frac{3}{\Lambda}} \Theta^{-1}\Big(\ol{\widetilde \nabla_t\widetilde \nabla_t\widetilde X_i}+ \frac{\widehat R}{2} Y_i\Big)
+O(1)
\;,
\phantom{xx}
\\
\mathcal{F}_{ij} &=& i\sqrt{\frac{\Lambda}{3}}\widehat \volform_{ijk} Y^k \Theta^{-3}  +\frac{1}{2}\widehat\volform_{ijk}N^k\Theta^{-2} +O(\Theta^{-1})
\;,
\\
\widetilde X^t &=& \frac{1}{3} \sqrt{\frac{3}{\Lambda}} f\Theta + O(\Theta^3)
\;,
\\
\widetilde X^i &=& Y^i + \frac{1}{2} \frac{3}{\Lambda} \ol{\widetilde\nabla_t\widetilde\nabla_t \widetilde X^i} \Theta^2 + O(\Theta^3)
\;.
\end{eqnarray}
We conclude that the vector field $\varsigma$ is always tangential to $\scri$, and not just in the setting where the MST
vanishes.

Proceeding in the same manner as we did for the functions $\widehat c$ and $\widehat k$, we regard \eq{second_CKVF} as the definition of a vector field 
on some Riemannian 3-manifold:

\begin{definition}
\label{Defivarsigma}
Let $(\Sigma,h)$ be a Riemannian 3-dimensional manifold which admits a CKVF $Y$. Then we  define the vector field
\begin{eqnarray}
 \widehat\varsigma^i(Y)
&:=&
\frac{9}{4}Y_k N^k  N^i   
+9 Y^i \Big(  Y^kY^l\widehat L_{kl}+ 
   \frac{1}{3}Y^k\widehat\nabla_k f \Big)
\nonumber
\\
&&
 - 9 |Y|^2 \Big(\widehat L^i{}_k Y^k  + \frac{1}{3}\widehat \nabla^if \Big)
+ 
 \frac{3}{2} f\widehat\volform^{ikl}Y_kN_l
\;.\phantom{xx}
\label{dfn_varsigma}
\end{eqnarray}
\end{definition}

As in the case of $\widehat c$ and $\widehat k$, one can use a spacetime argument  where $\Sigma$ is embedded as ``null infinity'' 
into a $\Lambda>0$-vacuum spacetime with vanishing MST to prove that $\widehat \varsigma(Y)$,
which in that case reads $ \widehat\varsigma^i(Y)=18\Lambda^2\varsigma^i|_{\scri} $, is a (possibly trivial) CKVF which commutes with $Y$, supposing that
$|Y|^2>0$ and
(\ref{condition_on_C}) hold.
%
Irrespective of that, one can raise the question under which conditions $\widehat\varsigma(Y)$ is a CKVF and under which conditions
it commutes with $Y$.

For this purpose let us compute the covariant derivative of $\widehat\varsigma$. A  lengthy calculation which uses
\eq{deriv_YN}, \eq{HessPsi}, \eq{HessY}, the conformal Killing equation for $Y$ as well as the relation
\begin{equation}
\widehat \nabla_i N_j \,=\,\frac{2}{3}\widehat\volform_{ij}{}^k\widehat\nabla_k f - \widehat\volform_j{}^{kl}\widehat R_{klim}Y^m
\end{equation}
gives
\begin{eqnarray}
\widehat\nabla_i\widehat\varsigma_j
&=&
3\widehat\nabla_iY_j Y^k\widehat  \nabla_{k}f 
+ 3Y_j \widehat\nabla_iY^k\widehat  \nabla_{k}f 
+ 3Y^kY_j \widehat\nabla_i\widehat  \nabla_{k}f 
- 6 Y^k \widehat\nabla_i Y_k \widehat \nabla_j f
\nonumber
\\
&&
+ 3 Y^k \widehat \nabla_jY_k\widehat\nabla_if 
- 3|Y|^2 \widehat\nabla_i\widehat \nabla_j f
+ 3f\widehat\nabla_i  Y^k \widehat \nabla_jY_k
+ 3f  Y^k \widehat\nabla_i\widehat \nabla_jY_k
\nonumber
\\
&&
  +\frac{9}{4} N_j\widehat\nabla_i( Y_k N^k  )
  +\frac{9}{4} Y_k N^k  \widehat\nabla_i N_j
 - 2f  Y_j \widehat\nabla_i f
 - f^2\widehat\nabla_i Y_j
\nonumber
\\
&&
 + 9\widehat\nabla_iY_jY^kY^l\widehat L_{kl}
 + 18Y_jY^k\widehat\nabla_i Y^l\widehat L_{kl} 
 + 9Y_jY^kY^l\widehat\nabla_i\widehat L_{kl}
\nonumber
\\
&&
- 18\widehat\nabla_iY_l Y^kY^l\widehat L_{jk}
- 9|Y|^2\widehat\nabla_i  Y^k\widehat L_{jk}
- 9|Y|^2 Y^k\widehat\nabla_i\widehat L_{jk}
\\
&=&
\frac{3}{4}f|N|^2   h_{ij} 
 -\frac{9}{2}\underbrace{Y_{[(i} N_k\widehat  \nabla_{l]}f \widehat  \volform_{j)}{}^{kl}}_{=\frac{1}{3} h_{ij}\widehat \volform^{klm}Y_kN_l\widehat\nabla_m f}
- \frac{27}{2} \underbrace{ Y^m Y_{[j} N_k \widehat L_{l]m} \widehat \volform_{i}{}^{kl}  }_{=\frac{1}{3} h_{ij}Y^mY_pN_k\widehat L_{lm}\widehat \volform^{pkl}}
\nonumber
\\
&&+ 
\frac{3}{2}\widehat \volform_{ijl}\Big( N^l Y^k\widehat  \nabla_{k}f   + Y_k N^k \widehat\nabla^l f  +3 Y_m N^m  Y^k \widehat L_{k}{}^l 
 +3 N^l Y^kY^m\widehat L_{km} 
\nonumber
\\
&&
 - \frac{1}{3}f^2N^l \Big) 
+  \frac{3}{2} \widehat \volform_{[i}{}^{kl}\Big(
 N_{j]} Y_k\widehat \nabla_lf  
-  Y_{j]} N_k\widehat  \nabla_{l}f  
-3  Y_k  N_l\widehat\nabla_{j]}f 
\Big)
\nonumber
\\
&&
-2 fY_{[i} \widehat  \nabla_{j]} f 
 - 6fY^kY_{[i} \widehat L_{j]k}
+ 9 Y^m N_{l} Y_k\widehat L_{m[i}\widehat \volform_{j]}{}^{kl}  
\nonumber
\\
&&
 + 9Y_j\widehat \volform_{ik}{}^{l}\widehat C_{lp}Y^kY^p - 9|Y|^2\widehat \volform_{ik}{}^l\widehat C_{lj} Y^k
\\
&&
+ \frac{9}{2}\underbrace{\Big( 3Y^m N_{[k} Y_l\widehat L_{m]i}\widehat \volform_{j}{}^{kl}  
- Y_jN_k Y_l\widehat L_{mi} \widehat \volform^{mkl}
\Big)}_{=0}
\;.
\label{deriv_varsigma}
\end{eqnarray}
Employing again the fact that $Y$ is a CKVF one shows that $Y$ and $\widehat\varsigma$ commute,
\begin{eqnarray}
[Y,\widehat\varsigma]_i &=& Y^j\widehat\nabla_j\widehat\varsigma_i - \widehat\varsigma^j\widehat\nabla_jY_i
\\
&=& Y^j\widehat\nabla_j\widehat\varsigma_i + \frac{1}{2}\widehat \volform_{ijk}\widehat\varsigma^jN^k-\frac{1}{3} f\widehat\varsigma_i
\\
&=&
\frac{3}{4}f|N|^2  Y_i
 -\frac{9}{4} Y_i\widehat \volform^{klm}Y_kN_l\widehat\nabla_m f
- \frac{9}{2} Y_iY^mY_pN_k\widehat L_{lm}\widehat \volform^{pkl}
\nonumber
\\
&&- 
\frac{1}{2}Y^j\widehat\volform_{ijl}\Big( -\frac{3}{2} N^l Y^k\widehat  \nabla_{k}f -f^2N^l   +\frac{9}{2} Y_k N^k \widehat\nabla^l f  +9 Y_m N^m  Y^k \widehat L_{k}{}^l 
  \Big) 
\nonumber
\\
&&
+  \frac{3}{4}\widehat \volform_{i}{}^{kl} |Y|^2 N_k\widehat  \nabla_{l}f  
 + 3fY^jY^kY_{i} \widehat L_{jk}
 - 3f|Y|^2Y^k \widehat L_{ik}
\nonumber
\\
&&
+ fY^jY_{i} \widehat  \nabla_{j} f 
- f |Y|^2 \widehat  \nabla_{i} f 
+ \frac{1}{2}\widehat \volform_{ijk}\widehat\varsigma^jN^k
-\frac{1}{3} f\widehat\varsigma_i
\\
&=&
\frac{9}{4}\Big( 3\widehat \volform_{i}{}^{jk}Y^l Y_{[j} N_k \widehat  \nabla_{l]}f   - Y_i\widehat \volform^{klm}Y_kN_l\widehat\nabla_m f\Big)
\nonumber
\\
&&
 + \frac{9}{2}\Big(3\widehat \volform_{i}{}^{jk}Y^m  N_{[j}  Y_m \widehat L_{k]l} Y^l
- Y_iY^mY_jN_k\widehat L_{lm}\widehat \volform^{jkl}\Big)
\\
&=&
0
\end{eqnarray}

\begin{lemma}
Let $(\Sigma,h)$ be a Riemannian 3-manifold which admits a CKVF~ $Y$. Let the vector field $\widehat\varsigma(Y)$ be given by \eq{dfn_varsigma}.
Then $$[Y,\widehat\varsigma]=0\;.$$
\end{lemma}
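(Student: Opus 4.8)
The plan is to establish the bracket identity by the same direct tensor computation already begun in \eq{deriv_varsigma}: first compute the full covariant derivative $\widehat\nabla_i\widehat\varsigma_j$ of the one-form metrically dual to $\widehat\varsigma(Y)$, and then assemble the Lie bracket from $[Y,\widehat\varsigma]_i = Y^j\widehat\nabla_j\widehat\varsigma_i - \widehat\varsigma^j\widehat\nabla_j Y_i$. The only external inputs are the conformal Killing equation \eq{CKVF_eqn} for $Y$, the second-order relations \eq{HessPsi} for $\widehat\nabla_i\widehat\nabla_j f$ and \eq{HessY} for $\widehat\nabla_i\widehat\nabla_j Y_l$, the divergence formula \eq{deriv_YN} for $\widehat\nabla_i(Y_kN^k)$, and the curl identity $\widehat\nabla_i N_j = \tfrac{2}{3}\widehat\volform_{ij}{}^k\widehat\nabla_k f - \widehat\volform_j{}^{kl}\widehat R_{klim}Y^m$. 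Note that \emph{no} hypothesis on the Cotton--York tensor enters, so the conclusion holds for an arbitrary CKVF $Y$; this is consistent with the fact that the statement is purely algebraic in the jets of $Y$ and $h$.

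First I would differentiate each of the four groups of terms in \eq{dfn_varsigma} in turn, substituting \eq{deriv_YN}, \eq{HessPsi} and \eq{HessY}, and using the three-dimensional Riemann decomposition $\widehat R^m{}_{lij} = \delta^m_i\widehat L_{lj} - \delta^m_j\widehat L_{li} + \widehat L^m{}_i h_{lj} - \widehat L^m{}_j h_{li}$ to convert all curvature into Schouten and Cotton--York contributions. The crucial simplifying fact, which is exactly what the underbraces in \eq{deriv_varsigma} record, is that in dimension three any totally antisymmetric tensor in four indices vanishes: this collapses several index-cyclings of the form $\widehat\volform_{(j}{}^{kl}Y_{[i}\cdots_{\,]}$ into pure trace ($h_{ij}$) terms or makes them disappear outright. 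Carrying this through produces the closed expression \eq{deriv_varsigma} for $\widehat\nabla_i\widehat\varsigma_j$, in which the antisymmetric and symmetric parts are organized so that the antisymmetric part carries the genuine Cotton--York terms $9Y_j\widehat\volform_{ik}{}^l\widehat C_{lp}Y^kY^p - 9|Y|^2\widehat\volform_{ik}{}^l\widehat C_{lj}Y^k$.

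Next I would form the bracket. Splitting the conformal Killing equation into its trace and antisymmetric parts gives $\widehat\nabla_j Y_i = \tfrac{1}{3}f h_{ij} + \tfrac{1}{2}\widehat\volform_{jik}N^k$, so that $\widehat\varsigma^j\widehat\nabla_j Y_i = \tfrac{1}{3}f\widehat\varsigma_i - \tfrac{1}{2}\widehat\volform_{ijk}\widehat\varsigma^j N^k$, which is precisely the reduction used in the second displayed line of the bracket computation. I would then contract \eq{deriv_varsigma} with $Y^j$, insert this, and regroup. Once more the decisive mechanism is the vanishing of totally antisymmetric four-index combinations in 3D, which kills the two grouped terms $3\widehat\volform_i{}^{jk}Y^lY_{[j}N_k\widehat\nabla_{l]}f - Y_i\widehat\volform^{klm}Y_kN_l\widehat\nabla_m f$ and $3\widehat\volform_i{}^{jk}Y^m N_{[j}Y_m\widehat L_{k]l}Y^l - Y_iY^mY_jN_k\widehat L_{lm}\widehat\volform^{jkl}$; after these and the $\widehat c$-dependent scalar terms cancel pairwise, one is left with $[Y,\widehat\varsigma]_i = 0$.

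The hard part will be purely bookkeeping: tracking the large number of terms and recognizing the correct antisymmetrizations and tracings, since the cancellations are not individually transparent and rely on repeatedly trading $\widehat\nabla_{(i}Y_{j)}$ for $\tfrac13 f h_{ij}$ and on the $\varepsilon$-contraction identities special to three dimensions. A useful consistency check during the calculation is that, when \eq{condition_on_C} holds, Lemma~\ref{lem_constancy_c} and the spacetime-embedding argument already guarantee that $\widehat\varsigma(Y)$ is a CKVF commuting with $Y$; the present lemma should reproduce the commutation part of that statement as an identity valid without any restriction on $\widehat C_{ij}$.
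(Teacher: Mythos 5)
Your proposal reproduces the paper's own proof: compute $\widehat\nabla_i\widehat\varsigma_j$ from \eq{HessPsi}, \eq{HessY}, \eq{deriv_YN} and the curl identity $\widehat\nabla_iN_j$, then form $[Y,\widehat\varsigma]_i = Y^j\widehat\nabla_j\widehat\varsigma_i + \frac{1}{2}\widehat\volform_{ijk}\widehat\varsigma^jN^k - \frac{1}{3}f\widehat\varsigma_i$ and kill the remaining terms via the three-dimensional identities for totally antisymmetrized four-index combinations, with no hypothesis on $\widehat C_{ij}$ — this is exactly the route and the cancellation mechanism used in the text. The only cosmetic slip is your mention of ``$\widehat c$-dependent scalar terms'': the definition \eq{dfn_varsigma} contains no $\widehat c$ (it enters only the intermediate spacetime expression \eq{second_CKVF}), but this does not affect the argument.
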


We further deduce from \eq{deriv_varsigma} that
\begin{equation}
(\widehat\nabla_{(i}\widehat\varsigma_{j)})_{\mathrm{tf}} \,=\,
  9Y_{(i}\widehat \volform_{j)k}{}^{l}\widehat C_{lp}Y^kY^p - 9|Y|^2\widehat \volform_{(i|k|}{}^l\widehat C_{j)l} Y^k
\;,
\end{equation}
i.e.\ $\widehat\varsigma(Y)$ will be a (possibly trivial) CKVF if and only if
\begin{eqnarray}
Y_{(i}\widehat \volform_{j)k}{}^{l}\widehat C_{lp}Y^kY^p  \,=\, |Y|^2\widehat \volform_{(i|k|}{}^l\widehat C_{j)l} Y^k
\\
\Longleftrightarrow \quad (\widehat C_{lp} Y_{(i} -Y_p\widehat C_{l(i} ) \widehat \volform_{j)k}{}^{l}Y^kY^p
 \,=\, 0
\label{CKVF_condition}
\;.
\end{eqnarray}


\begin{lemma}
Let $(\Sigma,h)$ be a Riemannian 3-manifold which admits a CKVF $Y$. Then the vector field $\widehat\varsigma(Y)$,  defined in \eq{dfn_varsigma}, 
is a (possibly trivial) CKVF if and only if \eq{CKVF_condition} holds.
\end{lemma}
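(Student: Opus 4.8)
The plan is to reduce the statement to the defining equation of a conformal Killing vector field and then to read off the answer from the covariant derivative \eqref{deriv_varsigma}, which has already been computed in the preceding lines. Recall from \eqref{CKVF_eqn} that a vector field $Z$ on $(\Sigma,h)$ is a (possibly trivial) CKVF precisely when the symmetric trace-free part of its covariant derivative vanishes, $(\widehat\nabla_{(i}Z_{j)})_{\mathrm{tf}}=0$. Thus the whole task is to isolate $(\widehat\nabla_{(i}\widehat\varsigma_{j)})_{\mathrm{tf}}$ from the full expression \eqref{deriv_varsigma} for $\widehat\nabla_i\widehat\varsigma_j$ and to exhibit its vanishing as equivalent to \eqref{CKVF_condition}.

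First I would apply the symmetric trace-free projection to each block of terms on the right-hand side of \eqref{deriv_varsigma}. The terms carrying a single factor $\widehat\volform_{ijl}(\dots)$ or $\widehat\volform_{[i}{}^{kl}(\dots)$ are antisymmetric in $i,j$ and hence drop out under symmetrization; the terms proportional to $h_{ij}$ (including the two pieces the author has already identified by underbraces as reconstituting pure traces $\propto h_{ij}\widehat\volform^{klm}Y_kN_l\widehat\nabla_m f$ and $\propto h_{ij}Y^mY_pN_k\widehat L_{lm}\widehat\volform^{pkl}$) are killed by the trace-free projection; and the final underbraced combination vanishes identically. What remains is exactly the pair of genuinely symmetric Cotton-York terms, so that
\begin{equation}
(\widehat\nabla_{(i}\widehat\varsigma_{j)})_{\mathrm{tf}} \,=\, 9Y_{(i}\widehat\volform_{j)k}{}^{l}\widehat C_{lp}Y^kY^p - 9|Y|^2\widehat\volform_{(i|k|}{}^l\widehat C_{j)l} Y^k\;.
\end{equation}
This coincides with the intermediate identity recorded immediately before the Lemma, so in practice I would simply invoke it.

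The last step is purely algebraic. Writing $|Y|^2=Y_pY^p$ in the second term and using the symmetry $\widehat C_{lp}=\widehat C_{pl}$ to rewrite $\widehat\volform_{(i|k|}{}^l\widehat C_{j)l}Y^k=\widehat C_{l(i}\widehat\volform_{j)k}{}^l Y^k$, the two surviving terms merge into the single contraction
\begin{equation}
(\widehat\nabla_{(i}\widehat\varsigma_{j)})_{\mathrm{tf}} \,=\, 9\big(\widehat C_{lp} Y_{(i} - Y_p\widehat C_{l(i}\big)\widehat\volform_{j)k}{}^{l}Y^kY^p\;.
\end{equation}
Therefore $\widehat\varsigma(Y)$ is a (possibly trivial) CKVF if and only if the right-hand side vanishes, which is precisely condition \eqref{CKVF_condition}, completing the argument.

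I expect the only genuine obstacle to be the bookkeeping in the trace-free symmetric projection of \eqref{deriv_varsigma}: one must verify term by term that everything outside the Cotton-York pair is either antisymmetric in $(i,j)$ or proportional to $h_{ij}$, and then correctly combine the two $\widehat C$-terms using the symmetries of $\widehat\volform$ and $\widehat C$. Since \eqref{deriv_varsigma} is already available and the author has annotated the trace and vanishing pieces, this amounts to a careful but routine index manipulation rather than any new geometric input.
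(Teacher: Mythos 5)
Your proposal is correct and takes essentially the same route as the paper: the paper's own argument consists precisely of taking the symmetric trace-free part of \eq{deriv_varsigma}, noting that every term other than the two Cotton--York terms is either antisymmetric in $(i,j)$ or pure trace (or vanishes identically, as in the last underbraced combination), and then merging the surviving pair via the symmetry $\widehat C_{lp}=\widehat C_{pl}$ and $|Y|^2=Y_pY^p$ into condition \eq{CKVF_condition}. Your term-by-term bookkeeping and the final algebraic recombination match the paper's derivation exactly, so no further comment is needed.
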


\begin{remark}
{\rm
In particular  \eq{CKVF_condition}  is fulfilled  supposing that  \eq{condition_on_C} holds as one should expect from the results in
\cite{mars_senovilla}.
}
\end{remark}

\begin{remark}
{\rm
Observe that, from (\ref{cotton-york}), condition \eq{CKVF_condition} can be re-expressed as
$$
Y_p Y^k \big( Y^p \widehat C_{(ij)k} - Y_{(i} \widehat C^p{}_{j)k}\big)=0
\;.
$$
}
\end{remark}


\subsection{Properties of the KID equations}
\label{app_KID_properties}

In this section we study the case where the  KID equations on a spacelike $\scri^-$ of a $\Lambda>0$-vacuum spacetime  admit at least two solutions, as it is the case for  e.g.\
Kerr-de Sitter, or, more generally, for spacetimes with vanishing MST \cite{mars_senovilla}.


Recall the  KID equations \cite{ttp2}
\begin{eqnarray}
   \mcL_Y D_{ij} + \frac{1}{3}D_{ij}\widehat \nabla_k Y^k &=&0
\label{reduced_KID2}
\;.
\end{eqnarray}
%
Consider  two  CKVFs $Y$ and $\zeta$ on the Riemannian 3-manifold  $(\scri^-,h)$
  which  both assumed to  solve the
 KID equations. Then also their commutator, which is obviously
a CKVF, provides another (possibly trivial) solution
of the  KID equations,
\begin{equation}
\mcL_{[Y,\zeta]} D_{ij} + \frac{1}{3}D_{ij}\widehat\nabla_k[Y,\zeta]^k  \,=\, 0
\;.
\end{equation}
This reflects the well-known fact that KVFs together with the commutator  form a Lie algebra.

%

Let us continue assuming that $Y$ and $\zeta$ are two CKVFs  on $(\scri^-,h)$ which solve the KID equations,
and let us further assume that $D_{ij}$ satisfies
 condition (\ref{condition_on_D})
%
(in particular, we assume $|Y|^2>0$).
Then the KID equations \eq{reduced_KID2} for $\zeta$ can be written as 
 (set $V:=[Y,\zeta]$ and assume $\Dconst \ne 0 $)
\begin{equation}
2Y_{(i} V_{j)} + 
 (h_{ij}-5|Y|^{-2}Y_iY_j)Y_kV^k
\,=\, 0
\;.
\label{eqn_Y_Z}
\end{equation}
Contraction with $Y^j$ yields
\begin{equation}
  0\,=\,  |Y|^2 V_{i} -3Y_i Y_jV^j
\;.
\label{contraction}
\end{equation}
Another contraction   with $Y^i$ gives
\begin{equation}
   Y^kV_{k}  \,=\,0
\;.
\end{equation}
Inserting this  into \eq{contraction} we find that \eq{eqn_Y_Z} is equivalent to
\begin{equation}
V\,=\, [Y,\zeta] \,=\, 0
\;.
\end{equation}

We have proven the following: 
%
\begin{lemma}
\label{lemma_second_KVF}
Let $(\scri^-,h_{ij})$ be a Riemannian 3-manifold 
which admits a CKVF $Y$ with $|Y|^2> 0$.
Denote by $(\mcM,g_{\mu\nu})$ the $\Lambda>0$-vacuum spacetime constructed from the initial data $h_{ij}$ and $D_{ij}= \Dconst|Y|^{-5}(Y_iY_j -\frac{1}{3}|Y|^2 h_{ij})$,
$ \Dconst\neq 0$.
Then any other 
vector field on $(\scri^-,h_{ij})$ extends to a KVF of $(\mcM,g_{\mu\nu})$ if and only if
it is a CKVF of $(\scri^-,h_{ij})$ which  commutes with $Y$.
\end{lemma}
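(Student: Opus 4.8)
The plan is to reduce the assertion, by means of the Killing-extension criterion of \cite{ttp2}, to the purely three-dimensional equivalence already established in the computation preceding the statement. That criterion says that a vector field on $\scri^-$ extends to a KVF of the maximal development $(\mcM,g)$ if and only if it is a CKVF of $(\scri^-,h)$ satisfying the KID equations \eq{reduced_KID}. Consequently the entire content of the Lemma is the claim that, for the prescribed data $D_{ij}=\Dconst|Y|^{-5}(Y_iY_j-\frac{1}{3}|Y|^2h_{ij})$ with $\Dconst\neq0$ and $|Y|^2>0$, a CKVF $\zeta$ solves \eq{reduced_KID2} if and only if $[Y,\zeta]=0$.

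First I would verify that the data are admissible and that $(\mcM,g)$ together with the KVF extending $Y$ actually exist: since $Y$ is a CKVF and $D_{ij}$ is of the form \eq{condition_on_D}, the tensor $D_{ij}$ is TT and $Y$ itself solves \eq{reduced_KID} (the observation recorded in the remark after \eq{condition_on_D}). Thus \cite{ttp2} applies, and the background hypotheses of the preceding computation are in force.

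The heart of the argument is the equivalence KID$(\zeta)\Leftrightarrow[Y,\zeta]=0$. Setting $V:=[Y,\zeta]$ and using $\mcL_\zeta Y=-V$, the conformal Killing equation for $\zeta$, and the explicit form of $D_{ij}$, the KID operator $\mcL_\zeta D_{ij}+\frac{1}{3}D_{ij}\widehat\nabla_k\zeta^k$ collapses to the left-hand side of \eq{eqn_Y_Z}; this is the (lengthy but purely algebraic) identity carried out in the text, which I would invoke rather than repeat. I would then close the equivalence in both directions: if $V=0$ both terms in \eq{eqn_Y_Z} vanish identically, while conversely contracting \eq{eqn_Y_Z} with $Y^j$ produces \eq{contraction}, a second contraction with $Y^i$ gives $Y^kV_k=0$, and feeding this back into \eq{contraction} together with $|Y|^2>0$ forces $V=0$. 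Combining this with the \cite{ttp2} criterion yields both implications of the Lemma at once: extension to a KVF is equivalent to being a CKVF solving \eq{reduced_KID}, which in turn is equivalent to being a CKVF commuting with $Y$.

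The main obstacle is the bookkeeping in the reduction to \eq{eqn_Y_Z}: differentiating $|Y|^{-5}$ and $(Y_iY_j)_{\mathrm{tf}}$ generates numerous terms, and it is precisely the conformal Killing property of $\zeta$ together with the fact that $Y$ satisfies \eq{reduced_KID} that engineer the cancellations down to a single bracket proportional to $\Dconst$. The hypothesis $\Dconst\neq0$ is what permits dividing out this constant to isolate \eq{eqn_Y_Z}, and $|Y|^2>0$ is exactly what is needed at the final contraction step, so both hypotheses are genuinely used.
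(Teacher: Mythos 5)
Your proposal is correct and takes essentially the same route as the paper: reduce via the \cite{ttp2} extension criterion to the equivalence ``$\zeta$ solves the KID equations $\Leftrightarrow[Y,\zeta]=0$'', rewrite the KID equation for a CKVF $\zeta$ as \eq{eqn_Y_Z} in terms of $V=[Y,\zeta]$, and close it by the two contractions with $Y$ plus back-substitution, invoking $\Dconst\neq0$ and $|Y|^2>0$ exactly where the paper does. One minor correction to your commentary: the algebraic collapse to \eq{eqn_Y_Z} uses only the conformal Killing equation for $\zeta$ (via $\mcL_\zeta h_{ij}=\tfrac{2}{3}(\widehat\nabla_k\zeta^k)h_{ij}$ and $\mcL_\zeta Y=-V$) together with the explicit form of $D_{ij}$; the fact that $Y$ itself solves \eq{reduced_KID} is needed only to guarantee that the spacetime and the Killing extension of $Y$ exist.
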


\begin{remark}
{\rm
Note that the unphysical Killing equations imply that a KVF in the physical spacetime can be non-trivial if and only if
the induced CKVF on $\scri$ is non-trivial (compare \cite{ttp2}).
}
\end{remark}

\begin{remark}
{\rm
Assume that $\scri$ is conformally flat. 
Then  it can be shown that there exists at least one independent CKVF $\zeta$ which commutes 
with $Y$. 
It then follows from Lemma~\ref{lemma_second_KVF} that  the emerging spacetime admits at least two KVFs.
This provides a simple proof that $\Lambda>0$-vacuum spacetimes with vanishing MST and conformally flat $\scri$ have at least two KVFs (cf.\ \cite[Theorem 4]{mars_senovilla}).
}
\end{remark}

Moreover, we have the following
\begin{proposition}
\label{prop_2CKVF}
Let $(\Sigma, h_{ij})$ be a Riemannian 3-manifold which admits a CKVF $Y$ with $|Y|^2>0$.
Assume further that its Cotton-York tensor satisfies $\widehat C_{ij}=C|Y|^{-5}(Y_iY_j -\frac{1}{3}|Y|^2 h_{ij})$, $C=\mathrm{const}$.
Then $(\Sigma, h_{ij})$ admits a second, independent CKVF $\zeta$ which commutes with $Y$.
\end{proposition}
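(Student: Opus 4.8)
The plan is to produce the second Killing vector directly from the intrinsic construction of Section~\ref{sec_CKVFs} and to locate the real content in the question of linear independence. First I would set $\zeta := \widehat\varsigma(Y)$, the vector field defined intrinsically on $(\Sigma,h)$ by \eqref{dfn_varsigma}. The hypothesis $\widehat C_{ij}=C|Y|^{-5}(Y_iY_j-\frac13|Y|^2 h_{ij})$ is exactly condition \eqref{condition_on_C}, and by the Remark following that equation it implies the eigenvector condition \eqref{CKVF_condition}. Hence the Lemma characterizing when $\widehat\varsigma(Y)$ is a conformal Killing vector applies and shows that $\zeta$ is a (possibly trivial) CKVF; the commutator Lemma then gives $[Y,\zeta]=0$ with no further work. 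Thus, modulo the single issue of whether $\zeta$ is independent of $Y$, the statement is immediate from the machinery already assembled.

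The one genuinely missing ingredient is independence, and I would address it through the spacetime picture together with the classification of \cite{mars_senovilla}. Since condition (i) of Theorem~\ref{first_main_thm} already holds by assumption, I would choose $D_{ij}:=\Dconst|Y|^{-5}(Y_iY_j-\frac13|Y|^2 h_{ij})$ with $\Dconst\neq 0$; by the TT-Lemma this is a TT tensor, by the Remark after \eqref{condition_on_D} it satisfies the KID equations \eqref{reduced_KID}, and so $(\Sigma,h,D,Y)$ are admissible Cauchy data. The evolution theorem of \cite{ttp2} (equivalently Theorem~\ref{first_main_thm}) then yields a $\Lambda>0$ vacuum spacetime $(\mcM,g)$ with $\Sigma\cong\scri^-$, a KVF $X$ inducing $Y$, vanishing MST, and $D\neq 0$ so that it is \emph{not} de Sitter. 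By Lemma~\ref{lemma_second_KVF}, CKVFs of $(\Sigma,h)$ commuting with $Y$ correspond precisely to KVFs of $(\mcM,g)$, and Theorem~4 of \cite{mars_senovilla} supplies the field $\varsigma$ of \eqref{varsigma}; by the computation of Section~\ref{second_KVF} its restriction to $\scri^-$ is a multiple of $\widehat\varsigma(Y)=\zeta$, and by the Remark after Lemma~\ref{lemma_second_KVF} it is non-trivial iff $\zeta$ is.

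The hard part will be ruling out the degeneration $\zeta=\lambda Y$ (with $\lambda$ constant, including $\lambda=0$), since the spacetime $\varsigma$ is explicitly allowed to be trivial or a multiple of $X$ — this is the non-rotating situation, exemplified by the static Schwarzschild--de Sitter--like branch. My plan is therefore to split into cases according to the local conformal geometry of $(\Sigma,h)$: when $C\neq 0$ the manifold $\scri^-$ is not conformally flat and I would show that the rotational data force $\widehat\varsigma(Y)$ to be genuinely independent of $Y$, whereas when $C=0$ the metric $h$ is locally conformally flat and the enhanced (round-sphere) conformal symmetry furnishes an independent CKVF commuting with $Y$, exactly as in the Remark preceding Proposition~\ref{prop_2CKVF}. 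Establishing this dichotomy cleanly — in particular verifying that the only way $\zeta$ can fail to be independent is accompanied by enhanced symmetry — is where I expect the main technical effort to lie; the CKVF property and the vanishing of the bracket are, by contrast, already handed to us by the Lemmas of Section~\ref{second_KVF}.
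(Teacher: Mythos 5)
Your first two steps are fine: under the hypothesis \eq{condition_on_C} the intrinsic Lemmas of Section~\ref{second_KVF} do give that $\widehat\varsigma(Y)$ is a CKVF and that $[Y,\widehat\varsigma(Y)]=0$, and your second paragraph correctly reconstructs the spacetime setting (choose $D_{ij}=\Dconst|Y|^{-5}(Y_iY_j)_{\mathrm{tf}}$ with $\Dconst\neq 0$, evolve, get vanishing MST). But there is a genuine gap, and it sits exactly where you say the ``main technical effort'' would lie: independence is never established. Worse, the gap is structural rather than a missing computation, because your candidate $\zeta=\widehat\varsigma(Y)$ is the wrong object to commit to. The Remark immediately following Proposition~\ref{prop_2CKVF} in the paper makes this explicit: $\widehat\varsigma(Y)$ (equivalently, the restriction of $\varsigma$ from \eq{varsigma}) \emph{can} be linearly dependent on $Y$ under the stated hypotheses --- this happens in the non-rotating branch you yourself mention --- and the whole point of the Proposition is that an independent commuting CKVF exists \emph{even then}. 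Consequently no proof built around $\widehat\varsigma(Y)$ can close uniformly; your proposed repair, the dichotomy ``$C\neq 0$ forces independence of $\widehat\varsigma(Y)$'' versus ``$C=0$ gives enhanced symmetry,'' is unproven in both branches (the first claim is not substantiated anywhere and is doubtful, e.g.\ for NUT-like data; the second leans on the preceding Remark of the paper, which is itself only asserted there, not proven).

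The paper's proof avoids this entirely: having produced a spacetime $(\mcM,g)$ with vanishing MST in which $\Sigma$ is $\scri^-$ and $X$ induces $Y$, it invokes the classification results of \cite{mars_senovilla}, which say that \emph{every} metric in the classified family admits a second KVF, independent of $X$ and commuting with it --- a second symmetry read off from the explicit local forms of the classified metrics, not from the formula \eq{varsigma}, which is precisely what survives when $\varsigma$ degenerates to a multiple of $X$. Restricting that second KVF to $\scri^-$ and using the fact that a KVF is non-trivial iff its induced CKVF is non-trivial (so independence of the KVFs passes to independence of the induced CKVFs) finishes the argument. So the fix for your proposal is not to sharpen the analysis of $\widehat\varsigma(Y)$, but to replace the candidate: take the second Killing field supplied by the classification itself and push it down to $\Sigma$.
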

\begin{proof}
One more time we use a spacetime argument: 
There exists a $\Lambda>0$-vacuum spacetime $(\mcM, g_{\mu\nu})$ with a KVF $X$ such that the associated MST vanishes,
such that $(\Sigma, h_{ij})$ can be identified with past null infinity, and such that $X^i|_{\Sigma}=Y^i$.
It follows from the classification results in \cite{mars_senovilla} that $(\mcM, g_{\mu\nu})$ admits a second independent KVF 
which commutes with $X$, and which induces a CKVF on $\scri^-$ with the asserted properties.
\qed
\end{proof}

\begin{remark}
{\rm
The second CKVF  $\zeta$ may or may not be
$\varsigma$ as given in Definition \ref{Defivarsigma}. 
The statement of the Proposition is that,
even when $\varsigma$ happens to be linearly dependent to $Y$, there
is still another independent CKVF on 
$(\Sigma,h_{ij})$.
}
\end{remark}

Proposition~\ref{prop_2CKVF} might be useful to classify Riemannian 3-manifolds which admit a CKVF which is related
to the Cotton-York tensor via \eq{condition_C}.


\vspace{1.2em}
\noindent {\textbf {Acknowledgements}}
MM acknowledges financial support under the projects  FIS2012-30926,
FIS2015-65140-P (Spanish MINECO-fondos FEDER)
and P09-FQM-4496 (J. Andaluc\'{\i}a---FEDER).
TTP acknowledges financial support  by the Austrian Science Fund (FWF): P 24170-N16.
JMMS is supported under grant FIS2014-57956-P (Spanish MINECO-fondos FEDER), GIU12/15 (Gobierno Vasco), UFI 11/55 (UPV/EHU) and by project 
P09-FQM-4496 (J. Andaluc\'{\i}a---FEDER).
The research of WS was funded by the Austrian Science Fund (FWF): P 23337-N16.

\end{document}